\newcommand{\silly}{\hspace*{1em}}
\newcommand{\Procedure}[2]{\STATE {{\bf procedure} {\sc #1}}} 
\newcommand{\State}{\STATE \silly}
\newcommand{\SubState}{\STATE \silly \silly}
\newcommand{\SubSubState}{\STATE \silly \silly \silly}
\newcommand{\Return}{{\bf return \/}}
\newcommand{\For}[1]{\State {\bf for\/} #1}
\newcommand{\SubFor}[1]{\SubState {\bf for\/} #1}
\newcommand{\If}[1]{\State {\bf if\/} #1}
\newcommand{\SubIf}[1]{\SubState {\bf if\/} #1}
\newcommand{\Else}{\State {\bf else\/}}
\newcommand{\SubElse}{\SubState {\bf else\/}}
\newcommand{\EndProcedure}{}
\newcommand{\EndIf}{}
\newcommand{\EndFor}{}
\newcommand{\nc}{\newcommand}
\newcommand{\DMO}{\DeclareMathOperator}
\newcommand{\remove}[1]{}
\newcommand{\poly}{\mathrm{poly}}
\nc{\MS}{\mathcal{S}}
\nc{\MP}{\mathcal{P}}
\nc{\MR}{\mathcal{R}}
\nc{\MZ}{\mathcal{Z}}
\nc{\cL}{\mathcal{L}}
\DMO{\Binom}{Binom}
\newcommand{\E}{\mathbb{E}}
\DMO{\Var}{Var}
\nc{\Geom}{\text{Geom}}
\newcommand{\tx}{\tilde{x}}
\newcommand{\talpha}{\tilde{\alpha}}
\newcommand{\tc}{\tilde{c}}
\newcommand{\bX}{\mathbf{X}}
\newcommand{\bC}{\mathbf{C}}
\newcommand{\tz}{\tilde{z}}
\newcommand{\tf}{\tilde{f}}
\newcommand{\teta}{\tilde{\eta}}
\newcommand{\tbeta}{\tilde{\beta}}
\newcommand{\tphi}{\tilde{\phi}}
\newcommand{\tv}{\tilde{v}}
\newcommand{\tn}{\tilde{n}}
\newcommand{\bx}{\mathbf{x}}
\newcommand{\bS}{\mathbf{S}}
\newcommand{\tbS}{\tilde{\bS}}
\newcommand{\bB}{\mathbb{B}}
\newcommand{\R}{\mathbb{R}}
\newcommand{\N}{\mathbb{N}}
\nc{\BN}{\mathbb{N}}
\nc{\BZ}{\mathbb{Z}}
\newcommand{\cR}{\mathcal{R}}
\newcommand{\cA}{\mathcal{A}}
\newcommand{\bzero}{\mathbf{0}}
\newcommand{\eps}{\varepsilon}
\nc{\esin}{\eps}
\newcommand{\cT}{\mathcal{T}}
\newcommand{\cX}{\mathcal{X}}
\newcommand{\cY}{\mathcal{Y}}
\DeclareMathOperator{\cost}{cost}
\DeclareMathOperator{\opt}{OPT}
\DeclareMathOperator{\SD}{SD}
\DeclareMathOperator{\hist}{hist}
\DeclareMathOperator{\level}{level}
\DeclareMathOperator{\leaves}{leaves}
\DeclareMathOperator{\children}{children}
\DeclareMathOperator{\cand}{cand}
\DeclareMathOperator{\enc}{Enc}
\DeclareMathOperator{\dec}{Dec}
\DeclareMathOperator{\bottom}{bottom}
\DeclareMathOperator{\polylog}{polylog}
\renewcommand{\varepsilon}{\epsilon}
\newcommand{\MT}{\mathrm{mt}}
\newcommand{\myvec}{\mathrm{vec}}
\newcommand{\algthreshold}{\textsc{ComputeThreshold}\xspace}
\newcommand{\buildtree}{\textsc{BuildTree}\xspace}
\newtheorem{theorem}{Theorem}
\newtheorem{proposition}[theorem]{Proposition}
\newtheorem{lemma}[theorem]{Lemma}
\newtheorem{definition}[theorem]{Definition}
\newtheorem{corollary}[theorem]{Corollary}
\newtheorem{fact}[theorem]{Fact}
\newcommand{\para}[1]{\smallskip \par \noindent {\bf #1.}}
\newcommand{\heps}{\nicefrac{\eps}{2}}
\newcommand{\hdel}{\nicefrac{\delta}{2}}
\newcommand{\rt}{\mathsf{t}}
\newcommand{\kmeans}{\textsf{\small
$k$-means}\xspace}
\newcommand{\kmedian}{\textsf{
$k$-median}\xspace}
\title{Locally Private $k$-Means in One Round}
\author{
Alisa Chang
\hspace*{1cm}
Badih Ghazi
\hspace*{1cm}
Ravi Kumar
\hspace*{1cm}
Pasin Manurangsi \\
Google, Mountain View, CA. \\
\texttt{\{alisac, pasin\}@google.com, \{badihghazi, ravi.k53\}@gmail.com}
}
\date{\today}
\begin{document}

\maketitle

\begin{abstract}

We provide an approximation algorithm for \kmeans clustering in the \emph{one-round} (aka \emph{non-interactive}) local model of differential privacy (DP).  This algorithm achieves an approximation ratio arbitrarily close to the best \emph{non private} approximation algorithm, improving upon previously known algorithms that only guarantee large (constant) approximation ratios.
Furthermore, this is the first constant-factor approximation algorithm for \kmeans that requires only \emph{one} round of communication in the local DP model, positively resolving an open question of~\citet{Stemmer20}. Our algorithmic framework is quite flexible; we demonstrate this by showing that it also yields a similar near-optimal approximation algorithm in the (one-round) shuffle DP model.
\end{abstract}


\section{Introduction}

The vast amounts of data collected for use by modern machine learning algorithms, along with increased public awareness for privacy risks, have stimulated intense research into privacy-preserving methods. Recently, differential privacy (DP) \citep{dwork2006calibrating,dwork2006our} has emerged as a popular definition, due to its mathematical rigor and strong guarantees. This has translated into several practical deployments within government agencies \citep{abowd2018us} and the industry \citep{erlingsson2014rappor,CNET2014Google, greenberg2016apple,dp2017learning, ding2017collecting}, and integration into widely-used libraries such as TensorFlow \citep{tf-privacy} and PyTorch \citep{pytorch-privacy}.

Among unsupervised machine learning tasks, clustering in general and \kmeans clustering in particular are one of the most basic and well-studied from both theoretical and practical points of view.  The research literature on \kmeans is extensive, with numerous applications not only in machine learning but also well  beyond computer science.  Computationally, finding optimal \kmeans cluster centers is NP-hard~\citep[e.g.,][]{aloise} and the best known polynomial-time algorithm achieves an approximation ratio of $6.358$~\citep{AhmadianNSW20}.

A natural question is to study clustering and \kmeans with DP guarantees. There is a growing literature on this topic~\citep{blum2005practical, nissim2007smooth, feldman2009private, GuptaLMRT10, mohan2012gupt, wang2015differentially, NissimSV16, nock2016k, su2016differentially, feldman2017coresets, BalcanDLMZ17, NissimS18, huang2018optimal,nock2016k,NissimS18, StemmerK18,Stemmer20, GKM20, jones2020differentially, chaturvedi2020differentially}. In contrast to non-private \kmeans, additive errors are necessary in this case, and a recent line of work has obtained increasingly tighter approximation ratios, culminating with the work of~\citet{GKM20}, which comes arbitrarily close to the smallest possible non-private approximation ratio (currently equal to $6.358$ as mentioned above). However, this last result only holds in the \emph{central} model of DP where a curator is entrusted with the raw data and is required to output private cluster centers.

The lack of trust in a central curator has led to extensive study of \emph{distributed} models of DP, the most prominent of which is the \emph{local} setting where the sequence of messages sent by each user is required to be DP. The state-of-the-art local DP \kmeans algorithm is due to~\citet{Stemmer20}; it achieves a constant (that is at least $100$ by our estimate) approximation ratio in a constant (at least five) number of rounds of interaction. In fact, all other (non-trivial) known local DP \kmeans algorithms are \emph{interactive}, meaning that the analyzer proceeds in multiple stages, each of which using the (private) outcomes from the previous stages.  This begs the question, left open by~\citet{Stemmer20}: is there a  non-interactive local DP approximation algorithm for \kmeans and can it attain the non-private approximation ratio?


\subsection{Our Contributions}

We resolve both open questions above by showing that there is a DP algorithm in the non-interactive local model with an approximation ratio that is arbitrarily close to that of any non-private algorithm, as stated below\footnote{An algorithm achieves a \emph{$(\kappa, t)$-approximation} if its output cost is no more than $\kappa$ times the optimal cost plus $t$, i.e., $\kappa$ is the approximation ratio and $t$ is the additive error (see \Cref{sec:kmeans-def}).}.

\begin{theorem} \label{thm:main-apx-local}
Suppose that there is a (not necessarily private) polynomial-time approximation algorithm for \kmeans with approximation ratio $\kappa$.
For any $\alpha > 0$, there is a one-round $\eps$-DP  $(\kappa(1 + \alpha), k^{O_{\alpha}(1)} \cdot \sqrt{nd} \cdot \polylog(n) / \eps)$-approximation algorithm for \kmeans in the local DP model. The algorithm runs in time $\poly(n, d, k)$.  
\end{theorem}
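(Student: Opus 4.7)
The plan is to reduce the problem to constructing a private weighted coreset in one round of local communication and then invoking the assumed non-private $\kappa$-approximation as a black box on that coreset. Since $k$-means cost is preserved up to a $(1+\alpha)$ factor if we have a coreset whose solutions are $(1+O(\alpha))$-optimal, any $\kappa$-approximation on the coreset gives a $\kappa(1+O(\alpha))$-approximation on the original input, which is what the theorem asks for.

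The first step is a dimension reduction: apply a data-independent Johnson--Lindenstrauss-type map $\Pi : \R^d \to \R^{d'}$ with $d' = O(\log k / \alpha^2)$, using the known \kmeans dimension-reduction guarantee (so that the cost of every $k$-partition is preserved up to $(1 \pm \alpha)$ under $\Pi$). The second step is to quantize the reduced space: take an $\eta$-net $\cN \subset \R^{d'}$ of the ball containing all $\Pi(x_u)$, with $\eta$ small enough that within-bucket variance is subsumed by the target error; since $\dim = d'$, this yields $|\cN| = k^{O_\alpha(1)}$. The third step is the single-round randomizer: each user $u$ computes its bucket $b(u) = \argmin_{y \in \cN} \|\Pi x_u - y\|$ and sends a privatized encoding of the pair $(b(u), x_u)$ using standard one-round local DP mechanisms for histograms augmented with vector-sum estimation. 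On the server, this lets us recover unbiased estimates $\hat{n}_y$ of the bucket counts and $\hat{s}_y$ of the per-bucket sums of the \emph{original} $d$-dimensional vectors, with $L_2$ error $O(\sqrt{nd}\,\polylog(n)/\eps)$ per bucket. Finally, threshold away buckets with $\hat n_y$ below $\tau = \Theta(\polylog(n)\sqrt{n}/\eps)$, form the weighted coreset $\{(\hat n_y,\hat s_y/\hat n_y)\}$, and run the non-private $\kappa$-approximation on it, outputting its $k$ centers.

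The analysis splits into two pieces. For the approximation ratio, the net guarantees that replacing each user by its bucket centroid changes \kmeans cost by at most $O(\alpha)\cdot\opt$ in the reduced space, and the dimension reduction then lifts this to the original space, so the true (noiseless) weighted instance is a $(1{+}O(\alpha))$-approximate coreset for the original; running any $\kappa$-approximation on it yields cost at most $\kappa(1{+}\alpha)\opt$ plus the noise-induced error. For the additive term, the centroid estimate in each surviving bucket has $L_2$ error $O(\sqrt{nd}\polylog(n)/(\eps\,\hat n_y))$, and summing the resulting squared-distance contributions across the at most $|\cN| = k^{O_\alpha(1)}$ buckets (using the threshold $\tau$ to bound the impact of buckets in which few users actually fell) gives the claimed additive error of $k^{O_\alpha(1)}\sqrt{nd}\polylog(n)/\eps$. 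Runtime is polynomial because $|\cN|$ is polynomial in $n,d,k$ for fixed $\alpha$ and the non-private approximation is assumed polynomial.

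The main obstacle is choosing the granularity $\eta$ and threshold $\tau$ to simultaneously (i)~keep $|\cN|$ polynomial, (ii)~make within-bucket bias negligible, (iii)~ensure surviving buckets have enough users that their noisy centroids are accurate, and (iv)~ensure discarded buckets do not together contribute too much \kmeans mass. A secondary subtlety is that lifting centers back to $\R^d$ (rather than $\R^{d'}$) is essential, since the final output must be in the original space; this is exactly why we transmit the original vector $x_u$ along with the bucket index $b(u)$ rather than simply transmitting $\Pi x_u$, and it is the source of the $\sqrt{nd}$ (instead of $\sqrt{n d'}$) factor in the additive error.
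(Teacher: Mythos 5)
There is a genuine gap at the heart of your construction: the single, fixed-granularity net $\cN$. To keep $|\cN| = k^{O_\alpha(1)}$ with $d' = O_\alpha(\log k)$, the granularity $\eta$ must be a constant depending only on $\alpha$. But then the quantization step (replacing each projected point by its nearest net point) incurs a cost change of up to $n\eta^2 = \Theta_\alpha(n)$, and this is an \emph{additive} error, not a multiplicative one: your claim that the net ``changes \kmeans cost by at most $O(\alpha)\cdot\opt$'' would require $n\eta^2 \lesssim \alpha\cdot\opt^k$, yet $\opt^k$ is unknown and can be arbitrarily small (even zero, when the data sit on $k$ points). So either $\eta$ is constant and the additive error $\Theta_\alpha(n)$ swamps the target $\sqrt{nd}\,\polylog(n)/\eps$, or $\eta$ must shrink with $\sqrt{\opt^k/n}$, in which case $|\cN|$ (and hence the number of buckets whose noisy counts enter the coreset error) is no longer $k^{O_\alpha(1)}$ and is moreover data-dependent, which you cannot arrange in a single non-interactive round. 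Your own list of obstacles (i)--(iv) is precisely this tension, but the proposal does not resolve it.

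This is exactly the problem the paper's main technical machinery is built to solve. Instead of one net, each user encodes its representatives at \emph{all} $T = \Theta(\log n)$ resolutions of a hierarchy of nets (so no interaction is needed), and the decoder adaptively decides, level by level, which regions to refine, using the noisy frequency oracle via \textsc{ComputeThreshold} (Lemma~\ref{lem:potential-func-noisy}). The quantization error incurred by stopping at a coarse level is then charged to a certified \emph{lower bound} on $\opt^k_{\bS}$ obtained from a packing argument (Lemma~\ref{lem:opt-lb-generic-weighted}, Corollary~\ref{cor:opt-lb}), with the charging done per level to avoid double counting; this yields a coreset with multiplicative error $\xi$ and additive error $\eta\cdot O(N_T)$ with only $N_T = k^{O_\alpha(1)}\polylog(n)$ leaves (Theorem~\ref{thm:net-tree-main-guarantee}). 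Your second step---transmitting the original vector alongside the bucket and lifting centers back to $\R^d$ via per-part count and vector-sum estimates---does match the paper's decoder (and its use of Fact~\ref{fact:cost-by-mean-error} and the partition-preserving dimension reduction of Theorem~\ref{thm:dim-red-cluster}), but without an $\opt$-adaptive bucketing the approximation-ratio half of your analysis does not go through.
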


Here and throughout, we assume that the users and the analyzer have access to shared randomness. We further discuss the question of obtaining similar results using only private randomness in \Cref{sec:conc}. Furthermore, we assume throughout that $0 < \eps \leq O(1)$ and will not state this assumption explicitly in the formal statements.

The dependency on $n$ in the additive error above is also essentially tight, as~\citet{Stemmer20} proved a lower bound of $\Omega(\sqrt{n})$ on the additive error. This improves upon the protocol of \citet{Stemmer20}, where the additive error grows with $n^{1/2 + a}$ for an arbitrary small positive constant $a$.

In addition to the local model, our approach can be easily adapted to other distributed privacy models. To demonstrate this, we give an algorithm  in the shuffle model \citep{bittau17,erlingsson2019amplification,CheuSUZZ19}\footnote{The shuffle model of DP is a middle ground between the central and local models. Please see \Cref{app:shuffle-dp} for more details.} with a similar approximation ratio but with a smaller additive error of $\tilde{O}(k^{O_{\alpha}(1)} \cdot \sqrt{d}/ \eps)$.

\begin{theorem} \label{thm:main-apx-shuffle}
Suppose that there is a (not necessarily private) polynomial-time approximation algorithm for \kmeans with approximation ratio $\kappa$.
For any $\alpha > 0$, there is a one-round $(\eps, \delta)$-DP $(\kappa (1 + \alpha), k^{O_{\alpha}(1)} \cdot \sqrt{d} \cdot \polylog(nd/\delta) / \eps)$-approximation algorithm for \kmeans in the shuffle DP model.  The algorithm runs in time $\poly(n, d, k, \log(1/\delta))$.  
\end{theorem}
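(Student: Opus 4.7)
The plan is to reuse the same algorithmic framework that establishes Theorem~\ref{thm:main-apx-local} and substitute the local-DP primitives with shuffle-DP analogues. At a high level, that framework does three things in a single round: (i) it builds, from public (shared) randomness, a set of $k^{O_\alpha(1)}$ candidate centers in a low-dimensional subspace obtained via a Johnson--Lindenstrauss projection to dimension $O_\alpha(\log k)$; (ii) it uses the users to privately assign themselves to their nearest candidate (a \emph{histogram} primitive over $k^{O_\alpha(1)}$ bins) and to privately aggregate the $d$-dimensional coordinates of the users within each candidate's bucket (a \emph{vector-sum} primitive in $\R^d$); and (iii) it feeds the resulting noisy weighted instance to the non-private $\kappa$-approximation oracle and returns the recovered centers. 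Shared randomness lets every user compute the same projection and candidate grid, and since each user's message is a deterministic function of its own point and public randomness, the whole protocol remains one round.

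The $\sqrt{d}$ savings (in place of $\sqrt{nd}$) come from replacing the two local-DP subroutines with their shuffle-DP counterparts. For the $k^{O_\alpha(1)}$-bin histogram, I would invoke an off-the-shelf shuffle-DP histogram protocol whose per-bin additive error is only $\polylog(n/\delta)/\eps$, instead of the $\sqrt{n}/\eps$ error forced by local DP. For the vector-sum step inside each bucket I would invoke a shuffle-DP vector-sum protocol with per-coordinate error $\polylog(n/\delta)/\eps$, yielding $\ell_2$-error $O(\sqrt{d}\cdot\polylog(nd/\delta)/\eps)$ per bucket. Because both protocols are non-interactive (each user sends a fixed-size collection of messages that are then shuffled), composing them with an appropriate split of the privacy budget and basic composition gives the single-round $(\eps,\delta)$-DP guarantee.

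The approximation analysis then follows the same route as in the local case. The JL projection together with snapping to the candidate grid blow up the optimal \kmeans cost by at most a $(1+\alpha)$ factor, and running the non-private $\kappa$-approximation on the noisy weighted instance yields centers whose cost is at most $\kappa(1+\alpha)\cdot\opt + E$, where $E$ aggregates the histogram and vector-sum errors across the $k^{O_\alpha(1)}$ buckets. Tracking these contributions produces $E = k^{O_\alpha(1)}\cdot\sqrt{d}\cdot\polylog(nd/\delta)/\eps$, as claimed; the $\sqrt{d}$ factor is entirely contributed by the vector-sum step, while the histogram step only adds a $k^{O_\alpha(1)}\polylog$ term.

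The main obstacle I expect is the analysis of how per-bucket vector-sum noise propagates into \kmeans cost when some buckets are lightly populated. Such buckets have badly estimated means in a relative sense but contribute little to the overall cost, so the accounting needs a threshold below which a bucket is discarded (its contribution absorbed into the additive error) and above which its mean is certifiably close to the true cluster centroid. Choosing this threshold so that both the multiplicative ratio $\kappa(1+\alpha)$ and the additive bound $k^{O_\alpha(1)}\sqrt{d}\polylog(nd/\delta)/\eps$ hold simultaneously is the delicate step; once it is set, the rest of the argument is bookkeeping and a direct plug-in of standard shuffle-DP histogram and vector-sum primitives.
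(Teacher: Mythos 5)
Your proposal follows essentially the same route as the paper: keep the one-round framework behind \Cref{thm:main-apx-local} (random projection, net-tree/coreset construction from a frequency oracle, non-private $\kappa$-approximation on the coreset, then per-cluster vector sums to recover high-dimensional centers, i.e., the generic guarantee of \Cref{thm:apx-main}), and swap the local-DP histogram and vector-summation oracles for shuffle-DP ones with $\polylog(nd/\delta)/\eps$ and $\sqrt{d}\cdot\polylog(nd/\delta)/\eps$ error, which is exactly how the paper derives \Cref{thm:main-apx-shuffle}. Two remarks: the ``delicate step'' you anticipate about lightly populated buckets is already resolved inside the generic theorem you are reusing (the two-case analysis in the proof of \Cref{thm:apx-main}), so no new thresholding is needed; on the other hand, the shuffle-DP \emph{bucketized vector-summation} oracle you treat as off-the-shelf is the one genuinely new primitive the paper has to build (\Cref{bucketized_vector_summation_shuffle}, via split-and-mix secure aggregation plus discrete Gaussian noise calibrated to $\ell_2$-sensitivity, then lifted to the generalized/bucketized setting by hashing and \Cref{lem:hist-to-genhist}). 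Note that naively running a scalar shuffle summation per coordinate with the privacy budget split across the $d$ coordinates would give additive error scaling like $d$ (or worse), not $\sqrt{d}$; the $\sqrt{d}$ bound requires the Gaussian-style, $\ell_2$-aware privacy accounting that the paper's appendix supplies, so this invocation needs either that construction or an explicit citation of an equivalent result.
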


To the best of our knowledge, this is the first \kmeans algorithm in the one-round shuffle model with (non-trivial) theoretical guarantees. Furthermore, we remark that the above guarantees essentially match those of the central DP algorithm of~\citet{GKM20}, while our algorithm works even in the more restrictive shuffle DP model.

The main new ingredient in our framework is the use of a hierarchical data structure called a \emph{net tree}~\citep{Har-PeledM06} to construct a private \emph{coreset}\footnote{See \Cref{sec:kmeans-def} for the definition of coresets.} of the input points.  The decoder can then simply run any (non-private) approximation algorithm on this private coreset.  

The net tree~\citep{Har-PeledM06} is built as follows.  
Roughly speaking, each input point can be assigned to its ``representative'' leaf of a net tree; the deeper the tree is, the closer this representative is to the input point. To achieve  privacy, noise needs to be added to the number of nodes assigned to each leaf. This creates a natural tension when building such a tree: if the tree has few leaves, the representatives are far from the actual input points whereas if the tree has too many leaves, it suffers a larger error introduced by the noise. Our main technical contribution is an algorithm for constructing a net tree that balances between these two errors, ensuring that both do not affect the \kmeans objective too much. We stress that this algorithm works in the non-interactive local model, as each user can encode all possible representatives of their input point without requiring any interaction.

\para{Organization}
In Section~\ref{sec:prelims}, we provide  background on \kmeans and a description of the tools we need for our algorithm.  In Section~\ref{sec:nettrees}, we describe and analyze net trees, which are sufficient to solve private \kmeans in low dimensions.  In Section~\ref{sec:highdim}, we use dimensionality reduction to extend our results to the high-dimensional case.  In Section~\ref{sec:exp}, we present some empirical evaluation of our algorithm.  Section~\ref{sec:conc} contains the concluding remarks.

All the missing proofs and the results for the shuffle DP model are in the Appendix.

\section{Preliminaries}\label{sec:prelims}

We use $\|x\|$ to denote the Euclidean norm of a vector $x$. We use $\bB^d(x, r)$ to denote the closed radius-$r$ ball around $x$, i.e., $\bB^d(x, r) = \{y \in \R^d \mid \|x - y\| \leq r\}$.
Let $\bB^d = \bB^d(0, 1)$, the unit ball.
For every pair $T, T' \subseteq \R^d$ of sets, we use $d(T, T')$ as a shorthand for $\min_{x \in T, x' \in T'} \|x - x'\|$.

For every $m \in \N$, we use $[m]$ as a shorthand for $\{1, \dots, m\}$. 
For real numbers $a_1, \dots, a_t$, we use $\bottom_m(a_1, \dots, a_t)$ to denote the sum of the smallest $m$ numbers among $a_1, \dots, a_t$, where $t \geq m$.

We will be working with a generalization of multisets, where each element can have a weight associated with it.

\begin{definition}[Weighted Point Set]
A \emph{weighted point set} $\bS$ on a domain $D$ is a finite set of tuples $(x, w_{\bS}(x))$ where $x \in D$ and $w_{\bS}(x) \in \R_{\geq 0}$ denotes its weight; each $x$ should not be repeated in $\bS$.  It is useful to think of $\bS$ as the function $w_{\bS}: D \to \R_{\geq 0}$; this extends naturally (as a measure) to any subset $T \subseteq D$ for which we define $w_{\bS}(T) = \sum_{x \in T} w_{\bS}(x)$. We say that $w_{\bS}(D)$ is the \emph{total weight} of $\bS$ and we abbreviate it as $|\bS|$.
%
%
The \emph{average} of $\bS$ is defined as $\mu(\bS) := \sum_{x \in \bB^d} w_{\bS}(x) \cdot x / |\bS|$.
\end{definition}

Throughout the paper, we alternatively view a multiset $\bX$ as a weighted point set, where $w_{\bX}(x)$ denotes the (possibly fractional) number of times $x$ appears in $\bX$.

For an algorithm $\cA$, we use $\rt(\cA)$ to denote its running time.

\subsection{\kmeans and coresets}
\label{sec:kmeans-def}

Let $\bX \subseteq \bB^d$ be an input multiset and let $n = |\bX|$.
\begin{definition}[\kmeans]
The \emph{cost} of a set $\bC$ of centers with respect to an input multiset $\bX \subseteq \bB^d$ is defined as $\cost_{\bX}(\bC) := \sum_{x \in \bX} \left(\min_{c \in \bC} \|x - c\|\right)^2$. The \emph{\kmeans problem} asks, given $\bX$ and $k \in \N$, to find $\bC$ of size $k$ that minimizes $\cost_{\bX}(\bC)$; the minimum cost is denoted  $\opt_{\bX}^{k}$.

For a weighted point set $\bS$ on $\bB^d$, we define $\cost_\bS(C) := \sum_{x \in \bB^d} w_{\bS}(x) \cdot \left(\min_{c \in \bC} \|x - c\|\right)^2$, 
and let 
$\opt_{\bS}^{k} := \min_{|\bC| = k} \cost_{\bS}(\bC)$.

A set $\bC$ of size $k$ is a \emph{$(\kappa, t)$-approximate solution} for input $\bS$ if $\cost_{\bS}(\bC) \leq \kappa \cdot \opt^k_{\bS} + t$.
\end{definition}

It is also useful to define cost with respect to partitions, i.e., the mapping of points to the centers.

\begin{definition}[Partition Cost]
For a given partition $\phi: \bS \to [k]$ where $\bS$ is a weighted point set and an ordered set $\bC = (c_1, \dots, c_k)$ of $k$ candidate centers, we define the cost of $\phi$ with respect to $\bC$ as $\cost_{\bS}(\phi, \bC) := \sum_{x \in \bB^d} w_{\bS}(x) \cdot \|x - c_{\phi(x)}\|^2$.
Furthermore, we define the cost of $\phi$ as $\cost_{\bS}(\phi) := \min_{\bC \in (\R^d)^k} \cost_{\bS}(\phi, \bC)$.
\end{definition}

Note that a minimizer for the term $\min_{\bC \in (\R^d)^k} \cost_{\bS}(\phi, \bC)$ above is $c_i = \mu(\bS_i)$ where $\bS_i$ denotes the weighted point set corresponding to partition $i$, i.e., $w_{\bS_i}(x) = 
\begin{cases}
w_{\bS}(x) &\text{ if } \phi(x) = i, \\
0 &\text{otherwise.}
\end{cases}$

\para{Coresets}
In our algorithm, we will produce a weighted set that is a good ``approximation'' to the original input set. To quantify how good is the approximation, it is convenient to use the well-studied notion of coresets~\citep[see, e.g.,][]{Har-PeledM04}, which we recall below.

\begin{definition}[Coreset]
A weighted point set $\bS'$ is a \emph{$(k, \gamma, t)$-coreset} of a weighted point set $\bS$ if for every set $\bC \subseteq \bB^d$ of $k$ centers, it holds that $(1 - \gamma) \cdot \cost_{\bS}(\bC) - t \leq \cost_{\bS'}(\bC) \leq (1 + \gamma) \cdot \cost_{\bS}(\bC) + t$. When $k$ is clear from context, we refer to such an $\bS'$ as just a $(\gamma, t)$-coreset of $\bX$.
\end{definition}

\subsection{(Generalized) Monge’s Optimal Transport}

Another tool that will facilitate our proof is (a generalization of) optimal transport~\citep{monge1781memoire}. Roughly speaking, in Monge’s Optimal Transport problem, we are given two measures on a metric space and we would like to find a map that ``transports'' the first measure to the second measure, while minimizing the cost, which is often defined in terms of the total mass moved times some function of the distance. This problem can be ill-posed as such a mapping may not exist, e.g., when the total masses are different.  (We will often encounter this issue as we will be adding noise for differential privacy, resulting in unequal masses.)  To deal with this, we use a slightly generalized version of optimal transport where the unmatched mass is allowed but penalized by the $L_1$ difference,%
\footnote{Several works have defined similar but slightly different notions~\citep[see, e.g.,][]{benamou2003numerical,piccoli2014generalized}.}
defined next.

\begin{definition}[Generalized $L^2_2$ Transport Cost]
Let $\bS, \bS'$ be weighted point sets on $\bB^d$. Their \emph{generalized ($L_2^2$) Monge’s transport cost} of a mapping $\Psi: \bB^d \to \bB^d$ is defined as
\begin{align} \label{eq:transport-cost}
\MT(\Psi, \bS, \bS') & := 
\sum_{y \in \bB^d} w_{\bS}(y) \cdot \|\Psi(y) - y\|^2 
+ \sum_{x \in \bB^d} |w_{\bS}(\Psi^{-1}(x)) - w_{\bS'}(x)|.
\end{align}
Finally, we define the \emph{optimal generalized ($L_2^2$) Monge’s transport cost} from $\bS$ to $\bS'$ as
\begin{align} \label{eq:opt-transport-cost}
\MT(\bS, \bS') = \min_{\Psi: \bB^d \to \bB^d} \MT(\Psi, \bS, \bS').
\end{align}
\end{definition}

We remark that the minimizer $\Psi$ always exists because our weighted sets $\bS, \bS'$ have finite supports.  A crucial property of optimal transport we will use is that if the optimal transport cost between $\bS, \bS'$ is small relative to the optimal \kmeans objective, then $\bS'$ is a good coreset for $\bS$. This is encapsulated in the following lemma.
\begin{lemma} \label{lem:opt-transport-to-coreset}
For any $\xi \in (0, 1)$, any weighted point sets $\bS, \bS'$ over $\bB^d$, if $\MT(\bS, \bS') \leq \frac{\xi}{8(1 + 2/\xi)} \cdot \opt^k_{\bS} + t$ for some $t \geq 0$, then $\bS'$ is a $(\xi, 4(1 + 2/\xi)t)$-coreset of $\bS$.
\end{lemma}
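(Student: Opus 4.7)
The plan is to fix an arbitrary set $\bC \subseteq \bB^d$ of $k$ centers and show both the upper and lower coreset inequalities by routing through an intermediate weighted set $\bS_\Psi$ obtained by pushing $\bS$ through an optimal transport map. Let $\Psi$ be a minimizer of \eqref{eq:opt-transport-cost}, so that $\MT(\bS,\bS') = A + B$ where
\[
A = \sum_{y \in \bB^d} w_{\bS}(y)\,\|\Psi(y)-y\|^2, \qquad B = \sum_{x \in \bB^d} |w_{\bS}(\Psi^{-1}(x))-w_{\bS'}(x)|,
\]
and define $\bS_\Psi$ by $w_{\bS_\Psi}(x) = w_{\bS}(\Psi^{-1}(x))$, so that $\cost_{\bS_\Psi}(\bC) = \sum_y w_{\bS}(y)\min_{c\in\bC}\|\Psi(y)-c\|^2$.

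The first main step is an approximate-triangle-inequality computation. Using $(a+b)^2 \le (1+\eta)a^2+(1+1/\eta)b^2$ and $(a+b)^2 \ge (1-\eta)a^2-(1/\eta-1)b^2$ with $\eta = \xi/2$, applied to the decomposition $\Psi(y)-c = (\Psi(y)-y)+(y-c)$, and then taking the $c \in \bC$ that minimizes the right-hand side, I would derive the pointwise bounds
\[
(1-\tfrac{\xi}{2})\min_c\|y-c\|^2 - (\tfrac{2}{\xi}-1)\|\Psi(y)-y\|^2 \;\le\; \min_c\|\Psi(y)-c\|^2 \;\le\; (1+\tfrac{\xi}{2})\min_c\|y-c\|^2 + (1+\tfrac{2}{\xi})\|\Psi(y)-y\|^2,
\]
and summing with weights $w_{\bS}(y)$ yields
\[
(1-\tfrac{\xi}{2})\cost_{\bS}(\bC) - (\tfrac{2}{\xi}-1)A \;\le\; \cost_{\bS_\Psi}(\bC) \;\le\; (1+\tfrac{\xi}{2})\cost_{\bS}(\bC) + (1+\tfrac{2}{\xi})A.
\]

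The second step compares $\cost_{\bS_\Psi}(\bC)$ with $\cost_{\bS'}(\bC)$. Since both supports lie in $\bB^d$, for any $c \in \bC$ the squared-distance term $\min_c\|x-c\|^2 \le 4$, so
\[
|\cost_{\bS'}(\bC) - \cost_{\bS_\Psi}(\bC)| \;\le\; 4 \sum_x |w_{\bS'}(x)-w_{\bS}(\Psi^{-1}(x))| \;=\; 4B.
\]
Chaining with the previous bounds gives
\[
(1-\tfrac{\xi}{2})\cost_{\bS}(\bC) - \big[(\tfrac{2}{\xi}-1)A + 4B\big] \;\le\; \cost_{\bS'}(\bC) \;\le\; (1+\tfrac{\xi}{2})\cost_{\bS}(\bC) + \big[(1+\tfrac{2}{\xi})A + 4B\big].
\]

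The last (and most delicate) step is the bookkeeping that turns the multiplicative slack $\xi/2$ into $\xi$ by absorbing an $\opt^k_\bS$ term. Both coefficients $(2/\xi-1)$ and $(1+2/\xi)$ are at most $4(1+2/\xi)$, and likewise $4 \le 4(1+2/\xi)$, so each bracket is $\le 4(1+2/\xi)(A+B) = 4(1+2/\xi)\MT(\bS,\bS')$. By hypothesis this is at most $\frac{\xi}{2}\opt^k_\bS + 4(1+2/\xi)t$, and using $\opt^k_\bS \le \cost_{\bS}(\bC)$ converts the $\frac{\xi}{2}\opt^k_\bS$ term into an extra $\frac{\xi}{2}\cost_{\bS}(\bC)$, merging with the $(1\pm\xi/2)$ factor to yield the desired $(1\pm\xi)\cost_{\bS}(\bC) \pm 4(1+2/\xi)t$. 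The choice $\eta=\xi/2$ and the fact that the hypothesis has the exact reciprocal constant $\xi/[8(1+2/\xi)]$ are what make the algebra close up cleanly; this is the only non-routine part of the proof and is where I would be most careful.
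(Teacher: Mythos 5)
Your proof is correct and follows essentially the same route as the paper: the paper factors the argument through \Cref{lem:transport-cost-to-apx} (whose proof is exactly your approximate-triangle-inequality computation with parameter $\xi/2$ via \Cref{fact:p-pow-ineq}, plus the ``$4\times L_1$-mismatch'' bound), applied with the closest-center assignment $\phi_*$ for both directions, and then absorbs $4(1+2/\xi)\MT(\bS,\bS') \leq \tfrac{\xi}{2}\opt^k_{\bS} + 4(1+2/\xi)t \leq \tfrac{\xi}{2}\cost_{\bS}(\bC) + 4(1+2/\xi)t$ just as you do. Your only deviation is organizational: you inline that lemma by working with the pushforward $\bS_\Psi$ and a two-sided pointwise inequality, which yields the lower coreset bound directly in terms of $\cost_{\bS}(\bC)$.
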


We also use a slightly more refined bound stated below. It is worth to note that the first inequality below is very explicit as it also gives the mapping for $\bS$; we will indeed need such an additional property in our analysis when applying dimensionality reduction.

\begin{lemma} \label{lem:transport-cost-to-apx}
For any $\xi \in (0, 1]$, weighted point sets $\bS, \bS'$ over $\bB^d$, $\bC \in (\bB^d)^k$, $\phi : \bB^d \to [k]$ and $\Psi: \bB^d \to \bB^d$. We have
\begin{align*}
\cost_{\bS}(\phi \circ \Psi, \bC) \leq  (1 + \xi) \cdot \cost_{\bS'}(\phi, \bC) + 4(1 + 1/\xi) \cdot \MT(\Psi, \bS, \bS').
\end{align*}
and
\begin{align*}
\cost_{\bS'}(\bC) \leq (1 + \xi) \cdot \cost_{\bS}(\bC) + 4(1 + 1/\xi) \cdot \MT(\Psi, \bS, \bS').
\end{align*}
\end{lemma}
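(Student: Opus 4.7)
The plan is to prove both inequalities with the same two-ingredient recipe: the triangle inequality combined with the AM--GM-type bound $(a+b)^2 \le (1+\xi)a^2 + (1+1/\xi)b^2$ (just $(\sqrt{\xi}a - b/\sqrt{\xi})^2 \ge 0$ rearranged), and the diameter bound $\|x-c\|^2 \le 4$ for $x,c\in\bB^d$, which lets one pay for the weight mismatch $\sum_x|w_{\bS}(\Psi^{-1}(x))-w_{\bS'}(x)|$ at a controlled rate.

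For the first inequality, I start from $\cost_{\bS}(\phi\circ\Psi,\bC) = \sum_y w_{\bS}(y)\|y-c_{\phi(\Psi(y))}\|^2$ and apply $\|y-c_{\phi(\Psi(y))}\| \le \|\Psi(y) - c_{\phi(\Psi(y))}\| + \|y-\Psi(y)\|$ followed by AM--GM with $a = \|\Psi(y)-c_{\phi(\Psi(y))}\|$ and $b = \|y-\Psi(y)\|$. Summing weighted by $w_{\bS}(y)$ and re-indexing the first piece by $x=\Psi(y)$ gives
\begin{align*}
\cost_{\bS}(\phi\circ\Psi,\bC) \le (1+\xi)\sum_{x} w_{\bS}(\Psi^{-1}(x))\,\|x-c_{\phi(x)}\|^2 + (1+1/\xi)\sum_{y}w_{\bS}(y)\|\Psi(y)-y\|^2,
\end{align*}
where $w_{\bS}(\Psi^{-1}(x)) := \sum_{y:\Psi(y)=x}w_{\bS}(y)$. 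Using $\|x-c_{\phi(x)}\|^2 \le 4$, I then replace $w_{\bS}(\Psi^{-1}(x))$ by $w_{\bS'}(x)$ in the first sum at an additive price of $4\sum_x|w_{\bS}(\Psi^{-1}(x))-w_{\bS'}(x)|$, turning it into $\cost_{\bS'}(\phi,\bC)$ plus a weight-discrepancy error. The residual coefficients $4(1+\xi)$ and $1+1/\xi$ are both at most $4(1+1/\xi)$ for $\xi\in(0,1]$, so the movement and discrepancy pieces bundle into a single $4(1+1/\xi)\,\MT(\Psi,\bS,\bS')$, matching the claim.

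For the second inequality, the strategy is identical but performed from the other end. Writing $\cost_{\bS'}(\bC)=\sum_x w_{\bS'}(x)\,d(x,\bC)^2$ with $d(x,\bC) := \min_{c\in\bC}\|x-c\|$, I first swap $w_{\bS'}(x)$ for $w_{\bS}(\Psi^{-1}(x))$ at a cost of $4\sum_x|w_{\bS'}(x)-w_{\bS}(\Psi^{-1}(x))|$ (again via $d(x,\bC)^2\le 4$), then re-index by $y$ with $x=\Psi(y)$ and apply $d(\Psi(y),\bC)\le d(y,\bC)+\|\Psi(y)-y\|$ together with the same AM--GM bound. Identical coefficient bookkeeping yields $(1+\xi)\cost_{\bS}(\bC)+4(1+1/\xi)\,\MT(\Psi,\bS,\bS')$.

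There is no genuine obstacle here; the lemma is careful constant chasing. The only small design choice is orienting AM--GM so that $(1+\xi)$ multiplies the \emph{main} term (the one that will become $\cost_{\bS'}(\phi,\bC)$ or $\cost_{\bS}(\bC)$) and $(1+1/\xi)$ multiplies the movement term; the diameter bound $\|\cdot\|^2\le 4$ then absorbs the weight-mismatch contribution, with the inequality $\max\{4(1+\xi),\,1+1/\xi\} \le 4(1+1/\xi)$ (valid for $\xi\le 1$) performing the final bundling into one transport-cost term.
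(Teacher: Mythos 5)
Your proposal is correct and follows essentially the same route as the paper's proof: triangle inequality plus the elementary bound $(a+b)^2 \le (1+\xi)a^2 + (1+1/\xi)b^2$, the diameter bound $\|x-c\|^2 \le 4$ to pay for the weight discrepancy, and the observation that both residual coefficients are at most $4(1+1/\xi)$ for $\xi \le 1$. The only cosmetic difference is that you phrase the second inequality via Lipschitzness of $d(\cdot,\bC)$ rather than introducing the closest-center assignment $\phi_*$ explicitly, which is the same argument.
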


The proofs of both lemmas are deferred to \Cref{app:opt-transport-to-coreset-proof}

\subsection{Efficiently Decodable Nets}

Let $\cL \subseteq \bB^d$ be a finite set.  Its \emph{covering radius}, denoted $\rho(\cL)$, is defined as $\max_{x \in \bB^d} \min_{y \in \cL} \|x - y\|$. Its \emph{packing radius}, denoted $\gamma(\cL)$, is defined as the largest $\gamma$ such that the open balls around each point of $\cL$ of radius $\gamma$ are disjoint.  We say $\cL$ is a \emph{$(\rho, \gamma)$-net} if $\rho(\cL) \leq \rho$ and $\gamma(\cL) \geq \gamma$.

Using known results on lattices~\citep{rogers1959lattice,Micciancio04,MicciancioV13},~\citet{GKM20} show that there exist nets with $\gamma = \Omega(\rho)$ that are ``efficiently decodable'', i.e., given any point we can find points in $\cL$ close to it in $\exp(O(d))$ time.
\begin{lemma}[\citep{Micciancio04,GKM20}] \label{lem:decodable-nets}
For any given $\rho > 0$, there exists a $(\rho, \rho/3)$-net $\cL$ such that, for any given point $x \in \bB^d$ and any $r \geq \rho$, we can find all points in $\bB^d(x, r) \cap \cL$ in time $(1 + r/\rho)^{O(d)}$.
\end{lemma}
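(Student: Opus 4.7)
The plan is to obtain $\cL$ by restricting a suitable lattice $\Lambda \subseteq \R^d$ to $\bB^d$ (with a minor boundary correction), and to perform decoding via lattice enumeration on $\Lambda$. First, invoke the classical existence result of Rogers~\citep{rogers1959lattice}, refined in~\citet{Micciancio04}, to obtain a lattice $\Lambda_0$ in $\R^d$ whose covering radius $R$ and minimum vector length $\ell$ satisfy $R \leq \tfrac{3}{2}\ell$, together with a polynomial-time basis that supports the CVP and ball-enumeration algorithms of~\citet{MicciancioV13}. Rescaling $\Lambda := (\rho/R) \cdot \Lambda_0$ yields covering radius exactly $\rho$ and minimum distance at least $\tfrac{2\rho}{3}$, so the associated packing radius is at least $\rho/3$.

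Next, define $\cL := \Lambda \cap \bB^d$, augmented if necessary by a small number of points on the boundary of $\bB^d$ placed so as to cover those Voronoi cells whose generator lies just outside $\bB^d$ (there are only $(1/\rho)^{O(d)}$ such cells, and they can be covered while maintaining a minimum pairwise distance of $\tfrac{2\rho}{3}$). A routine check then confirms that pairwise distances in $\cL$ remain at least $\tfrac{2\rho}{3}$ and that every $x \in \bB^d$ has a point of $\cL$ within distance $\rho$, so $\cL$ is a $(\rho, \rho/3)$-net.

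For the decoder, given $x \in \bB^d$ and $r \geq \rho$, invoke the Micciancio--Voulgaris enumeration on $\Lambda$ to list all lattice points in $\bB^d(x, r)$; it costs $2^{O(d)}$ per reported point. The packing property yields a volume bound on the output size: the open balls of radius $\rho/3$ around these points are disjoint and sit inside $\bB^d(x, r + \rho/3)$, so their number is at most $\bigl((r + \rho/3)/(\rho/3)\bigr)^d = (1 + 3r/\rho)^d = (1 + r/\rho)^{O(d)}$. Because $r \geq \rho$, the $2^{O(d)}$ per-point factor is absorbed into the exponent, matching the claimed running time.

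The main obstacle here is not the geometric argument but the lattice-theoretic input: Rogers-type lattices with constant covering-to-packing ratio are classical yet non-constructive, so the substantive contribution of the cited works lies in marrying such a lattice with an efficient polynomial-time basis admitting CVP and ball-enumeration. Treating these as black boxes, the remaining steps — rescaling, the boundary correction, and the volumetric counting — are elementary.
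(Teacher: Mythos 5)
This lemma is not proved in the paper at all: it is imported as a black box from \citet{Micciancio04} and \citet{GKM20}, so there is no internal proof to compare against. Your skeleton is the natural one and matches the route of the cited works: a lattice $\Lambda$ with covering-to-packing ratio $O(1)$ (Rogers/Micciancio), rescaled so the covering radius is $\rho$, decoded by Micciancio--Voulgaris enumeration, with the output size and running time controlled by the disjointness of the radius-$\rho/3$ balls inside $\bB^d(x, r+\rho/3)$ and the factor $2^{O(d)}$ absorbed because $r \geq \rho$. Those parts are fine, treating the lattice results as black boxes.

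The genuine gap is the boundary correction, in two respects. First, you assert that the uncovered region (points of $\bB^d$ whose nearest lattice point lies outside $\bB^d$) can be covered by points \emph{on the boundary} while keeping all pairwise distances at least $\tfrac{2\rho}{3}$; this is exactly the delicate step and is not justified. The obvious choice---radially projecting the offending outside lattice points onto the sphere---can destroy the packing: if $z \in \Lambda$ lies on the unit sphere and $y \in \Lambda$ sits almost radially above it at distance $\tfrac{2\rho}{3}$, then $y$ projects to within $o(\rho)$ of $z$. An existence argument is salvageable (e.g., take a maximal $\tfrac{2\rho}{3}$-separated subset of the uncovered region: maximality gives the covering, and every uncovered point is at distance more than $\rho$ from $\Lambda \cap \bB^d$, so separation is preserved), but this leads to the second problem: your decoder only enumerates points of $\Lambda$ in $\bB^d(x,r)$ and never explains how to locate the \emph{added} points. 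An unstructured augmentation has up to $(1/\rho)^{O(d)}$ points with no lattice structure, and scanning them per query exceeds the claimed $(1+r/\rho)^{O(d)}$ bound precisely in the regime the paper uses the lemma ($r = O(\rho)$, time $2^{O(d)}$). So the construction must make the augmentation simultaneously packing-preserving and locally decodable (e.g., by a canonical local rule applied to the nearby outside lattice points, at the cost of some slack in the constants, or by avoiding augmentation via a rescaling trick); as written, the proposal resolves neither the separation claim nor the decodability of the extra points.
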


\subsection{(Generalized) Histogram and Vector Summation}
\label{subsec:histogram-prelim}

Finally, we need a few fundamental protocols as building blocks.  We first introduce a general aggregation problem.
\begin{definition}[Generalized Bucketized Vector Summation] 
\label{def:gbvs}
In the \emph{generalized bucketized vector summation problem}, each user holds a set $Y_i \subseteq Y$ of $T$ buckets and a vector $v_i \in \bB^d$. The goal is to determine, for a given $y \in Y$, the \emph{vector sum} of bucket $y$, which is $v_y := \sum_{i \in [n] \atop y \in Y_i} v_i$. An \emph{approximate generalized vector sum oracle} $\tv$ is said to be \emph{$\eta$-accurate} at $y$ if we have $\|v_y - \tv_y\| < \eta$. 
\end{definition}
The setting in this problem---each user holds a $d$-dimensional vector and contributes to $T$ buckets---generalizes many well-studied problems including (i) the \emph{histogram} problem where $T = d = 1$ and the (scalar) sum of a bucket is the \emph{frequency}, (ii) the \emph{generalized histogram} problem where $d = 1$ but $T$ can be more than one, and (iii) the \emph{bucketized vector summation} problem where $T = 1$ but $d$ can be more than one.

\subsection{Privacy Models}

We first recall the formal definition of differential privacy (DP).  For $\epsilon > 0$ and $\delta \in [0, 1]$, a randomized algorithm $\cA$ is \emph{$(\epsilon, \delta)$-DP} if for every pair $X, X'$ of inputs that differ on one point and for every subset $S$ of the algorithm's possible outputs, it holds that $\Pr[\cA(X) \in S] \leq e^{\epsilon} \cdot \Pr[\cA(X') \in S] + \delta$.  When $\delta = 0$, we simply say the algorithm is $\epsilon$-DP.

Let $n$ be the number of users, let $\bX = \{x_1, \ldots x_n\}$ and let the input $x_i$ be held by the $i$th user. An algorithm in the \emph{local DP model} consists of (i) an encoder whose input is the data held by one user and whose output is a sequence of messages
and (ii) a decoder, whose input is the concatenation of the messages from all the encoders and whose output is the output of the algorithm.  A pair $(\enc, \dec)$ is $(\eps, \delta)$-DP in the local model if for any input $\bX = (x_1, \dots , x_n)$, the algorithm $\cA(\bX) := (\enc(x_1), \ldots, \enc(x_n))$ is $(\epsilon, \delta)$-DP.  

For any generalized bucketized vector summation problem $\Pi$, we say that the pair $(\enc, \dec)$ is an
\emph{$(\eta, \beta)$-accurate} 
\emph{$(\eps, \delta)$-DP} algorithm (in a privacy model) if:
\begin{itemize}[nosep]
\item $\enc^{\Pi}_{(\eps, \delta)}$ is an $(\eps, \delta)$-DP algorithm that takes in the input and produces a randomized output,
\item $\dec^{\Pi}_{(\eps, \delta)}$ takes in the randomized output, a target bucket $y$ and produces an estimate vector sum for $y$,
\item For each $y \in Y$, the above oracle is $\eta$-accurate at $y$ with probability $1 - \beta$.
\end{itemize}

For the local DP model, the following generalized histogram guarantee is a simple consequence of the result of~\citet{BassilyNST20}:
\begin{theorem}[\citep{BassilyNST20}] \label{thm:hist-local}
There is an
$(O(\sqrt{n T^3 \log(|Y|/\beta)} / \eps), \beta)$-accurate $\eps$-DP algorithm for generalized histogram in the local model.  The encoder and the decoder run in time \\ $\poly(nT/\eps, \log |Y|)$.  
\end{theorem}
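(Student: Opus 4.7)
The plan is to reduce generalized histogram with parameter $T$ to the classical (single-element) histogram problem by having each user simulate $T$ ``virtual'' users, one per element of their set, and then invoke the known local DP histogram result of~\citet{BassilyNST20}, which gives, for any $\eps' > 0$, an $(O(\sqrt{n' \log(|Y|/\beta)}/\eps'), \beta)$-accurate $\eps'$-DP histogram protocol on $n'$ users, with encoder and decoder running in time $\poly(n'/\eps', \log|Y|)$.

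Concretely, set $\eps' := \eps/T$ and $n' := nT$. For each real user $i$, enumerate their set as $Y_i = \{y_{i,1}, \dots, y_{i,T}\}$ (breaking ties by any fixed ordering on $Y$), and have user $i$ emit $T$ independent messages, the $j$-th being the output of the BNST encoder with budget $\eps'$ on input $y_{i,j}$. The decoder views the resulting $nT$ messages as coming from $nT$ independent users and runs the BNST decoder to produce, on query $y \in Y$, an estimate $\tf_y$ of $f_y = \sum_{i \in [n]} \mathbf{1}[y \in Y_i] = |\{(i,j) : y_{i,j} = y\}|$. Note that the total running time on each user side is $T \cdot \poly(n'/\eps', \log|Y|) = \poly(nT/\eps, \log|Y|)$, and similarly for the decoder.

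For privacy, changing one real user's input $Y_i$ changes exactly the $T$ messages emitted by that user and no others. Each such message is produced by an $\eps'$-DP local randomizer, so by basic composition the full transcript of user $i$ is $T \eps' = \eps$-DP, establishing $\eps$-DP in the local model. For accuracy, invoking the BNST guarantee with parameters $(n', \eps') = (nT, \eps/T)$ yields, with probability $1 - \beta$, the uniform error bound
\begin{align*}
|\tf_y - f_y| \;\leq\; O\!\left(\frac{\sqrt{n' \log(|Y|/\beta)}}{\eps'}\right) \;=\; O\!\left(\frac{T \sqrt{nT \log(|Y|/\beta)}}{\eps}\right) \;=\; O\!\left(\frac{\sqrt{n T^3 \log(|Y|/\beta)}}{\eps}\right),
\end{align*}
which is the claimed accuracy.

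The only subtle point---and the one I would check carefully---is that splitting the budget as $\eps' = \eps/T$ is in fact the right tradeoff: a smaller $\eps'$ improves per-message privacy but worsens the per-bucket error by $1/\eps'$, while the number of virtual users inflates by $T$; the product $\sqrt{n'}/\eps' = T\sqrt{nT}/\eps$ is minimized (up to constants) at exactly this choice, matching the $\sqrt{nT^3}$ dependence in the theorem. No other obstacle appears to arise, since the BNST protocol already handles the $\log|Y|$ union bound internally and its runtime is polynomial.
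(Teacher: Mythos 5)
Your proposal is correct and is essentially the paper's own proof: the paper proves a generic reduction (Lemma~\ref{lem:hist-to-genhist}) in which each user runs $T$ parallel histogram encoders with budget $\eps/T$ each, the decoder treats the transcript as coming from $nT$ users, and the \textsc{ExplicitHist} guarantee of \citet{BassilyNST20} is then plugged in, giving exactly the $O(\sqrt{nT^3\log(|Y|/\beta)}/\eps)$ bound you derive. Your privacy, accuracy, and runtime accounting all match the paper's argument.
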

Using the technique of~\citet{BassilyNST20} together with that of~\citet{DuchiJW13}, we can obtain the following guarantee for generalized bucketized vector summation:
 
\begin{lemma} \label{lem:vec-sum-local}
There is an
$(O(\sqrt{n d T^3 \log(|Y|/\beta)} / \eps), \beta)$-accurate $\eps$-DP algorithm for generalized bucketized vector summation in the local model. The encoder and the decoder run in time $\poly(ndT/\eps, \log |Y|)$.  
\end{lemma}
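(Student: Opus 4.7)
The plan is to reduce generalized bucketized vector summation to generalized histogram (Theorem~\ref{thm:hist-local}) by first replacing each user's vector $v_i \in \bB^d$ with an unbiased, low-alphabet random summary in the style of~\citet{DuchiJW13}. On the user side, user $i$ draws a random $\sigma_i$ from a small discrete set $\Sigma \subseteq \R^d$ satisfying $\E[\sigma_i] = v_i$ (for instance, $\sigma_i = d \cdot b_i \cdot e_{j_i}$ with $j_i$ uniform on $[d]$ and $\Pr[b_i = +1 \mid j_i] = (1 + v_i[j_i])/2$), then forms the augmented set $Z_i := \{(y, \sigma_i) : y \in Y_i\}$ of $T$ items over the enlarged domain $Y \times \Sigma$, and finally runs the generalized-histogram encoder of Theorem~\ref{thm:hist-local} on $Z_i$. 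The analyzer's output for a target bucket $y$ is $\tv_y := \sum_{\sigma \in \Sigma} \tilde f(y,\sigma) \cdot \sigma$, where $\tilde f$ is the histogram decoder's estimate at $(y,\sigma)$. Privacy is inherited directly from Theorem~\ref{thm:hist-local}: conditioned on the sampling randomness, the user-side computation is precisely the $\eps$-DP histogram encoder applied to $Z_i$, and $\eps$-DP is preserved under marginalization over private local coins.

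For accuracy at a fixed bucket $y$, I would decompose $\tv_y - v_y$ into a sampling component $\sum_{i : y \in Y_i} \sigma_i - v_y$ and a histogram-noise component $\sum_{\sigma \in \Sigma} (\tilde f(y,\sigma) - f(y,\sigma)) \cdot \sigma$. The sampling component is a zero-mean sum of at most $n$ independent bounded random vectors and concentrates in $\ell_2$-norm to $O(\sqrt{n d \log(|Y|/\beta)})$ by a vector Bernstein inequality. The histogram-noise component is bounded by invoking Theorem~\ref{thm:hist-local} with domain $Y \times \Sigma$ of size $\poly(|Y|, d)$ and failure probability $\beta/(2|\Sigma| \cdot |Y|)$, which gives per-symbol accuracy $O(\sqrt{n T^3 \log(|Y| d/\beta)}/\eps)$; we then aggregate these noises across $\sigma \in \Sigma$ by a coordinate-wise sum-of-variances argument. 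A final union bound over all query buckets $y$ yields the claimed $\beta$-failure probability, and the running time is polynomial since both the summarization and the histogram encoder/decoder are polynomial-time.

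The main obstacle will be achieving the $\sqrt{d}$ dependence in the final bound rather than a suboptimal $d$ dependence. A direct one-sparse DJW summary has $\|\sigma_i\|_2 = d$, which would propagate an extra $d$ factor through the histogram-noise component since each term $\tilde f(y,\sigma) \cdot \sigma$ then has $\ell_2$-magnitude $d \cdot O(\sqrt{n T^3 \log(|Y| d/\beta)}/\eps)$. To fix this, I would either (i) use a vector-summarization with $\|\sigma_i\|_2 = O(\sqrt{d})$ and unit-order per-coordinate variance (such as a Kashin- or PrivUnit-based analogue of DJW) in place of the one-sparse encoding, or (ii) analyze the histogram-noise component coordinate-by-coordinate and aggregate squared per-coordinate noises across $\Sigma$ rather than summing magnitudes, exploiting the fact that independent histogram-noise entries are uncorrelated across the $|\Sigma|$ symbols. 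With such a choice, both error components contribute $O(\sqrt{n d T^3 \log(|Y|/\beta)}/\eps)$ in $\ell_2$-norm, matching the stated bound.
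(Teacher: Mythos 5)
There is a genuine gap: your reduction of vector summation to a histogram over an enlarged alphabet $Y \times \Sigma$ cannot reach the stated $O(\sqrt{ndT^3\log(|Y|/\beta)}/\eps)$ bound; it loses at least a $\sqrt{d}$ factor. The black-box guarantee of Theorem~\ref{thm:hist-local} gives per-bucket noise of order $\sqrt{nT^3\log(\cdot)}/\eps$, independent of $d$, and your decoder reconstructs $\tv_y = \sum_{\sigma \in \Sigma} \tilde{f}(y,\sigma)\cdot \sigma$. For the one-sparse summary to be unbiased for an arbitrary $v_i \in \bB^d$, the symbol magnitude must be at least $d$ times the largest coordinate, i.e., $\|\sigma\| \geq d$ in the raw basis and still $\|\sigma\| = \Omega(\sqrt{d})$ even after a Kashin/Hadamard-type flattening. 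Then, even under the most favorable assumption that the histogram noises at the $\Theta(d)$ symbols are independent and zero-mean (an assumption that Theorem~\ref{thm:hist-local} does not provide -- it is only a per-bucket high-probability accuracy bound, and in the underlying construction the estimates at different buckets share the same user messages), the reconstructed noise per coordinate is $\Omega(\|\sigma\| \cdot \sqrt{n}/\eps)$, so the $\ell_2$ error is $\Omega(d\sqrt{n}/\eps)$ with flattening and $\Omega(d^{3/2}\sqrt{n}/\eps)$ without; neither of your fixes changes this accounting. Fix (i) with a full-dimensional PrivUnit/Kashin summary of norm $O(\sqrt{d})$ makes $\Sigma$ continuous (or exponentially large), at which point the histogram reduction and the sum over $\sigma \in \Sigma$ break down both in error and in running time. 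In addition, your bound for the sampling component is already off with the primary proposal: with $\|\sigma_i\| = d$ deterministically, a vector Bernstein/Chernoff bound gives $O(d\sqrt{n\log(\cdot)})$, not $O(\sqrt{nd\log(\cdot)})$.

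The missing idea, and what the paper does instead, is to have each user release a privatized \emph{full $d$-dimensional vector} rather than routing the vector through scalar counts. Concretely, the paper combines the \textsc{ExplicitHist} trick of \citet{BassilyNST20} with the local vector randomizer of \citet{DuchiJW13}: user $i$ outputs $z_i = \cR^{\myvec}_{\eps}(Z_{y_i,i}\cdot x_i)$, where $Z \in \{\pm 1\}^{|Y|\times n}$ is public randomness and $\|z_i\| = \Theta(\sqrt{d}/\eps)$ with $\E[z_i] = Z_{y_i,i}x_i$; the decoder estimates bucket $v$ by $\sum_i z_i Z_{v,i}$. Since each user's message carries unbiased information about all $d$ coordinates with norm only $\Theta(\sqrt{d}/\eps)$, a vector concentration bound yields error $O(\sqrt{nd\log(d|Y|/\beta)}/\eps)$ per bucket, and the generalized ($T$-bucket) version with the $T^3$ factor then follows from the simple parallel-composition reduction of Lemma~\ref{lem:hist-to-genhist} (running $T$ copies with privacy budget $\eps/T$ each). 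Your privacy argument is fine, but to match the lemma you would need to replace the histogram-based aggregation of vector information with this kind of direct vector randomization correlated with public signs.
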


The derivations of the above two bounds are explained in more detail in \Cref{app:hist-local-model}.

\section{Net Trees}
\label{sec:nettrees}

In this section, we describe \emph{net trees}~\citep{Har-PeledM06}, which are data structures that allow us to easily construct coresets of the inputs when the dimension is small. We remark that, although the main structure of net trees we use is similar to that of~\cite{Har-PeledM06}, there are several differences; the main one being the construction algorithm which in our case has to be done via (noisy) oracles, leading to considerable challenges.

\subsection{Description and Notation}

Let $\cL_1, \dots, \cL_T \subseteq \bB^d$ be a family of efficiently decodable nets, where $\cL_i$ has covering radius\footnote{The $2^i$ term here can be replaced by $\lambda^i$ for any $\lambda \in (0, 1)$; we use $2^i$ to avoid introducing yet another parameter.} $\rho_i := 1 / 2^i$ and packing radius $\gamma_i := \gamma / 2^i$. Furthermore, let $\cL_0 = \{\bzero\}, \rho_0 = \gamma_0 = 1$. For convenience, we assume that $\cL_0, \cL_1, \dots, \cL_T$ are disjoint; this is w.l.o.g. as we can always ``shift'' each net slightly so that their elements are distinct. 

For $i \in \{0, \dots, T\}$, let $\Psi_i: \bB^d \to \cL_i$ denote the map from any point to its closest point in $\cL_i$ (ties broken arbitrarily).

\para{Complete Net Tree}
Given a family of nets $\cL_1, \dots, \cL_T$, the 
\emph{complete net tree} is defined as a tree with $(T + 1)$ levels. For $i \in \{0, \dots, T\}$, the nodes in level $i$ are exactly the elements of $\cL_i$. Furthermore, for $i \in [T]$, the parent of node $z \in \cL_i$ is $\Psi_{i - 1}(z) \in \cL_{i - 1}$.  We use $\children(z)$ to denote the set of all children of $z$ in the complete net tree.

\para{Net Tree} 
A \emph{net tree} $\cT$ is a subtree of the complete net tree rooted at $\bzero$, where each node $z$ is either a leaf or all of $\children(z)$ must be present in the tree $\cT$. 

We will also use the following additional notation.  For each $i \in \{0, \dots, T\}$, let $\cT_i$ be the set of all nodes at level $i$ of tree $\cT$. Moreover, we use $\leaves(\cT)$ to denote the set of all leaves of $\cT$ and $\leaves(\cT_i)$ to denote the set of all leaves at level $i$. For a node $z \in \cT$, we use $\level(z) \in \{0, \dots, T\}$ to denote its level 
and $\children(z)$ to denote its children.

\para{Representatives}
Given a point $x \in \bB^d$, its \emph{potential representatives} are the $T + 1$ nodes $\Psi_T(x), \Psi_{T-1}(\Psi_T(x)), \dots, \Psi_0(\cdots(\Psi_{T}(x))\cdots)$
in the complete net tree. The \emph{representative} of $x$ in a net tree $\cT$, denoted by $\Psi_{\cT}(x)$, is the unique leaf of $\cT$ that is a potential representative of $x$. Note that $\Psi_{\cT}: \bB^d \to \cT$ induces a partition of points in $\bB^d$ based on the leaves representing them.

For a weighted point set $\bS$, its frequency at a leaf $z$ of $\cT$, denoted by $f_z$, is defined as the total weights of points in $\bS$ whose representative is $z$, i.e., $f_z = w_{\bS}(\Psi_{\cT}^{-1}(z))$.


\para{Representative Point Set} 
Let $\tf$ be a frequency oracle on domain $\cL_1 \cup \cdots \cup \cL_t$. The above partitioning scheme yields a simple way to construct a weighted point set from a net tree $\cT$. Specifically, the \emph{representative point set} of a tree $\cT$ (and frequency oracle $\tf$), denoted by $\bS_{\cT}$, is the weighted point set where each leaf $z \in \leaves(\cT)$ receives a weight of $\tf_{z}$. We stress that $\bS_{\cT}$ depends on the frequency oracle; however we discard it in the notation for readability.

\subsection{Basic Properties of Net Trees}

Before we proceed, we list a few important properties of net trees that both illustrate their usefulness and will guide our algorithms. The first property is that the potential representation of point $x$ at level $i$ cannot be too far from $x$:
\begin{lemma}[Distance Property] \label{lem:distance}
For any $x \in \bB^d$ and $i \in \{0, \dots, T\}$, we have $\|x - \Psi_i(\cdots(\Psi_{T}(x))\cdots)\| \leq 2^{1 - i}$.
\end{lemma}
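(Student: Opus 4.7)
The plan is to chain together the covering-radius guarantees of the individual nets $\cL_i, \cL_{i+1}, \dots, \cL_T$ by the triangle inequality, and then bound the resulting geometric sum.

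First, I would name the intermediate points. Set $y_{T+1} := x$ and, for $j = T, T-1, \dots, i$, define $y_j := \Psi_j(y_{j+1})$, so that $y_i = \Psi_i(\cdots(\Psi_T(x))\cdots)$ is exactly the quantity in the statement. By the very definition of $\Psi_j$ as the nearest-point map into $\cL_j$, and since $\cL_j$ has covering radius $\rho_j = 2^{-j}$, we have
\begin{align*}
\|y_{j+1} - y_j\| \;=\; \|y_{j+1} - \Psi_j(y_{j+1})\| \;\leq\; \rho_j \;=\; 2^{-j},
\end{align*}
for every $j \in \{i, i+1, \dots, T\}$. (One small thing to justify: the covering-radius bound for $\cL_j$ is stated for points in $\bB^d$; however $y_{j+1}$ lies either in $\bB^d$ when $j+1 = T+1$, or in $\cL_{j+1} \subseteq \bB^d$ otherwise, so the bound applies in both cases.)

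Next I would apply the triangle inequality along the chain $x = y_{T+1}, y_T, \ldots, y_i$ and sum the resulting geometric series:
\begin{align*}
\|x - y_i\| \;\leq\; \sum_{j=i}^{T} \|y_{j+1} - y_j\| \;\leq\; \sum_{j=i}^{T} 2^{-j} \;<\; 2 \cdot 2^{-i} \;=\; 2^{1-i},
\end{align*}
which gives exactly the claim.

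I do not expect any real obstacle here; the only subtlety is making sure that the covering-radius inequality is legitimately applied at each step (which is why the intermediate points $y_{j+1}$ need to lie in $\bB^d$, and in particular in the domain on which $\rho(\cL_j)$ is defined). The rest is a one-line triangle inequality plus a geometric sum.
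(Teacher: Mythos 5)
Your proof is correct and is essentially identical to the paper's: both chain the triangle inequality along the intermediate points, bound each step by the covering radius $2^{-j}$, and sum the geometric series to get $2^{1-i}$.
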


\begin{proof}
Using the triangle inequality, we can bound
$\|x - \Psi_i(\cdots(\Psi_{T}(x))\cdots)\|$ above by  $\|\Psi_T(x) - x\| + \sum_{j=i}^{T-1} \|\Psi_j(\cdots(\Psi_{T}(x))\cdots) - \Psi_{j + 1}(\cdots(\Psi_{T}(x))\cdots)\|$
Since the covering radius of $\cL_j$ is $2^{-j}$, this latter term is at most $2^{-T} + \sum_{j=i}^{T - 1} 2^{-j} \leq 2^{1 - i}$ as desired.
\end{proof}
Second, we show that the number of children is small.
\begin{lemma}[Branching Factor] \label{lem:branching-factor}
For any $z \in \cL_0 \cup \cdots \cup \cL_{T - 1}$, we have $|\children(z)| \leq B := \left(1 + 2/\gamma\right)^d$.
\end{lemma}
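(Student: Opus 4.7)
The plan is to prove the bound by a standard volume (packing) argument. Fix $z \in \cL_i$ with $i \in \{0,\dots,T-1\}$ and let $z' \in \cL_{i+1}$ be any child of $z$. By definition of the complete net tree, $\Psi_i(z') = z$, which by definition of $\Psi_i$ means that $z$ is the closest point in $\cL_i$ to $z'$. Since $\cL_i$ has covering radius $\rho_i = 2^{-i}$ (and this also holds for $i=0$ by our convention $\rho_0 = 1$), this gives $\|z' - z\| \leq 2^{-i}$. So every child of $z$ lies in the closed ball $\bB^d(z, 2^{-i})$.

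Next, I would use the packing property of $\cL_{i+1}$: by definition of the packing radius $\gamma_{i+1} = \gamma/2^{i+1}$, the open balls of radius $\gamma_{i+1}$ around distinct points of $\cL_{i+1}$ (and hence around distinct children of $z$) are pairwise disjoint. Moreover, each such ball is contained in $\bB^d(z, 2^{-i} + \gamma_{i+1})$. Comparing $d$-dimensional Lebesgue volumes,
\begin{align*}
|\children(z)| \cdot \mathrm{vol}\bigl(\bB^d(\mathbf{0}, \gamma_{i+1})\bigr) \;\leq\; \mathrm{vol}\bigl(\bB^d(\mathbf{0}, 2^{-i} + \gamma_{i+1})\bigr).
\end{align*}
Since the volume of a radius-$r$ Euclidean ball in $\R^d$ scales as $r^d$, this simplifies to
\begin{align*}
|\children(z)| \;\leq\; \left(\frac{2^{-i} + \gamma/2^{i+1}}{\gamma/2^{i+1}}\right)^d \;=\; \left(1 + \frac{2}{\gamma}\right)^d \;=\; B,
\end{align*}
as required.

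There is no real obstacle here; the only thing to be a bit careful about is the boundary case $i = 0$, where $\cL_0 = \{\bzero\}$ and every point of $\cL_1$ is vacuously a child of $\bzero$. In that case the same bound goes through because $\cL_1 \subseteq \bB^d = \bB^d(\bzero, 1) = \bB^d(\bzero, \rho_0)$, so the radius used in the enclosing ball is still $\rho_0 = 2^{0} \cdot 2^{-0}$ and the ratio $\rho_i / \gamma_{i+1} = 2/\gamma$ is unchanged. Thus a single uniform computation handles all levels $i \in \{0, \dots, T-1\}$.
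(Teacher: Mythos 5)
Your proof is correct and follows essentially the same route as the paper: bound $\|z'-z\| \leq 2^{-i}$ via the covering radius of $\cL_i$, note that the children form a $(\gamma \cdot 2^{-i-1})$-packing in $\cL_{i+1}$, and conclude with the standard volume comparison giving $\left(1 + 2/\gamma\right)^d$. The only difference is that you spell out the volume argument (and the harmless $i=0$ boundary case) explicitly, which the paper leaves implicit.
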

\begin{proof}
Let $i = \level(z)$.
Since each $z' \in \children(z)$ has $z$ as its closest point in $\cL_i$, we have $\|z' - z\| \leq 2^{-i}$. Furthermore, since $\children(z) \subseteq \cL_{i + 1}$, $\children(z)$ form a $(\gamma \cdot 2^{-i-1})$-packing. As a result, a standard volume argument
implies that 
\[
|\children(z)| \leq 
\left(1 + \frac{2^{-i}}{\gamma \cdot 2^{-i-1}}\right)^d = \left(1 + 2/\gamma\right)^d.
\qedhere
\]
\end{proof}
We note that when applying dimensionality reduction in the next section, we will have\footnote{Since we will pick $\gamma > 0$ to be a constant, we hide its dependency in asymptotic notations throughout this section.} $d = O(\log k)$, meaning $B = k^{O(1)}$.

The last property
is an upper bound on the optimal transport cost from a given weighted point set to a representative point set created via a net tree $\cT$. Recall from \Cref{lem:opt-transport-to-coreset} that this implies a certain coreset guarantee for the constructed representative point set, a fact we will repeatedly use in the subsequent steps of the proof. 
The bound on the transport cost is stated in its general form below.

\begin{lemma} \label{lem:tree-transport}
For a weighted point set $\bS$ and a net tree $\cT$, let $f_z$ denote the frequency of $\bS$ on a leaf $z \in \leaves(\cT)$, i.e., $f_z = w_{\bS}(\Psi_{\cT}^{-1}(z))$. Let $S_{\cT}$ denote the representative point set constructed from $\cT$ and frequency oracle $\tf$. Then, 
\begin{align*}
\MT(\Psi_{\cT}, \bS, \bS_{\cT}) \leq 
\sum_{z \in \leaves(\cT)} \left( f_z \cdot (4\rho_{\level(z)}^2) + 
|f_z - \tf_z| \right).
\end{align*}
\end{lemma}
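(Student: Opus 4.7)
The plan is to unfold the definition of generalized transport cost, handle each of the two summands separately, and bound each term by one of the two summands on the right-hand side.

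First I would write
\begin{align*}
\MT(\Psi_{\cT}, \bS, \bS_{\cT})
&= \underbrace{\sum_{y \in \bB^d} w_{\bS}(y) \cdot \|\Psi_{\cT}(y) - y\|^2}_{(\mathrm{I})}
+ \underbrace{\sum_{x \in \bB^d} |w_{\bS}(\Psi_{\cT}^{-1}(x)) - w_{\bS_{\cT}}(x)|}_{(\mathrm{II})}.
\end{align*}
For (I), the idea is to group the sum by the leaf $z = \Psi_{\cT}(y)$ the point $y$ maps to, giving
$(\mathrm{I}) = \sum_{z \in \leaves(\cT)} \sum_{y \in \Psi_{\cT}^{-1}(z)} w_{\bS}(y) \cdot \|z - y\|^2$.
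By definition of $\Psi_{\cT}$, the leaf $z$ is one of the potential representatives of $y$, i.e., $z = \Psi_{\level(z)}(\cdots(\Psi_T(y))\cdots)$. Applying \Cref{lem:distance} with $i = \level(z)$ then gives $\|z - y\| \leq 2^{1 - \level(z)} = 2 \rho_{\level(z)}$, so the inner sum is bounded by $f_z \cdot 4\rho_{\level(z)}^2$, where I use $f_z = w_{\bS}(\Psi_{\cT}^{-1}(z))$. Summing over $z$ yields the first summand on the right-hand side.

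For (II), the key observation is that $\Psi_{\cT}$ maps $\bB^d$ into $\leaves(\cT)$ only, so for any $x \notin \leaves(\cT)$ we have $\Psi_{\cT}^{-1}(x) = \emptyset$ and also $w_{\bS_{\cT}}(x) = 0$ by the definition of the representative point set; such $x$ contribute $0$. For $x = z \in \leaves(\cT)$, we have $w_{\bS}(\Psi_{\cT}^{-1}(z)) = f_z$ by definition of $f_z$, and $w_{\bS_{\cT}}(z) = \tf_z$ by the definition of $\bS_{\cT}$. Hence $(\mathrm{II}) = \sum_{z \in \leaves(\cT)} |f_z - \tf_z|$, which matches the second summand.

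Combining the bounds on (I) and (II) gives the lemma. There is no real obstacle here beyond careful bookkeeping; the main subtlety is to verify that $\Psi_{\cT}^{-1}(x)$ is empty for non-leaves (so that (II) collapses to a sum over leaves) and that the distance bound from \Cref{lem:distance} applies directly to each leaf at its own level, which is exactly the quantity $\rho_{\level(z)}$ appearing in the statement.
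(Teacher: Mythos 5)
Your proposal is correct and follows essentially the same route as the paper: expand the definition of $\MT(\Psi_{\cT}, \bS, \bS_{\cT})$, bound the squared-distance term via \Cref{lem:distance} (giving $\|y - z\| \leq 2\rho_{\level(z)}$ for each leaf $z$ at its own level), and identify the mass-difference term with $\sum_{z \in \leaves(\cT)} |f_z - \tf_z|$ using the construction of $\bS_{\cT}$. The extra bookkeeping you include (non-leaf points of $\bB^d$ contributing zero to the second term) is a correct and slightly more explicit version of what the paper states directly.
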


\begin{proof}
Recall by definition that $\MT(\Psi_{\cT}, \bS, \bS_{\cT})$ is equal to $\sum_{z \in \leaves(\cT)} |w_{\bS}(\Psi_{\cT}^{-1}(z)) - w_{\bS_{\cT}}(z)|+ \sum_{y \in \bB^d} w_{\bS}(y) \cdot \|\Psi_{\cT}(y) - y\|^2$. By construction of $\bS_{\cT}$, the first term is equal to $\sum_{z \in \leaves(\cT)} |f_z - \tf_z|$. Using \Cref{lem:distance}, the second term is bounded above by
\begin{align*}
\sum_{z \in \leaves(\cT)} \sum_{y \in \Psi_{\cT}^{-1}(z)} w_{\bS}(y) \cdot (2\rho_{\level(z)})^2 = \sum_{z \in \leaves(\cT)} f_z \cdot (4\rho_{\level(z)}^2).
\end{align*}
Combining the two bounds completes the proof.
\end{proof}

\subsection{Building the Net Tree}

Although we have defined net trees and shown several of their main properties, we have not addressed how a net tree should be constructed from a given (approximate) frequency oracle  (on $\cL_0 \cup \cdots \cup \cL_T$). \Cref{lem:tree-transport} captures the tension arising when constructing the tree: if we decide to include too many nodes in the tree, then all of the nodes contribute to the additive error, i.e., the second term in the bound; on other hand, if we include too few nodes, then many leaves will be at a small level, resulting in a large first term. In this section, we give a tree building algorithm that balances these two errors, and prove its guarantees.

We assume that each approximate frequency $\tf_{z}$ is non-negative.\footnote{This is w.l.o.g. as we can always take $\max(\tf_{z}, 0)$ instead.} 
The tree construction  (\Cref{alg:noisy-tree-construction}) itself is simple; we build the tree in a top down manner, starting with small levels and moving on to higher levels. At each level, we compute a \emph{threshold} $\tau$ on the number of nodes to expand.
We then only expand $\tau$ nodes with maximum approximate frequencies. The algorithm for computing this threshold $\tau$ (\algthreshold) and its properties will be stated next.

\begin{algorithm}[h]
\small
\caption{Building the Net Tree.} \label{alg:noisy-tree-construction}
\textbf{Oracle Access:} Frequency oracle 
$\tf$ on $\cL_0 \cup \cdots \cup \cL_T$ \\
\textbf{Parameters:} $k, a, \Gamma \in \N$

\begin{algorithmic}[1]
\Procedure{$\buildtree^{\tf}$}{}
\State $\cT \leftarrow$ root node $z = \bzero$ at level 0
\For{$i = 0, \dots, T - 1$}
\SubState $z^1_i, \dots, z^{m_i}_i \leftarrow$ level-$i$ nodes sorted in non-decreasing order of $\tf_{z}$
\SubState $\tau_i \leftarrow \algthreshold^{\tf}_{k, a, \Gamma}(z^1_i, \dots, z^{m_i}_i)$
\SubFor{$j = 0, \dots, \tau_i - 1$}
\SubSubState Add $\children(z^{m_i - j}_i)$ to $\cT$
\label{line:addchildren}
\EndFor
\EndFor
\State \Return $\cT$
\EndProcedure
\end{algorithmic}
\end{algorithm}

\para{Computing the Threshold}
Before we describe the threshold computation algorithm, we provide some intuition.  Recall \Cref{lem:tree-transport}, which gives an upper bound on the optimal transport cost and \Cref{lem:opt-transport-to-coreset}, which relates this quantity to the quality of the coreset. At a high level, these two lemmas allow us to stop branching as soon as the bound in \Cref{lem:tree-transport} becomes much smaller than $\opt^k_{\bS}$.  Of course, the glaring issue in doing so is that we do \emph{not} know $\opt^k_{\bS}$! It turns out however that we can give a \emph{lower bound} on $\opt^k_{\bS}$ based on the tree constructed so far. To state this lower bound formally, we first state a general lower bound on the \kmeans cost, without any relation to the tree.

\begin{lemma} \label{lem:opt-lb-generic-weighted}
Let $a, b, k \in \N$ and $r \in \R_{\geq 0}$.
Let $\bS$ be a weighted point set, and $T_1, \dots, T_{ka + b} \subseteq \R^d$ be any $ka + b$ disjoint sets such that for any point $c \in \R^d$ it holds that $|\{i \in [ka + b] \mid \bB^d(c, r) \cap T_i \ne \emptyset\}| \leq a$. Then, $\opt^k_{\bS} \geq r^2 \cdot \bottom_b(w_{\bS}(T_1), \ldots, w_{\bS}(T_{ka + b}))$.
\end{lemma}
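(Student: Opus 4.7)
The plan is a direct pigeonhole argument over the structure of any candidate solution. Fix an arbitrary set $\bC = \{c_1, \dots, c_k\}$ of $k$ centers and define the ``covered'' index set
\begin{align*}
I := \{i \in [ka+b] \;:\; \bB^d(c_j, r) \cap T_i \neq \emptyset \text{ for some } j \in [k]\}.
\end{align*}
The hypothesis on the $T_i$'s says that each center $c_j$ contributes at most $a$ indices to this union, so $|I| \leq k \cdot a$ and therefore $|[ka+b] \setminus I| \geq b$.

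Next, I would argue that the uncovered sets contribute a lot to the cost. For every $i \notin I$ and every $x \in T_i$, by definition of $I$ the point $x$ lies outside the closed ball $\bB^d(c_j, r)$ for each $j$, so $\min_{c \in \bC} \|x - c\| > r$, giving $(\min_{c \in \bC} \|x - c\|)^2 \geq r^2$. Summing over $x$ with weights $w_{\bS}(x)$ and over $i \notin I$ yields
\begin{align*}
\cost_{\bS}(\bC) \;\geq\; \sum_{i \notin I} \sum_{x \in T_i} w_{\bS}(x) \cdot r^2 \;=\; r^2 \sum_{i \notin I} w_{\bS}(T_i).
\end{align*}
Here I am using that the sets $T_1, \dots, T_{ka+b}$ are disjoint, so these contributions are not double-counted.

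To finish, since $|[ka+b] \setminus I| \geq b$ and all weights $w_{\bS}(T_i)$ are non-negative, the sum over any superset of $b$ indices is at least the sum over the $b$ smallest values, i.e., $\sum_{i \notin I} w_{\bS}(T_i) \geq \bottom_b(w_{\bS}(T_1), \dots, w_{\bS}(T_{ka+b}))$. Taking the minimum over all $\bC$ of size $k$ gives the claimed bound on $\opt^k_{\bS}$. There is no real obstacle here — the argument is pigeonhole plus non-negativity; the only point to verify carefully is that the closed-ball hypothesis gives a strict ``$> r$'' distance for uncovered points, which then survives the squaring.
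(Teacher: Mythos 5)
Your proof is correct and follows essentially the same route as the paper: a pigeonhole count showing at least $b$ of the sets are at distance at least $r$ from every center, each contributing $r^2 \cdot w_{\bS}(T_i)$ to the cost, then bounding below by $\bottom_b$. The paper's argument is just a terser version of exactly this.
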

\begin{proof}
Consider any set $\bC$ of $k$ candidate centers. From the assumption, there must be at least $b$ subsets $T_i$'s such that $d(\bC, T_i) \geq r$; for such a subset, its contribution to the \kmeans objective is at least $r^2 \cdot w_{\bS}(T_i)$. As a result, the total \kmeans objective is at least $r^2 \cdot \bottom_b(w_{\bS}(T_1), \dots, w_{\bS}(T_{ka + b}))$.
\end{proof}
This lets us prove a lower bound on the \kmeans objective for net trees:
\begin{corollary} \label{cor:opt-lb}
For any $\theta > 0$, let $r = \theta \cdot 2^{- i}$ and $a = \lceil \left(1 + (2 + \theta)/\gamma\right)^d \rceil$. Let 
$b \in \N$.
Suppose that there exist $(ka + b)$ level-$i$ nodes $\tz^1, \dots, \tz^{ka + b}$ in a net tree $\cT$. Furthermore, let $\bS$ be any multiset and $f$ the frequency of $\bS$. Then, we have $\opt^k_{\bS \cap \Psi_{\cT}^{-1}(\{\tz^1, \dots, \tz^{ka + b}\})} \geq r^2 \cdot \bottom_b\left(f_{\tz^1}, \dots, f_{\tz^{ka + b}}\right)$.
\end{corollary}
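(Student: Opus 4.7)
The plan is to reduce the statement directly to \Cref{lem:opt-lb-generic-weighted} by taking $T_j := \Psi_{\cT}^{-1}(\tz^j)$ for $j \in [ka+b]$, working with the weighted point set $\bS' := \bS \cap \Psi_{\cT}^{-1}(\{\tz^1, \dots, \tz^{ka+b}\})$. The sets $T_1, \dots, T_{ka+b}$ are pairwise disjoint because $\Psi_{\cT}$ is a function and the $\tz^j$'s are distinct leaves of $\cT$. By construction, $w_{\bS'}(T_j) = w_{\bS}(\Psi_{\cT}^{-1}(\tz^j)) = f_{\tz^j}$, so plugging into \Cref{lem:opt-lb-generic-weighted} will immediately give the desired bound $\opt^k_{\bS'} \geq r^2 \cdot \bottom_b(f_{\tz^1}, \dots, f_{\tz^{ka+b}})$, provided the geometric covering hypothesis of the lemma is verified with this choice of $r$ and $a$.

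The main step is therefore to show that for any $c \in \R^d$, the ball $\bB^d(c, r)$ meets at most $a$ of the $T_j$'s. Fix $c$ and suppose $y \in \bB^d(c, r) \cap T_j$, i.e., $y$ has representative $\tz^j$ in $\cT$. Since $\tz^j$ is at level $i$, \Cref{lem:distance} yields $\|y - \tz^j\| \leq 2^{1-i}$, so by the triangle inequality $\|\tz^j - c\| \leq r + 2^{1-i}$. Hence every such $\tz^j$ lies in the ball $\bB^d(c, R)$ with $R := r + 2^{1-i}$.

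Now I invoke a standard volume packing argument on the net $\cL_i$, whose packing radius is $\gamma_i = \gamma/2^i$: the open balls of radius $\gamma_i$ around the $\tz^j$'s are disjoint and contained in $\bB^d(c, R + \gamma_i)$, so the number of such $\tz^j$'s is at most
\begin{equation*}
\left(1 + \frac{R}{\gamma_i}\right)^d = \left(1 + \frac{r + 2^{1-i}}{\gamma \cdot 2^{-i}}\right)^d = \left(1 + \frac{\theta + 2}{\gamma}\right)^d,
\end{equation*}
where we used $r = \theta \cdot 2^{-i}$. Taking the ceiling gives exactly $a$, verifying the hypothesis.

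I do not expect any serious obstacle; the only thing to be careful about is which distance bound to use when relating $y$ to its representative $\tz^j$. Using \Cref{lem:distance} (rather than just the covering radius $\rho_i$ of a single level) is what produces the factor $2^{1-i}$ and hence the ``$+2/\gamma$'' term in the packing bound, giving the precise constant $a = \lceil(1 + (2+\theta)/\gamma)^d\rceil$ claimed.
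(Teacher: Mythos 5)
Your proposal is correct and follows essentially the same route as the paper's proof: both reduce to \Cref{lem:opt-lb-generic-weighted} with $T_j = \Psi_{\cT}^{-1}(\tz^j)$, use \Cref{lem:distance} to place any hit node $\tz^j$ within distance $r + 2^{1-i} = (2+\theta)2^{-i}$ of the center, and then apply the packing/volume argument at radius $\gamma \cdot 2^{-i}$ to bound the number of such nodes by $a$.
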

\begin{proof}
Consider any center $c \in \R^d$.
Recall from \Cref{lem:distance} that $\Psi_{\cT}^{-1}(\tz^j) \subseteq \bB^d(\tz^j, 2^{1 - i})$. In other words, if $\bB^d(c, r) \cap (S \cap \Psi_{\cT}^{-1}(\tz^j)) \ne \emptyset$, we must have 
\begin{align} \label{eq:hit-center-is-close}
\|c - \tz^j\| \leq r + 2^{1 - i} = (2 + \theta) 2^{-i},
\end{align}
implying that $\bB^d(\tz^j, \gamma \cdot 2^{-i}) \subseteq \bB^d(c, (2 + \theta) 2^{-i} + \gamma \cdot 2^{-i})$.

Furthermore, since $\tz^1, \dots, \tz^{ka + b} \subseteq \cL_i$ form a $(\gamma \cdot 2^{-i})$-packing, the balls $\bB^d(\tz^j, \gamma \cdot 2^{-i})$ are disjoint. This means that any point $c$ satisfies~\eqref{eq:hit-center-is-close} for at most 
\begin{align*}
\left(1 + \frac{(2 + \theta) \cdot 2^{-i}}{\gamma \cdot 2^{-i}}\right)^d \leq a.
\end{align*}
many $j$'s. Applying \Cref{lem:opt-lb-generic-weighted} completes the proof. 
\end{proof}
Thus, we can use $r^2 \cdot \bottom_b\left(f_{\tz^1}, \dots, f_{\tz^{ka + b}}\right)$ as a ``running lower bound'' on $\opt_{\bS}^k$. Still, we have to be careful as the additive error introduced will add up over all the levels of the tree; this can be an issue since we will select the number of levels to be as large as $\Theta(\log n)$. To overcome this, we make sure that the additive error introduced at each level is only ``charged'' to the optimum of the weighted point set corresponding to \emph{leaves} in that level. This ensures that there is no double counting in the error.

Below we formalize our cutoff threshold computation algorithm and prove its main property, which will be used later to provide the guarantees on the tree construction.

\begin{minipage}{\columnwidth}
\begin{algorithm}[H]
\small
\caption{Computing the Threshold.} \label{alg:cutoff-computation}
\textbf{Oracle Access: } Frequency oracle $\tf$ on $\cL_0 \cup \cdots \cup \cL_T$ \\
\textbf{Parameters:} $k, a, \Gamma \in \N$ \\
\textbf{Inputs:} Nodes $z^1, \dots, z^m$ from the same level of a net tree

\begin{algorithmic}[1]
\Procedure{$\algthreshold^{\tf}_{k, a, \Gamma}$}{$z^1, \dots, z^m$}
\For{$j \in [\min\{\Gamma, \lfloor m / ka\rfloor\}]$}
\SubIf{$\sum_{i=1}^{m-(j-1)ka} f_{z^i} \leq 2 \cdot \sum_{i=1}^{m - jka} f_{z^i}$} \label{step:check-threshold}
\SubSubState \Return $(j - 1)ka$ \label{step:return-threshold-smaller-than-opt}
\EndIf
\EndFor
\State \Return $\min\{m, \Gamma ka\}$ \label{step:return-threshold-}
\EndProcedure
\end{algorithmic}
\end{algorithm}
\end{minipage}

\begin{lemma} \label{lem:potential-func-noisy}
Suppose that $\tf$ is $\eta$-accurate at each of $z^1, \dots, z^m$ and suppose $\tf_{z^1} \leq \cdots \leq \tf_{z^m}$. Then, \textsc{\algthreshold}$^{\tf}_{a, \Gamma}(z^1, \dots, z^m)$ outputs $\tau$ satisfying
\[
\sum_{i=1}^{m - \tau} \tf_{z^i} \leq 2 \cdot 
\left(\sum_{i=1}^{m - \tau - ka} \tf_{z^i} \right)
+ \frac{n + m \eta}{2^\Gamma}.
\]
\end{lemma}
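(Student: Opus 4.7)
The plan is a case analysis on where \algthreshold terminates, and to verify the inequality in each case.

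First suppose the algorithm returns at line \ref{step:return-threshold-smaller-than-opt} for some iteration $j$. Then $\tau = (j-1)ka$, so $m - \tau = m - (j-1)ka$ and $m - \tau - ka = m - jka$. The test that triggered the return is literally
\[
\sum_{i=1}^{m - (j-1)ka} \tf_{z^i} \;\leq\; 2 \cdot \sum_{i=1}^{m - jka} \tf_{z^i},
\]
which is the claimed bound with the additive term dropped. (I am treating the $f$ in line \ref{step:check-threshold} of \Cref{alg:cutoff-computation} as $\tf$, since the procedure only has oracle access to $\tf$.)

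Now suppose the algorithm falls through the for-loop and returns $\tau = \min\{m, \Gamma ka\}$ at line \ref{step:return-threshold-}. If $\lfloor m / ka \rfloor < \Gamma$, then $\tau = m$ and the left-hand side of the desired inequality is an empty sum, so the bound is trivial. Otherwise $\Gamma \leq \lfloor m/ka \rfloor$ and $\tau = \Gamma ka$. Since line \ref{step:return-threshold-smaller-than-opt} was not triggered at any $j \in [\Gamma]$, we have the strict inequality
\[
\sum_{i=1}^{m - jka} \tf_{z^i} \;<\; \tfrac{1}{2} \sum_{i=1}^{m - (j-1)ka} \tf_{z^i} \qquad \text{for every } j \in [\Gamma].
\]
Chaining these $\Gamma$ halvings telescopes to $\sum_{i=1}^{m - \Gamma ka} \tf_{z^i} < 2^{-\Gamma} \sum_{i=1}^{m} \tf_{z^i}$.

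It remains to bound $\sum_{i=1}^m \tf_{z^i}$ by $n + m\eta$. Using $\eta$-accuracy, $\tf_{z^i} \leq f_{z^i} + \eta$, so $\sum_{i=1}^m \tf_{z^i} \leq \sum_{i=1}^m f_{z^i} + m\eta$. Because $z^1,\dots,z^m$ all lie at a single level of a net tree and the representative map $\Psi_{\cT}$ is single-valued, the preimages $\Psi_\cT^{-1}(z^i)$ are disjoint subsets of $\bS$; consequently $\sum_{i=1}^m f_{z^i} = w_{\bS}\!\left(\bigcup_i \Psi_{\cT}^{-1}(z^i)\right) \leq |\bS| = n$. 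Substituting gives $\sum_{i=1}^{m-\tau} \tf_{z^i} < (n + m\eta)/2^\Gamma$, which is stronger than the stated bound (and so trivially implies $\sum_{i=1}^{m-\tau} \tf_{z^i} \leq 2 \sum_{i=1}^{m-\tau-ka}\tf_{z^i} + (n+m\eta)/2^\Gamma$ since the sums are nonnegative).

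There is no deep obstacle here; the only delicate points are recognising that repeated failure of the halving test forces geometric decay, correctly handling the edge case $\tau = m$ where the LHS vanishes, and justifying $\sum_i f_{z^i} \leq n$ using the same-level/disjoint-preimage property of net trees.
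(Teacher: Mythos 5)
Your proof is correct and follows essentially the same route as the paper's: the same case split on where \algthreshold returns, the same chaining of the $\Gamma$ failed halving tests to get geometric decay, and the same bound $\sum_i \tf_{z^i} \leq n + m\eta$ via $\eta$-accuracy and the fact that each point maps to at most one node per level. Your reading of the $f$ in the check of \Cref{alg:cutoff-computation} as $\tf$ matches the paper's intent (its own proof chains the inequalities in terms of $\tf$), so no gap there.
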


\begin{proof}
If the algorithm returns on line~\ref{step:return-threshold-smaller-than-opt}, then the inequality trivially holds due to the check on line~\ref{step:check-threshold} before. Furthermore, the inequality also trivially holds if $\tau = m$, as the left hand side is simply zero. As a result, we may assume that the algorithm returns $\tau = \Gamma k a < m$.
This means the condition on line~\ref{step:check-threshold} does \emph{not} hold for all $j \in [\Gamma]$. From this, we can conclude that
\begin{align*}
\tf_{z^1} + \cdots + \tf_{z^{m}} 
> 2\left(\tf_{z^1} + \cdots + \tf_{z^{m - ka}}\right)
>
\cdots
> 2^{\Gamma} \left(\tf_{z^1} + \cdots + \tf_{z^{m - \Gamma ka}}\right).
\end{align*}
Finally, since the frequency oracle is $\eta$-accurate at each of $z^1, \dots, z^m$, we have
\begin{align*}
\tf_{z^1} + \cdots + \tf_{z^{m}} \leq m\eta + f_{z^1} + \cdots + f_{z^m} \leq m\eta + n,
\end{align*}
where the latter inequality follows from the fact that each input point is mapped to at most one node at level $i$. Combining the above two inequalities yields the desired bound.
\end{proof}
We remark that $\tf_{z^1} + \cdots + \tf_{z^{m - \tau}}$ is the approximate frequency of points that will be mapped to the leaves at this level, which in turn gives the upper bound on the transport cost (in \Cref{lem:tree-transport}); whereas $\tf_{z^1} + \cdots + \tf_{z^{m - \tau - ka}}$ indeed governs the lower bound on the optimum \kmeans objective in \Cref{cor:opt-lb}. Intuitively, \Cref{lem:potential-func-noisy} thus allows us to ``charge'' the transport cost at each level to the lower bound on the optimum of the leaves at that level as desired. These arguments will be formalized in the next subsection.

\para{Putting it Together}
The main property of a net tree $\cT$ output by our tree construction algorithm is that its representative point set is a good coreset of the underlying input. This, plus some additional properties, is stated below.

\begin{theorem} \label{thm:net-tree-main-guarantee}
Let $\xi \in (0, 1)$.  Suppose that the frequency oracle $\tf$ is $\eta$-accurate on every element queried by the algorithm.
Let $\cT$ be the tree output by Algorithm~\ref{alg:noisy-tree-construction} where $\Gamma = \lceil \log n\rceil, T = \lceil 0.5 \log n \rceil, \theta = 8\sqrt{\frac{1 + 2/\xi}{\xi}}$, and $a$ be as in \Cref{cor:opt-lb}.  Let 
$N_T = 2^{O_{\xi}(d)} \cdot k \cdot (\log^2 n)$.  Then, 
\begin{itemize}[nosep]
\item The number of nodes in $\cT$ is $N_T$. Furthermore, this holds regardless of the frequency oracle accuracy.
\item $\MT(\Psi_{\cT}, \bS, \bS_{\cT}) \leq \frac{\xi}{8(1 + 2/\xi)} \cdot \opt^k_{\bS} + \eta \cdot O(N_T)$.
\item $S_{\cT}$ is a $\left(\xi, \eta \cdot O(N_T) 
\right)$-coreset of $S$.
\end{itemize}
Moreover, the tree construction algorithm runs in time $\poly(N_T)$ multiplied by the time to query $\tf$.
\end{theorem}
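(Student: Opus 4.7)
I would establish the three bullets in order and then read off the runtime, expecting the transport-cost bound to be the main obstacle. For the \emph{node count}, the argument is purely combinatorial: at every level $i \leq T-1$, $\algthreshold$ returns $\tau_i \leq \Gamma k a$, so at most $\Gamma k a$ level-$i$ nodes are expanded, each producing at most $B = (1+2/\gamma)^d$ children by \Cref{lem:branching-factor}. Since $a = (1+(2+\theta)/\gamma)^d = 2^{O_\xi(d)}$, $B = 2^{O_\xi(d)}$, and $\Gamma = O(\log n)$, each level has at most $2^{O_\xi(d)} k \log n$ nodes; summing over the $T+1 = O(\log n)$ levels yields $|\cT| \leq N_T$. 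This bound uses only $\tau_i \leq \Gamma k a$ and so holds regardless of the accuracy of $\tf$.

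The heart of the proof is the \emph{transport cost} bound. Using \Cref{lem:tree-transport}, I split the estimate into the noise term $\sum_{z \in \leaves(\cT)} |f_z - \tf_z| \leq \eta \cdot N_T$ and the distance term $\sum_{i=0}^{T} 4 \rho_i^2 \sum_{z \in \leaves(\cT_i)} f_z$, which I handle level by level. At level $i$, the leaves are exactly the $m_i - \tau_i$ nodes with smallest $\tf$-values, so \Cref{lem:potential-func-noisy} yields $\sum_{z \in \leaves(\cT_i)} \tf_z \leq 2 \sum_{j=1}^{m_i - \tau_i - ka} \tf_{z^j_i} + (n + m_i \eta)/2^\Gamma$. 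Converting $\tf$ back to $f$ costs $O(\eta m_i)$ per sum, and $(n + m_i \eta)/2^\Gamma = O(1 + m_i \eta / n)$ by the choice $\Gamma = \lceil \log n \rceil$; these error terms aggregate to $\eta \cdot O(N_T) + O(\log n)$ across all levels and will later be absorbed into the advertised additive term. The crucial step is to invoke \Cref{cor:opt-lb} with $r = \theta \cdot 2^{-i}$ and $b = m_i - \tau_i - ka$, which gives $(\theta \rho_i)^2 \cdot \sum_{j=1}^{m_i - \tau_i - ka} f_{z^j_i} \leq \opt^k_{\bS_i^*}$, where $\bS_i^*$ denotes $\bS$ restricted to the $\Psi_\cT$-preimages of the $ka + b = m_i - \tau_i$ smallest-frequency level-$i$ nodes. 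Multiplying by $4\rho_i^2$ then bounds the level-$i$ distance contribution by $(8/\theta^2) \cdot \opt^k_{\bS_i^*}$. The main obstacle is the \emph{charging argument}: the $ka + b$ nodes named above are exactly the leaves at level $i$, so the supports of the various $\bS_i^*$ are disjoint (each point is assigned by $\Psi_\cT$ to a unique leaf, which lives at a unique level), which yields $\sum_i \opt^k_{\bS_i^*} \leq \opt^k_\bS$. Finally, $\theta = 8\sqrt{(1+2/\xi)/\xi}$ was chosen precisely so that $8/\theta^2 = \xi/(8(1+2/\xi))$, matching the claimed multiplicative constant.

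The \emph{coreset} claim is then immediate by feeding the transport bound into \Cref{lem:opt-transport-to-coreset} with $t = \eta \cdot O(N_T)$, the $O(\log n)$ slack being harmlessly absorbed into $N_T$. For the \emph{runtime}, sorting each level costs $O(N_T \log N_T)$; $\algthreshold$ needs $O(\Gamma)$ prefix-sum updates per level; enumerating $\children(z)$ takes $(1+2/\gamma)^{O(d)} = 2^{O_\xi(d)}$ via \Cref{lem:decodable-nets}; and we query $\tf$ on at most $N_T$ nodes. This gives an overall $\poly(N_T)$ cost multiplied by the per-query time of $\tf$, completing the theorem.
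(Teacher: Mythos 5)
Your proposal follows essentially the same route as the paper's proof: the node count via $\tau_i \le \Gamma k a$ together with the branching bound of \Cref{lem:branching-factor}; the transport bound via \Cref{lem:tree-transport}, handled level-by-level with \Cref{lem:potential-func-noisy} and \Cref{cor:opt-lb} and the disjoint-support charging $\sum_i \opt^k_{\bS_i^*} \le \opt^k_{\bS}$; then \Cref{lem:opt-transport-to-coreset} for the coreset claim and the decodable-net argument for the runtime. Two small points need patching. First, your per-level charging cannot be applied at the final level: \algthreshold is only invoked for $i = 0,\dots,T-1$, so at level $T$ (where every node of $\cL_T$ present in $\cT$ is a leaf) there is no inequality of the form given by \Cref{lem:potential-func-noisy}. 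This is precisely where the choice $T = \lceil 0.5\log n\rceil$ enters: since $\rho_T^2 = 2^{-2T} \le 1/n$, the level-$T$ distance term is at most $4(n + |\cT_T|\eta)/n$, which is absorbed into the $\eta \cdot O(N_T)$ additive term; your write-up never uses the value of $T$, which signals this omission. Second, \Cref{cor:opt-lb} lower-bounds $\opt^k_{\bS_i^*}$ by $r^2 \cdot \bottom_b$ of the \emph{true} frequencies, whereas your prefix sum $\sum_{j \le b} f_{z^j_i}$ ranges over the $b$ nodes with smallest \emph{approximate} frequencies $\tf$; these can differ, but under $\eta$-accuracy only by $O(b\eta)$, so the discrepancy is covered by the $O(\eta m_i)$ conversion budget you already allotted (the paper avoids the issue by converting $\bottom_b(\tf)$ to $\bottom_b(f)$ directly). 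With these two repairs your argument coincides with the paper's proof, including the constant bookkeeping $8/\theta^2 = \xi/(8(1+2/\xi))$.
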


\begin{proof}
\begin{itemize}
\item First, from the operation of \Cref{alg:cutoff-computation}, we have $\tau_i \leq \Gamma ka \leq 2^{O_{\xi}(d)} \cdot k \cdot (\log n)$. 

By how $\cT$ is constructed, the number of internal nodes is $\tau_1 + \cdots + \tau_T$, which is at most $T\Gamma ka \leq 2^{O_{\xi}(d)} \cdot k \cdot (\log n)^2$. Finally, since by \Cref{lem:branching-factor} the branching factor $B$ is also just $2^{O(d)}$
, the total number of nodes is indeed $2^{O_{\xi}(d)} \cdot k \cdot (\log^2 n)$.

\item Using \Cref{lem:tree-transport}, we have
\begin{align}
\MT(\Psi_{\cT}, \bS, \bS_{\cT}) &\leq  \left(\sum_{z \in \leaves(\cT)} |f_z - \tf_z|\right) + \sum_{z \in \leaves(\cT)} f_z \cdot (4\rho_{\level(z)}^2) \nonumber \\
&\leq 4 \left(\sum_{z \in \leaves(\cT)} \tf_z \cdot (\rho_{\level(z)}^2)\right) + 2^{O_{\xi}(d)} \cdot k \cdot (\log^2 n) \cdot \eta, \label{eq:transport-cost-expand-first-step}
\end{align}
where the second inequality follows from the bound on the number of nodes in the first item and the $\eta$-accuracy guarantee of the frequency oracle.

To bound the summation term on the right hand side of~\eqref{eq:transport-cost-expand-first-step}, we may rearrange it as
\begin{align} \label{eq:expand-by-level}
\sum_{z \in \leaves(\cT)} \tf_z \cdot (\rho_{\level(z)}^2) &= \left(\sum_{i \in [T - 1]} 2^{-2i} \cdot \left(\sum_{z \in \leaves(\cT_i)} \tf_z\right) \right) + 2^{-2T} \cdot \sum_{z \in \leaves(\cT_T)} \tf_{z}.
\end{align}
Using \Cref{lem:potential-func-noisy}, we may bound the first term above by
\begin{align*}
&\left(\sum_{i \in [T - 1]} 2^{-2i} \cdot \left(\sum_{z \in \leaves(\cT_i)} \tf_z\right) \right) \\
&\leq \sum_{i \in [T - 1]} 2^{-2i} \cdot \left(2\bottom_{m_i - \tau_i - ka}\left((\tf_z)_{z \in \leaves(\cT_i})\right) + (n + |\cT_i|\eta) / 2^{\Gamma} \right) \\
&\leq \left(\sum_{i \in [T - 1]} 2^{1 - 2i} \cdot \bottom_{m_i - \tau_i - ka}\left((f_z)_{z \in \leaves(\cT_i})\right)\right) + O\left(1 + 2^{O_{\xi}(d)} \cdot k \cdot (\log^2 n)\right),
\end{align*}
where the second inequality follows from our choice of $\Gamma$ and the fact that $\eta \leq n$ which may be assumed without loss of generality (otherwise, we might just let the frequency oracle be zero everywhere) and the bound on the number of nodes in $\cT$ from the first item. Next, to bound the first summation term above, let $r_i := \theta \cdot 2^{-i}$. We have
\begin{align*}
&\left(\sum_{i \in [T - 1]} 2^{1 - 2i} \cdot \bottom_{m_i - \tau_i - ka}\left((f_z)_{z \in \leaves(\cT_i})\right)\right) \\
\text{(\Cref{cor:opt-lb})} &\leq \left(\sum_{i \in [T - 1]} 2^{1 - 2i}/r_i^2 \cdot \opt^k_{\bS \cap \Psi_{\cT}^{-1}(\leaves(\cT_i))}\right) \\
(\text{Our choice of } \theta) &= \frac{\xi}{32(1 + 2/\xi)} \cdot \left(\sum_{i \in [T - 1]} \opt^k_{\bS \cap \Psi_{\cT}^{-1}(\leaves(\cT_i))}\right) \\
&\leq \frac{\xi}{32(1 + 2/\xi)} \cdot \left(\opt^k_{\bigcup_{i \in [T - 1]} \left(\bS \cap \Psi_{\cT}^{-1}(\leaves(\cT_i))\right)}\right) \\
&\leq \frac{\xi}{32(1 + 2/\xi)} \cdot \opt^k_{\bS}.
\end{align*}
Finally, we may bound the second term in~\eqref{eq:expand-by-level} by
\begin{align*}
2^{-2T} \cdot \sum_{z \in \leaves(\cT_T)} \tf_{z} \leq (n + |\cT_T| \cdot \eta) / n \leq 2^{O_{\xi}(d)} \cdot k \cdot (\log^2 n),
\end{align*}
where we used the bound $2^{-2T} \leq 1/n$ which follows from our choice of $T$.

Combining the above four inequalities together, we get
\begin{align*}
\MT(\Psi_{\cT}, \bS, \bS_{\cT}) \leq \frac{\xi}{8(1 + 2/\xi)} \cdot \opt^k_{\bS} + 2^{O_{\xi}(d)} \cdot k \cdot (\log^2 n) \cdot \eta.
\end{align*}

\item Applying \Cref{lem:opt-transport-to-coreset} to the above inequality implies that $S_{\cT}$ is a $(\xi, 2^{O_{\xi}(d)} \cdot k \cdot (\log^2 n) \cdot \eta)$-coreset of $S$ as desired.
\end{itemize}

In terms of the running time, it is obvious that apart from Line~\ref{line:addchildren} in Algorithm~\ref{alg:noisy-tree-construction}, all other steps run in time $\poly(|\cT|)$, which is at most $\poly(N_T)$ times the running time of a frequency oracle call. As for Line~\ref{line:addchildren} in Algorithm~\ref{alg:noisy-tree-construction}, we may compute the set $\children(z)$ for some node $z \in \cL_j$ as follows. First, we use \Cref{lem:decodable-nets} to compute the set $\cand(z)$ of all nodes $z' \in \cL_{j + 1}$ such that $|z - z'| \leq \rho_j$; this takes time  $2^{O(d)}$. Next, for each $z' \in \cand(z)$, we check whether $z$ is its closest point in $\cL_j$, which once again can be done via \Cref{lem:decodable-nets} in time $2^{O(d)}$. Thus, each execution of Line~\ref{line:addchildren} in Algorithm~\ref{alg:noisy-tree-construction} takes only $2^{O(d)}$ time; hence, in total this step only takes $2^{O(d)} \cdot |\cT| \leq \poly(N_T)$ time.
\end{proof}

\section{From Low to High Dimension}
\label{sec:highdim}

The net tree-based algorithm 
can already be applied to give an approximation algorithm for \kmeans, albeit with an additive error of $2^{O_{\alpha}(d)} \cdot k \cdot (\log^2 n) \cdot \eta$. The exponential dependency on $d$ is  undesirable. In this section, we show how to eliminate this dependency via random projections, following the approach of~\citet{GKM20} for private clustering in the  central DP model. Specifically, the breakthrough work of~\citet{MakarychevMR19} allows one to randomly project to $d = O(\log k)$ dimensions while maintaining the objective for any given partition. 

\begin{theorem}[\citep{MakarychevMR19}] \label{thm:dim-red-cluster}
For every $0 < \tbeta, \talpha < 1$ and $k \in \N$, there exists $d' = O_{\talpha}\left(\log(k/\beta)\right)$ such that the following holds. Let $P$ be a random $d'$-dimensional subspace of $\R^d$ and $\Pi_P$ denote the projection from $\R^d$ to $P$. With probability $1 - \tbeta$, we have the following for all partitions $\phi: \bB^{d'} \to [k]$:
\[
\frac{1}{1 + \talpha} \leq
\frac{d \cdot \cost_{\Pi_P(\bS)}(\phi)}{d' \cdot \cost_{\bS}(\phi \circ \Pi_P)}
\leq 1 + \talpha.
\]
\end{theorem}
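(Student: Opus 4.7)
Since the theorem is attributed directly to [MakarychevMR19], the plan is to invoke their dimensionality reduction result as a black box and perform only the bookkeeping needed to match the ratio form stated here. The normalization factor $d/d'$ in the numerator is exactly the scaling by which a uniformly random $d'$-dimensional subspace projection contracts expected squared Euclidean norms, so this rescaling is forced by the choice of projection; after rescaling, the content of the theorem is that every $k$-partition's cost is preserved up to a multiplicative $(1+\talpha)$ factor.

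To indicate how the MMR bound itself is proved, I would first handle a single fixed partition $\phi$: the $k$-means cost decomposes, via the standard variance identity, into a weighted sum of squared intra-cluster pairwise distances, and a Johnson--Lindenstrauss style argument preserves this sum up to $(1 \pm \talpha)$ with failure probability $\exp(-\Omega(d' \talpha^2))$. The genuine difficulty, and what prevents one from quoting JL directly with $d' = O(\log n)$, is the need to maintain the guarantee \emph{simultaneously} over all partitions $\phi : \bB^{d'} \to [k]$, of which there are uncountably many and of which any discretization naively has size exponential in $n$.

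The main obstacle is therefore avoiding any $n$-dependent net over partitions. The MMR insight is to reduce the simultaneous preservation statement to one about a $k$-parameter structural family: once the projection approximately preserves certain centroid-based spectral quantities (roughly, the top-$k$ eigendirections of appropriate weighted covariance operators), the cost of \emph{every} $k$-clustering is automatically preserved up to $(1+\talpha)$. This spectral preservation can be established in $d' = O_{\talpha}(\log(k/\tbeta))$ dimensions using a net argument on unit vectors in a $k$-dimensional subspace combined with standard Gaussian concentration for the projection. Once the MMR theorem is in hand, the stated ratio form follows by combining the two-sided distortion bound with the $d/d'$ scaling and absorbing constants into $\talpha$ and the failure probability into $\tbeta$.
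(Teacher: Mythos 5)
The paper offers no proof of this statement at all---it is imported verbatim from \citet{MakarychevMR19}---so your plan of invoking that result as a black box and doing only the $d/d'$ scaling and parameter bookkeeping is exactly what the paper does, and it is correct. One caveat: your sketch of the internals (reducing to preservation of top-$k$ eigendirections of weighted covariance operators) describes the earlier spectral/SVD-based line of work rather than the actual argument of \citet{MakarychevMR19}, which proceeds by a probabilistic analysis of the distortion of centroid-based cluster costs; since you only use the theorem as a black box, this inaccuracy does not affect the validity of your proposal.
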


Our encoding algorithm first projects $x$ to $\tx$ in a given subspace $P$ and appropriately scales it to $x'$. It then computes all potential representatives (corresponding to the complete net tree of the nets $\cL_1, \dots, \cL_T$), and then encodes these representatives in the generalized histogram and the generalized bucketized vector summation encoders, the latter with the input vector $x$. This is presented more formally below in \Cref{alg:cluster-encoder}. (Note that we treat $x'_i$ as a vector in $\bB^{d'}$ directly; this is w.l.o.g. as we can rotate to make the basis of $P$ into the first $d'$ standard basis vectors.)

\begin{algorithm}[h!]
\small
\caption{Encoding Algorithm for \kmeans.}\label{alg:cluster-encoder}
\textbf{Input: } Point $x_i \in \bB^d$ of user $i$. \\
\textbf{Parameters: }  Privacy parameters $\eps, \delta$, nets $\cL_1, \dots, \cL_T$, $d'$-dimensional subspace $P$, and $\Lambda > 0$. \\
\textbf{Subroutines: } Encoders $\enc^{\hist}, \enc^{\myvec}$ for generalized histogram and bucketized vector summation.

\begin{algorithmic}[1]
\Procedure{kMeansEncoder$_{\eps, \delta, \Lambda, P, \cL_1, \dots, \cL_T}(x_i)$}{}
\State $\tx_i \leftarrow \Pi_P(x_i)$
\If{$\|\tx_i\| \leq 1 / \Lambda$}
\SubState $x'_i = \Lambda \tx$
\Else
\SubState $x'_i = \bzero$
\EndIf
\State $y^T_i \leftarrow $ Closest point to $x'_i$ in $\cL_T$ \label{line:cvp1}
\For{$j = T - 1, \dots, 1$}
\SubState $y^j_i \leftarrow$ Closest point to $y^{j + 1}_i$ in $\cL_j$ \label{line:cvp2}
\EndFor
\State $e^h_i \leftarrow \enc^{\hist}_{(\heps, \hdel)}(\{y^1_i, \dots, y^T_i\})$
\State $e^v_i \leftarrow \enc^{\myvec}_{(\heps, \hdel)}(\{y^1_i, \dots, y^T_i\}, x_i)$
\State \Return $(e^h_i, e^v_i)$
\EndProcedure
\end{algorithmic}
\end{algorithm}
 
To decode, we first use the encoded histogram to build a frequency oracle, from which we construct a net tree $\cT$ using the algorithm in \Cref{sec:nettrees}. We then run any approximation algorithm $\cA$ for \kmeans on the representative set of $\cT$. The output of $\cA$ gives a partition of the leaves of $\cT$ according to which centers are the closest. We then use the vector summation oracle on these partitions to determine the $k$ centers in the original (high-dimensional) space. A pseudo-code of this algorithm is given below as \Cref{alg:cluster-decoder}. We stress here that the approximation algorithm $\cA$ need \emph{not} be private.

\begin{algorithm}[h!]
\small
\caption{Decoding Algorithm for \kmeans.}\label{alg:cluster-decoder}
\textbf{Input: } Encoded inputs $e^h_1, e^v_1, \dots, e^h_n, e^v_n$. \\
\textbf{Parameters: }  Privacy parameters $\eps, \delta$, approximation algorithm $\cA$ for \kmeans. \\
\textbf{Subroutines: } Decoders $\dec^{\hist}, \dec^{\myvec}$ for generalized histogram and bucketized vector summation.

\begin{algorithmic}[1]
\Procedure{kMeansDecoder$_{\eps, \delta, \cA}(e^h_1, e^v_1, \dots, e^h_n, e^v_n)$}{}
\State $\tf \leftarrow$ frequency oracle from $\dec^{\hist}_{(\heps, \hdel)}(e^h_1, \dots, e^h_n)$
\State $\tv \leftarrow$ vector sum oracle from $\dec^{\myvec}_{(\heps, \hdel)}(e^v_1, \dots, e^v_n)$
\State $\cT \leftarrow \textsc{\buildtree}^{\tf}$
\State $\{c'_1, \dots, c'_k\} \leftarrow \cA(\bS_{\cT})$
\State $\phi_* \leftarrow$ mapping $\leaves(\cT) \rightarrow [k]$ where $\phi_*(z) = j$ iff $c'_j$ is closest to $z$ (with ties broken arbitrarily)
\For{$j = 1, \dots, k$}
\SubState $\tv^j \leftarrow \bzero$
\SubState $\tn^j \leftarrow 0$
\SubFor{$z \in \phi^{-1}_*(j)$}
\SubSubState $\tv^j \leftarrow \tv^j + \tv_z$
\SubSubState $\tn^j \leftarrow \tn^j + \tf_z$
\EndFor
\SubState $\tc^j = \tv^j / \max\{1, \tn^j\}$
\SubIf{$\|\tc^j\| \leq 1$}
\SubSubState $c_j \leftarrow \tc_j$
\SubElse
\SubSubState $c_j \leftarrow \tc_j / \|\tc_j\|$
\EndIf
\EndFor
\State \Return $\{c_1, \dots, c_k\}$
\EndProcedure
\end{algorithmic}
\end{algorithm}

A generic guarantee of our algorithm is stated next. As we will explain below, plugging known histogram/vector summation algorithms immediately yields our main results.

\begin{theorem} \label{thm:apx-main}
\textsc{kMeansEncoder}$_{\eps, \delta}$ is $(\eps, \delta)$-DP. Furthermore, suppose that the following hold:
\begin{itemize}[nosep]
\item $\cA$ is a $\kappa$-approximation algorithm for \kmeans.
\item $d'$ is as in \Cref{thm:dim-red-cluster} with $\tbeta = 0.1\beta, \talpha = 0.1\alpha$, and $\Lambda = \sqrt{\frac{0.01}{\log(n/\beta)} \cdot \frac{d}{d'}}$.
\item $P$ is a random $d'$-dimensional subspace of $\R^d$.
\item The parameters of \textsc{\buildtree} are as in Theorem~\ref{thm:net-tree-main-guarantee} with $\xi = 0.1\alpha$, and let $N_T = 2^{O_{\alpha}(d')} \cdot k \cdot (\log^2 n)$ be an upper bound on the number of nodes of $\cT$.
\item $(\enc^{\hist}_{(\heps, \hdel)}, \dec^{\hist}_{(\heps, \hdel)})$ is $(\eta, 0.1\beta / N_T)$-accurate for generalized histogram.
\item $(\enc^{\myvec}_{(\heps, \hdel)}, \dec^{\myvec}_{(\heps, \hdel)})$ is $(\teta, 0.1\beta / N_T)$-accurate for generalized bucketized vector summation.
\end{itemize}
Then, with probability $1 - \beta$, \textsc{kMeansDecoder} outputs a $\left(\kappa(1 + \alpha), k^{O_{\alpha}(1)} (\log^2 n) \left(\log(n/\beta) \cdot \eta + \teta\right)\right)$-approximate solution for \kmeans.  Moreover, the encoder runs in time $\poly(n d k^{O_\alpha(1)},$ $\rt(\enc^{\hist})$, $\rt(\enc^{\myvec}))$, and the decoder runs in time $\poly(n d k^{O_\alpha(1)}$, $\rt(\cA)$, $\rt(\dec^{\hist})$, $\rt(\dec^{\myvec}))$.
\end{theorem}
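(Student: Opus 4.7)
Privacy follows by basic composition: each user's message is a pair of outputs from an $(\heps,\hdel)$-DP histogram encoder and an $(\heps,\hdel)$-DP vector-sum encoder, so the joint output is $(\eps,\delta)$-DP. For utility, I would first condition on a ``good event,'' holding with probability at least $1-\beta$ by a union bound, that: (i) the conclusion of \Cref{thm:dim-red-cluster} holds for the random subspace $P$; (ii) no projection is truncated, i.e., $\|\Pi_P(x_i)\|\leq 1/\Lambda$ for every $i$, which follows by sub-Gaussian concentration using $\E\|\Pi_P(x_i)\|^2=(d'/d)\|x_i\|^2$ and the choice $1/\Lambda=10\sqrt{(d'/d)\log(n/\beta)}$; (iii) the frequency oracle $\tf$ and (iv) the vector-sum oracle $\tv$ are $\eta$- resp.\ $\teta$-accurate on all $\leq N_T$ queried elements.

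Conditioning on this event, write $\bS:=\{x_i\}_i$, $\hat\bS:=\Pi_P(\bS)$, and $\bS':=\Lambda\hat\bS$, which is precisely the set fed into the net tree. \Cref{thm:net-tree-main-guarantee} with $\xi:=0.1\alpha$ gives that $\cT$ has $\leq N_T$ nodes, that $\bS_{\cT}$ is a $(\xi,O(N_T\eta))$-coreset of $\bS'$, and that $\MT(\Psi_{\cT},\bS',\bS_{\cT})\leq\frac{\xi}{8(1+2/\xi)}\opt^k_{\bS'}+O(N_T\eta)$. Since $\phi_*$ is the Voronoi partition of leaves with respect to $\bC'=\cA(\bS_{\cT})$ and $\cA$ is $\kappa$-approximate, the coreset bound gives $\cost_{\bS_{\cT}}(\bC')\leq\kappa(1+\xi)\opt^k_{\bS'}+O(\kappa N_T\eta)$; plugging this and the transport bound into \Cref{lem:transport-cost-to-apx} (with $\phi=\phi_*$, $\Psi=\Psi_{\cT}$, and using $\cost_{\bS_{\cT}}(\phi_*,\bC')=\cost_{\bS_{\cT}}(\bC')$) yields $\cost_{\bS'}(\phi_*\circ\Psi_{\cT},\bC')\leq\kappa(1+O(\alpha))\opt^k_{\bS'}+O_\alpha(\kappa N_T\eta)$. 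To lift back to $\bS$, define $\hat\phi:\bB^{d'}\to[k]$ by $\hat\phi(\tx)=\phi_*(\Psi_{\cT}(\Lambda\tx))$, so $\cost_{\hat\bS}(\hat\phi)=\cost_{\bS'}(\phi_*\circ\Psi_{\cT})/\Lambda^2$. Applying \Cref{thm:dim-red-cluster} to $\hat\phi$ gives $\cost_{\bS}(\hat\phi\circ\Pi_P)\leq(1+\talpha)(d/(d'\Lambda^2))\cost_{\bS'}(\phi_*\circ\Psi_{\cT},\bC')$, while applying it to a partition of $\bS'$ that extends the optimum of $\bS$ gives $\opt^k_{\bS'}\leq(1+\talpha)(d'\Lambda^2/d)\opt^k_{\bS}$. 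The choice of $\Lambda$ makes $(d/(d'\Lambda^2))=100\log(n/\beta)$ and $(d'\Lambda^2/d)=0.01/\log(n/\beta)$, so the $\log(n/\beta)$ factor on the multiplicative side cancels, leaving $\cost_{\bS}(\hat\phi\circ\Pi_P)\leq\kappa(1+O(\alpha))\opt^k_{\bS}+O(N_T\log(n/\beta)\cdot\eta)$ (while the $\log(n/\beta)$ stays on the additive side).

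It remains to control the cost of the actual centers $\{c_j\}$ under $\hat\phi\circ\Pi_P$. By the Pythagorean identity, $\cost_{\bS}(\hat\phi\circ\Pi_P,\{c_j\})=\cost_{\bS}(\hat\phi\circ\Pi_P)+\sum_j n_j\|c_j-\mu_j\|^2$, where $n_j,\mu_j$ are the true size and mean of partition $j$. Letting $L_j\leq N_T$ denote the number of leaves $z$ with $\phi_*(z)=j$, the triangle inequality and oracle accuracy yield $\|\tv^j-v^j\|\leq L_j\teta$ and $|\tn^j-n_j|\leq L_j\eta$. A short calculation shows $\|\tc^j-\mu_j\|\leq 2L_j(\eta+\teta)/n_j$ whenever $n_j\geq 2L_j\eta$; a case split on whether $n_j\geq L_j(\eta+\teta)$ (using also the trivial bound $\|c_j-\mu_j\|\leq 2$), together with $\|c_j-\mu_j\|\leq\|\tc^j-\mu_j\|$ (since $c_j$ is the Euclidean projection of $\tc^j$ onto the convex set $\bB^d\ni\mu_j$), gives $n_j\|c_j-\mu_j\|^2\leq O(L_j(\eta+\teta))$ in every case; summing with $\sum_j L_j\leq N_T$ bounds the additional cost by $O(N_T(\eta+\teta))$. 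This plus $\cost_{\bS}(\{c_j\})\leq\cost_{\bS}(\hat\phi\circ\Pi_P,\{c_j\})$ closes the utility bound. The main obstacle is the two-directional application of \Cref{thm:dim-red-cluster} orchestrated with the $\Lambda$-rescaling so that the $\log(n/\beta)$ factor cancels on the leading multiplicative term while being retained on the additive term; a secondary technical point is the case analysis bounding $\sum_j n_j\|c_j-\mu_j\|^2$ uniformly by $O(N_T(\eta+\teta))$ without a $1/n_j$ blow-up. The running-time claims follow since \Cref{lem:decodable-nets} makes each closest-point computation cost $2^{O(d')}=k^{O_\alpha(1)}$, \Cref{thm:net-tree-main-guarantee} executes \buildtree in $\poly(N_T)$ oracle calls, and the remaining subroutines are polynomial by hypothesis.
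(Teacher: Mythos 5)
Your proposal is correct and follows essentially the same route as the paper's proof: composition for privacy, a union bound over the good events, \Cref{thm:net-tree-main-guarantee} plus \Cref{lem:transport-cost-to-apx} to control the partition cost on the rescaled projected data, the two-sided use of \Cref{thm:dim-red-cluster} with the $\Lambda$-rescaling so that only the additive term picks up $\log(n/\beta)$, \Cref{fact:cost-by-mean-error} with the same oracle-error case analysis for the recovered centers, and the same running-time accounting. The only deviations are cosmetic (per-cluster leaf counts $L_j$ with $\sum_j L_j \leq N_T$ instead of the uniform bound $|\cT|$, and a slightly mismatched case-split threshold that the trivial bound $\|c_j-\mu_j\|\leq 2$ still covers), neither of which affects correctness.
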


To prove this theorem, we will also use the following simple well-known fact~\citep[see e.g.,][Proposition 1]{AggarwalDK09}, which tell us an excess in \kmeans objective for each cluster in terms of the distance between the true center and the noised center. 

\begin{fact} \label{fact:cost-by-mean-error}
For any weighted point set $\bS$ and $c \in \R^d$, $\cost_{\bS}(c) - \opt_{\bS}^1 = |\bS| \cdot \|\mu(\bS) - c\|^2$.
\end{fact}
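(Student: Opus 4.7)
The plan is to prove this standard bias-variance-style decomposition directly by expanding the squared Euclidean norm around the weighted mean $\mu := \mu(\bS)$. Writing $\|x - c\|^2 = \|(x - \mu) + (\mu - c)\|^2$ and expanding gives three terms: $\|x - \mu\|^2$, a cross term $2\langle x - \mu, \mu - c\rangle$, and $\|\mu - c\|^2$.

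Next I would sum this expansion weighted by $w_{\bS}(x)$ over $x \in \bB^d$. The third term contributes $|\bS| \cdot \|\mu - c\|^2$ by definition of $|\bS| = \sum_x w_{\bS}(x)$. The cross term becomes
\begin{align*}
2 \left\langle \sum_{x \in \bB^d} w_{\bS}(x)(x - \mu),\; \mu - c \right\rangle,
\end{align*}
and this vanishes because, by the definition $\mu(\bS) = \sum_x w_{\bS}(x) \cdot x / |\bS|$, we have $\sum_x w_{\bS}(x)(x - \mu) = |\bS|\mu - |\bS|\mu = \bzero$. So we obtain the identity
\begin{align*}
\cost_{\bS}(c) = \sum_{x \in \bB^d} w_{\bS}(x) \cdot \|x - \mu\|^2 + |\bS| \cdot \|\mu - c\|^2.
\end{align*}

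Finally, to conclude, I would observe that the first summand on the right does not depend on $c$, while the second is nonnegative and vanishes at $c = \mu$. Hence the minimum over $c \in \R^d$ of $\cost_{\bS}(c)$ is attained at $c = \mu$, which gives $\opt_{\bS}^1 = \sum_x w_{\bS}(x) \cdot \|x - \mu\|^2$. Subtracting this from the displayed identity yields $\cost_{\bS}(c) - \opt_{\bS}^1 = |\bS| \cdot \|\mu(\bS) - c\|^2$, as claimed. There is no real obstacle here; the only thing to be careful about is book-keeping with the weighted notation, specifically verifying that $\sum_x w_{\bS}(x) \cdot x = |\bS| \cdot \mu(\bS)$ follows directly from the definition of $\mu(\bS)$ given earlier in the preliminaries, which makes the cross term drop out.
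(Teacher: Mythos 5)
Your proof is correct: the bias--variance expansion around $\mu(\bS)$, the vanishing cross term, and the observation that the minimizer is $c = \mu(\bS)$ together give exactly the claimed identity. The paper does not spell out a proof (it defers to a citation of Proposition 1 of Aggarwal et al.), and your argument is precisely the standard one being invoked there.
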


\begin{proof}[Proof of \Cref{thm:apx-main}]
Since each of $\enc^{\hist}_{(\heps, \hdel)}, \enc^{\myvec}_{(\heps, \hdel)}$ is $(\heps, \hdel)$-DP, basic composition theorem immediately implies that \textsc{ClusteringEncoder}$_{\eps, \delta}$ is $(\eps, \delta)$-DP.

Next, notice that we only call the oracles $\tf$ (resp. $\tv$) on the nodes of the tree $\cT$. Since the number of nodes is at most $N_{\cT}$, a union bound ensures that all of these queries provide $\eta$-accurate (resp. $\teta$-accurate) answers with probability at least $1 - 0.1\beta$. Henceforth, we may assume that such an accuracy guarantee holds for all queries.

For notational convenience, let $\bS = \{x_1, \dots, x_n\}, \bS' = \{x'_1, \dots, x'_n\}, \tbS = \{\tx_1, \dots, \tx_n\}, \bC = \{c_1, \dots, c_k\}$, and $\bC' = \{c'_1, \dots, c'_k\}$.

From \Cref{thm:net-tree-main-guarantee}, we have $\bS_{\cT}$ is a $\left(0.1\alpha, 2^{O_{\alpha}(d')} \cdot k \cdot (\log^2 n) \cdot \eta\right)$-coreset of $\bS'$. From this and from the $\kappa$-approximation guarantee of algorithm $\cA$, we have
\begin{align*}
\cost_{\bS_\cT}(\bC') \leq \kappa \cdot \opt^k_{\bS_T} \leq \kappa(1 + 0.1\alpha) \opt^k_{\bS} + \kappa \cdot 2^{O_{\alpha}(d')} \cdot k \cdot (\log^2 n) \cdot \eta.
\end{align*}

Let $\phi_*: \leaves(\cT) \to [k]$ denote the mapping from each leaf to its closest center, and let $\phi': \bS' \to [k]$ be $\phi' := \phi_* \circ \Psi_{\cT}$. With this notation, we have that the following holds with probability at least $1 - 0.1\beta$:
\begin{align*}
\cost_{\bS'}(\phi') &= \cost_{\bS'}(\phi_* \circ \Psi_{\cT}) \\
&\leq \cost_{\bS'}(\phi_* \circ \Psi_{\cT}, \bC') \\
(\text{\Cref{lem:transport-cost-to-apx}}) &\leq (1 + 0.1\alpha) \cdot \cost_{\bS_{\cT}}(\phi_*, \bC') + 4(1 + 1/\xi) \cdot \MT(\Psi_{\cT}, \bS', \bS_{\cT}) \\
(\text{\Cref{thm:net-tree-main-guarantee}}) &\leq (1 + 0.1\alpha) \cdot \cost_{\bS_{\cT}}(\phi_*, \bC') \\ &\qquad+ 4(1 + 1/\xi) \cdot \left(\frac{\xi}{8(1 + 2/\xi)} \cdot \opt^k_{\bS} + 2^{O_{\xi}(d)} \cdot k \cdot (\log^2 n) \cdot \eta\right) \\
(\text{From } \xi = 0.1\alpha) &\leq (1 + 0.15\alpha) \cdot \cost_{\bS_{\cT}}(\phi_*, \bC') + 2^{O_{\alpha}(d')} \cdot k \cdot (\log^2 n) \cdot \eta \\ 
(\text{guarantee of } \cA) &\leq \kappa(1 + 0.15\alpha) \cdot \opt^k_{\bS_{\cT}} + 2^{O_{\alpha}(d')} \cdot k \cdot (\log^2 n) \cdot \eta \\
(\text{\Cref{thm:net-tree-main-guarantee}}) &\leq \kappa(1 + 0.15\alpha)(1 + 0.1\alpha) \cdot \opt^k_{\bS'} + 2^{O_{\alpha}(d')} \cdot k \cdot (\log^2 n) \cdot \eta \\
&\leq \kappa(1 + 0.3\alpha) \cdot \opt^k_{\bS'} + 2^{O_{\alpha}(d')} \cdot k \cdot (\log^2 n) \cdot \eta.
\end{align*}

For notational convenience, we let $\tphi$ (respectively $\phi$) denote the canonical mapping from $\tbS$ (respectively $\bS$) to $[k]$ corresponding to $\phi'$. Standard concentration inequalities imply that with probability $1 - 0.1\beta$ we have $\|\tx_i\| \leq 1/\Lambda$, meaning that $x'_i = \Lambda \tx_i$ for all $i \in [n]$. When this holds, we simply have
\begin{align*}
\cost_{\bS'}(\phi') = \Lambda^2 \cdot \cost_{\tbS}(\tphi).
\end{align*}
Plugging this into the previous inequality, we have
\begin{align} \label{eq:projected-approx-guarantee}
\cost_{\tbS}(\tphi) &= \kappa(1 + 0.3\alpha) \cdot \opt^k_{\tbS} + (1/\Lambda^2) \cdot 2^{O_{\alpha}(d')} \cdot k \cdot (\log^2 n) \cdot \eta.
\end{align}

Next, applying Theorem~\ref{thm:dim-red-cluster}, the following holds with probability $1 - 0.1\beta$:
\begin{align*}
\cost_{\tbS}(\tphi) \geq (d/d') \cdot \cost_{\bS}(\phi) / (1 + 0.1\alpha),
\end{align*}
and
\begin{align*}
\opt^k_{\tbS} \leq (1 + 0.1\alpha)(d/d') \cdot \opt^k_{\bS}.
\end{align*}
Plugging the above two inequalities into~\eqref{eq:projected-approx-guarantee}, we arrive at
\begin{align}
\cost_{\bS}(\phi) &\leq \kappa(1 + 0.3\alpha)(1 + 0.1\alpha)^2 \opt^k_{\bS} + (d'/d) \cdot (1/\Lambda^2) \cdot 2^{O_{\alpha}(d')} \cdot k \cdot (\log^2 n) \cdot \eta \nonumber \\
&\leq \kappa(1 + \alpha) \opt^k_{\bS} + 2^{O_{\alpha}(d')} \cdot k \cdot (\log^2 n) \cdot \log(n/\beta) \cdot \eta. \label{eq:cost-partition-bound-main}
\end{align}

Next, we will bound $\cost_{\bS}(\bC)$ in comparison to $\cost_{\bS}(\phi)$ that we had already bounded above. To do this, first notice that
\begin{align}
\cost_{\bS}(\bC) \leq \cost_{\bS}(\phi, \bC) &= \sum_{j \in [k]} \cost_{\phi^{-1}(j)}(c_j) \nonumber \\
(\text{\Cref{fact:cost-by-mean-error}}) &= \sum_{j \in [k]} \left(\opt^1_{\phi^{-1}(j)} + |\phi^{-1}(j)| \cdot \|\mu(\phi^{-1}(j)) - c_j\|^2\right) \nonumber \\
&= \cost_{\bS}(\phi) + \left(\sum_{j \in [k]} |\phi^{-1}(j)| \cdot \|\mu(\phi^{-1}(j)) - c_j\|^2\right). \label{eq:k-means-bound-split}
\end{align}

Furthermore, since we assume that the vector summation oracle is $\teta$-accurate, using the triangle inequality we get
\begin{align} \label{eq:vector-sum-acc-bound}
\|v^j - \tv^j\| \leq \teta \cdot |\cT|.
\end{align}
Similarly, let $n^j = \phi^{-1}(j)$. From the fact that the frequency oracle is $\eta$-accurate, we have
\begin{align} \label{eq:count-sum-acc-bound}
|\tn^j - n^j| \leq \eta \cdot |\cT|.
\end{align}

We next consider two cases, based on how large $n^j$ is.
\begin{itemize}
\item Case I: $n^j \leq 2(\eta + \teta) \cdot |\cT|$. In this case, we simply use the straightforward fact that $\|\mu(\phi^{-1}(j)) - c_j\| \leq 2$. This gives
\begin{align*}
|\phi^{-1}(j)| \cdot \|\mu(\phi^{-1}(j)) - c_j\|^2 \leq 8(\eta + \teta) \cdot |\cT|.
\end{align*}
\item Case II: $n^j > 2(\eta + \teta) \cdot |\cT|$. In this case, first notice that 
\begin{align*}
\|\mu(\phi^{-1}(j)) - c_j\|
&\leq \|\mu(\phi^{-1}(j)) - \tc_j\| \\
&= \left\|\frac{v^j}{n^j} - \frac{\tv^j}{\tn^j}\right\| \\
&= \frac{1}{n^j \tn^j} \cdot \|v^j\tn^j - \tv^jn^j\| \\
(\text{From triangle inequality, \eqref{eq:count-sum-acc-bound}, } n^j > 2\eta \cdot |\cT|)  &\leq \frac{2}{(n^j)^2} \cdot \left(\|v^j(\tn^j - n^j)\| + \|(\tv^j - v^j)n^j\| \right) \\ 
(\text{From \eqref{eq:vector-sum-acc-bound}, \eqref{eq:count-sum-acc-bound}, and } \|v^j\| \leq n^j) &\leq \frac{2}{(n^j)^2} \cdot \left(n^j \cdot \eta \cdot |\cT| + \teta \cdot |\cT| \cdot n^j\right) \\
&= \frac{2(\eta + \teta)|\cT|}{n^j}.
\end{align*}
From this, we have
\begin{align*}
|\phi^{-1}(j)| \cdot \|\mu(\phi^{-1}(j)) - c_j\|^2 \leq n^j \cdot \frac{4(\eta + \teta)^2|\cT|^2}{(n^j)^2} \leq 2(\eta + \teta)|\cT|.
\end{align*}
\end{itemize}
Thus, in both cases, we have $|\phi^{-1}(j)| \cdot \|\mu(\phi^{-1}(j)) - c_j\|^2 \leq 8(\eta + \teta)|\cT|$. Plugging this back to \eqref{eq:k-means-bound-split}, we get
\begin{align*}
\cost_{\bS}(\bC) \leq \cost_{\bS}(\phi) + 8k(\eta + \teta)|\cT| \leq \cost_{\bS}(\phi) + 2^{O_{\alpha}(d')} \cdot k^2 \cdot (\log^2 n) \cdot (\eta + \teta),
\end{align*}
where the second inequality follows from the first item of \Cref{thm:net-tree-main-guarantee}. Finally, plugging this into~\eqref{eq:cost-partition-bound-main}, we can conclude that
\begin{align*}
\cost_{\bS}(\bC) 
&\leq \kappa(1 + \alpha) \opt^k_{\bS} + 2^{O_{\alpha}(d')} \cdot k \cdot (\log^2 n) \cdot \log(n/\beta) \cdot \eta \\ &\qquad + 2^{O_{\alpha}(d')} \cdot k^2 \cdot (\log^2 n) \cdot (\eta + \teta) \\
(\text{From } d = O_{\alpha}(\log k)) &\leq \kappa(1 + \alpha) \opt^k_{\bS} + k^{O_{\alpha}(1)} (\log^2 n) \left(\log(n/\beta) \cdot \eta + \teta\right),
\end{align*}
as desired.

The running time claim for the decoder (Algorithm~\ref{alg:cluster-decoder}) follows immediately from the running time of \textsc{BuildTree} from \Cref{thm:net-tree-main-guarantee} and from $2^{O(d')} = k^{O_{\alpha}(1)}$ due to our choice of $d'$. As for the encoder (Algorithm~\ref{alg:cluster-encoder}), it is clear that every step runs in $\poly(n d k,$ $\rt(\enc^{\hist})$, $\rt(\enc^{\myvec}))$ time, except Lines~\ref{line:cvp1} and~\ref{line:cvp2} where we need to find a closest point in $\cL_j$ from some given point. However, \Cref{lem:decodable-nets} ensures that this can be computed in time $2^{O(d')}$ which is equal to $k^{O_{\alpha}(1)}$ due to our choice of parameters. This completes our proof.
\end{proof}
\Cref{thm:apx-main} allows us to easily derive approximation algorithms for \kmeans in different distributed models of DP, by simply plugging in the known generalized histogram/vector summation guarantees.

\subsection{Approximation Algorithm for Local DP}

Next we consider the local DP model and prove \Cref{thm:main-apx-local}.

\begin{proof}[Proof of \Cref{thm:main-apx-local}]
Let $\beta = 0.1$. From \Cref{thm:hist-local}, there is an $(\eta, 0.1\beta / N_T)$-accurate $0.5\eps$-DP algorithm for generalized histogram with
\[
\eta = O(\sqrt{n T^3 \log(N_T |\cL_1 \cup \cdots \cup \cL_T| / \beta)} / \eps).
\]
Since we set $T = O(\log n)$ (in \Cref{thm:net-tree-main-guarantee}), $N_T = k^{O_\alpha(1)} \cdot \poly\log n$ by our choice of parameters, and since $|\cL_1 \cup \cdots \cup \cL_T| \leq \exp(O(Td'))$ by a volume argument, we get $\eta = O(\sqrt{n} \poly\log(n) / \eps)$.

Similarly, from \Cref{lem:vec-sum-local}, there is an $(\teta, 0.1\beta / N_T)$-accurate $0.5\eps$-DP algorithm for generalized histogram with
\[
\teta = O(\sqrt{nd T^3 \log(d N_T |\cL_1 \cup \cdots \cup \cL_T| / \beta)} / \eps),
\]
which as before yields $\teta = O(\sqrt{nd} \cdot \polylog(n) / \eps)$.  Plugging this into \Cref{thm:apx-main}, we indeed arrive at a one-round $\eps$-local DP $(\kappa (1 + \alpha), k^{O_{\alpha}(1)} \cdot \sqrt{nd} \cdot \polylog(n) / \eps)$-approximation algorithm for \kmeans (with failure probability $0.1$).  It is easy to verify that the encoder and the decoder run in time $\poly(n, d, k^{O_\alpha(1)})$.
\end{proof}

\section{Experiments}\label{sec:exp}

\textbf{Implementation Details.} 
We modify our algorithm in several places to make it more practical. First, instead of using nets we use locality-sensitive hashing (LSH). Specifically, given LSH $g_1, \dots, g_T$, the level-$i$ representation of $x$ is now $z_i = (g_1(x), \dots, g_T(x))$. In this sense, our tree bears a strong resemblance to the so-called \emph{LSH forests}~\citep{BawaCG05,AndoniRN17}. As for the specific family of hashes, we use SimHash~\citep{Charikar02} in which a random vector $v_i$ is picked and $g_i(x)$ is the sign of $\left<v_i, x\right>$. Since LSH can already be viewed as a dimensionality reduction method, we skip the random projection at the beginning of the algorithm.  Consequently, we also directly compute the approximate centers of all the nodes in the tree and then use a non-private algorithm (in particular, $k$-means$++$\citep{arthur2006k}) to compute the $k$ centers on this privatized dataset. 
Details on the choice of parameters can be found in~\Cref{app:exp}.

\textbf{Dataset Generation.}
We use mixtures of Gaussians, which are generated as follows. For a separation parameter $r$ and the number $k$ of clusters, first pick $k$ centers uniformly at random from the sphere of radius slightly less than one (i.e., $1 - \Theta(r)$). Then, for each center, we create $n/k$ points by adding to the center a Gaussian-distributed vector whose expected norm is $1/r$. Finally, we project any point that is outside of the unit ball back into it. Note that we run our algorithm on this dataset using the same value of $k$.

Although these datasets are relatively easy, we would like to stress that we are not aware of any prior experiments or implementations of non-interative local DP algorithms.%
\footnote{Very recently, \citep{XiaHuaTongZhong} have reported  experimental results for \kmeans in the local DP model. However, their algorithm,  which is based on Lloyd's iteration, requires interaction.}\footnote{We remark that we also tried some ``naive'' baseline algorithms, such as noising each point using Gaussian/Laplace noise and running $k$-means++ on this noisy dataset. However, they do not produce any meaningful clustering even for $n$ as large as $10^6$.}
Furthermore, we point out that even experiments in the central model of~\citet{BalcanDLMZ17}, which uses almost the same data generation process (including separation parameters) as ours, suggest that the datasets are already challenging for the much more relaxed central DP model.

\textbf{Results Summary.} \Cref{fig:exp_main} presents the \emph{normalized} \kmeans objective (i.e., the \kmeans objective divided by $n$) as $n, \eps$ or $k$ vary. Note that the ``trivial'' clustering where we simply use the origin as the center has normalized objective roughly equal to one. Our clustering algorithm significantly improves upon these when $n$ is sufficiently large. More importantly, as either $n$ or $\eps$ increases, the normalized objective decreases, as predicted by theory. Finally, when the number of clusters $k$ becomes larger, our algorithm suffers larger errors, once again agreeing with theory. 

\begin{figure*}[h]
\centering
\includegraphics[width=0.32\textwidth]{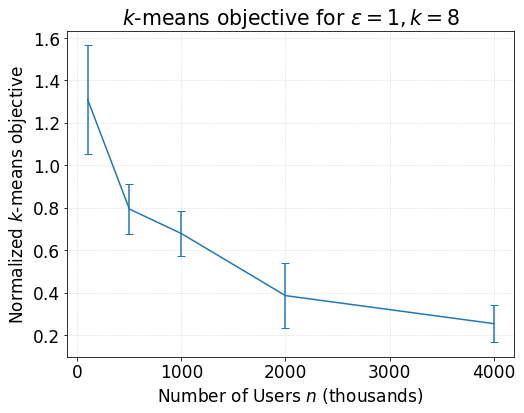}
\includegraphics[width=0.32\textwidth]{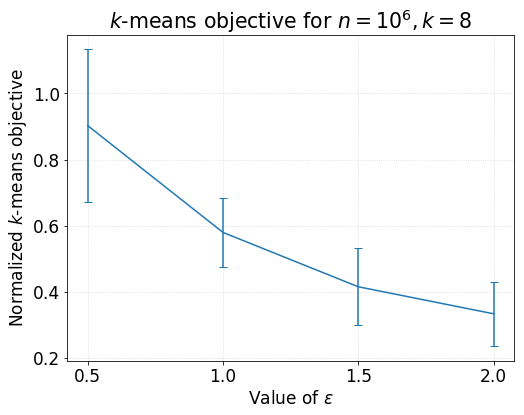}
\includegraphics[width=0.32\textwidth]{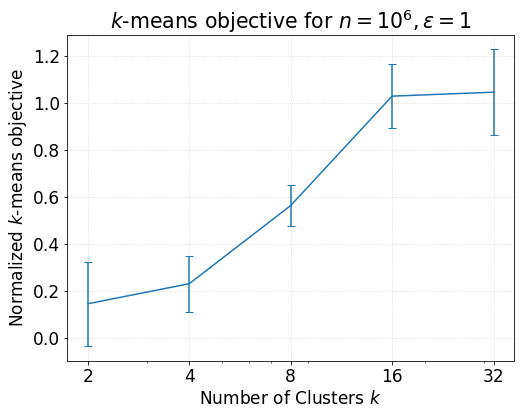}
\caption{Normalized \kmeans objective of the output clusters for varying $n, \eps$ or $k$ for $d = 100, r = 100$. Each set of parameters is run 10 times; the average and the standard deviation of the normalized \kmeans objectives are included.}
\label{fig:exp_main}
\end{figure*}

\section{Conclusions and Open Questions}
\label{sec:conc}

We give private approximation algorithms for \kmeans clustering whose ratios are essentially the same as those of non-private algorithms in both the (one-round) local DP and the (one-round) shuffle DP models. An interesting open question is to extend this result to other clustering objectives, such as \kmedian. While the net trees can also be applied to \kmedian with little change, the techniques we use to handle high dimensions do not directly carry over. This is due to the fact that, in \kmeans, it is simple to find a center of a cluster in one round, by finding the average of all the points. However, finding a cluster center in \kmedian (to the best of our knowledge) requires solving a linear program and it seems challenging to do so non-interactively.

Even for the \kmeans problem itself, there are still several interesting questions. First, as mentioned in the Introduction, our algorithm relies crucially on public randomness in the dimensionality reduction step, where every user has to project onto the same random subspace $P$. Can \kmeans be approximated almost optimally via a one-round local DP algorithm that uses only private randomness? 

Another direction is to tighten the additive error. In the central model, this has recently been investigated in~\citet{chaturvedi2020differentially,jones2020differentially}, who achieved essentially tight additive errors in terms of $k$ (and in certain regimes $d, n$). It is a natural question to determine such a tight dependency in the (non-interactive) local model too.

\balance
\bibliographystyle{abbrvnat}
\bibliography{refs}

\setlength{\abovedisplayskip}{8pt}
\setlength{\belowdisplayskip}{8pt}

\appendix

\section{Proofs of \Cref{lem:opt-transport-to-coreset} and \Cref{lem:transport-cost-to-apx}}
\label{app:opt-transport-to-coreset-proof}

To prove \Cref{lem:transport-cost-to-apx}, we will use the following fact, which is simple to verify.

\begin{fact}  \label{fact:p-pow-ineq}
For any $\xi > 0$ and any $a, b \in \R$, $(a + b)^2 \leq (1 + \xi) a^2 + (1 + 1/\xi) \cdot b^2$.
\end{fact}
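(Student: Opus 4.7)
The plan is to reduce the inequality to the well-known AM--GM-type bound $2ab \le \xi a^2 + (1/\xi) b^2$, which in turn follows from expanding a perfect square. Concretely, I would start by expanding the left-hand side as $(a+b)^2 = a^2 + 2ab + b^2$ and subtract $a^2 + b^2$ from both sides, reducing the claim to showing $2ab \le \xi a^2 + (1/\xi) b^2$.

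For this reduced inequality, the standard trick is to consider the non-negative quantity $\bigl(\sqrt{\xi}\, a - (1/\sqrt{\xi})\, b\bigr)^2 \ge 0$, which expands to $\xi a^2 - 2ab + (1/\xi) b^2 \ge 0$ and immediately yields $2ab \le \xi a^2 + (1/\xi) b^2$. Note that $\xi > 0$ makes both $\sqrt{\xi}$ and $1/\sqrt{\xi}$ well-defined, so no case analysis on the sign of $\xi$ is needed, and the argument works for all real $a, b$ (including negative values, since $2ab \le 2|ab|$ is automatic whenever $ab < 0$, and the square expansion handles $ab \ge 0$).

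Substituting back, $a^2 + 2ab + b^2 \le a^2 + b^2 + \xi a^2 + (1/\xi) b^2 = (1+\xi) a^2 + (1 + 1/\xi) b^2$, which is the desired bound. There is no real obstacle here: the statement is a one-line consequence of a perfect-square expansion, and the only thing to be careful about is ensuring that the choice of weights $\sqrt{\xi}$ and $1/\sqrt{\xi}$ correctly reproduces the coefficients $(1+\xi)$ and $(1 + 1/\xi)$ after adding back the $a^2 + b^2$ term.
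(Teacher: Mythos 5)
Your proof is correct: the reduction to $2ab \le \xi a^2 + (1/\xi) b^2$ via the expansion of $\bigl(\sqrt{\xi}\,a - (1/\sqrt{\xi})\,b\bigr)^2 \ge 0$ is the standard completing-the-square argument, and the paper itself states this fact without proof (``simple to verify''), implicitly relying on exactly this kind of one-line computation. Nothing further is needed.
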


\begin{proof}[Proof of \Cref{lem:transport-cost-to-apx}]
We have
\begin{align*}
\cost_{\bS}(\phi \circ \Psi, \bC) &= \sum_{y \in \bB^d} w_{\bS}(y) \cdot \|y - c_{\phi(\Psi(y))}\|^2 \\
&= \sum_{x \in \bB^d} \sum_{y \in \Psi^{-1}(x)} w_{\bS}(y) \cdot \|y - c_{\phi(x)}\|^2 \\
(\text{Triangle Inequality}) &\leq \sum_{x \in \bB^d} \sum_{y \in \Psi^{-1}(x)} w_{\bS}(y) \cdot \left(\|x - c_{\phi(x)}\| + \|x - y\|\right)^2 \\
(\text{\Cref{fact:p-pow-ineq}}) &\leq \sum_{x \in \bB^d} \sum_{y \in \Psi^{-1}(x)} w_{\bS}(y) \cdot \left((1 + \xi) \cdot \|x - c_{\phi(x)}\|^2 + (1 + 1/\xi) \cdot \|x - y\|^2\right) \\
&= (1 + \xi) \cdot \left(\sum_{x \in \bB^d} w_{\bS}(\Psi^{-1}(x)) \cdot \|x - c_{\phi(x)}\|^2\right) \\ &\qquad+ (1 + 1/\xi) \cdot \left(\sum_{y \in \bB^d} w_{\bS}(y) \cdot \|\Psi(y) - y\|^2\right) \\
&\leq (1 + \xi) \cdot \left(\sum_{x \in \bB^d} w_{\bS'}(x) \cdot \|x - c_{\phi(x)}\|^2\right) \\ &\qquad+ 4(1 + \xi) \cdot \left(\sum_{x \in \bB^d} |w_{\bS}(\Psi^{-1}(x)) - w_{\bS'}(x)|\right) \\ &\qquad+ (1 + 1/\xi) \cdot \left(\sum_{y \in \bB^d} w_{\bS}(y) \cdot \|\Psi(y) - y\|^2\right) \\
&\leq (1 + \xi) \cdot \cost_{\bS'}(\phi, \bC) + 4(1 + 1/\xi) \cdot \MT(\Psi, \bS, \bS'),
\end{align*}
yielding the first inequality.

To prove the second inequality, let $\phi_*: \bB^d \to [k]$ denote the map of each point to its closest center in $\bC$. We have
\begin{align*}
\cost_{\bS'}(\bC) &= \sum_{x \in \bB^d} w_{\bS'}(x) \cdot \|x - c_{\phi_*(x)}\|^2 \\
&\leq 4 \left(\sum_{x \in \bB^d} |w_{\bS}(\Psi^{-1}(x)) - w_{\bS'}(x)|\right) + \sum_{x \in \bB^d} w_{\bS}(\Psi^{-1}(x)) \cdot \|x - c_{\phi_*(x)}\|^2 \\
&= 4 \left(\sum_{x \in \bB^d} |w_{\bS}(\Psi^{-1}(x)) - w_{\bS'}(x)|\right) + \sum_{x \in \bB^d} \sum_{y \in \Psi^{-1}(x)} w_{\bS}(y) \cdot \|x - c_{\phi_*(x)}\|^2 \\
&\leq 4 \left(\sum_{x \in \bB^d} |w_{\bS}(\Psi^{-1}(x)) - w_{\bS'}(x)|\right) + \sum_{x \in \bB^d} \sum_{y \in \Psi^{-1}(x)} w_{\bS}(y) \cdot \|x - c_{\phi_*(y)}\|^2 \\ 
(\text{Triangle Inequality}) &\leq 4 \left(\sum_{x \in \bB^d} |w_{\bS}(\Psi^{-1}(x)) - w_{\bS'}(x)|\right) \\ &\qquad+ \sum_{x \in \bB^d} \sum_{y \in \Psi^{-1}(x)} w_{\bS}(y) \cdot (\|y - c_{\phi_*(y)}\| + \|x - y\|)^2 \\
&\leq 4 \left(\sum_{x \in \bB^d} |w_{\bS}(\Psi^{-1}(x)) - w_{\bS'}(x)|\right) \\
&\quad + \sum_{x \in \bB^d} \sum_{y \in \Psi^{-1}(x)} w_{\bS}(y) \cdot \left((1 + \xi) \cdot \|y - c_{\phi_*(y)}\|^2 + (1 + 1/\xi) \cdot \|x - y\|^2\right) \\
&= (1 + \xi) \cdot \left(\sum_{y \in \bB^d} w_{\bS}(y) \cdot \|y - c_{\phi_*(y)}\|^2\right) \\ &\qquad+ 4 \left(\sum_{x \in \bB^d} |w_{\bS}(\Psi^{-1}(x)) - w_{\bS'}(x)|\right) \\
&\qquad + (1 + 1/\xi) \cdot \left(\sum_{y \in \bB^d} w_{\bS}(y) \cdot \|\Psi(y) - y\|^2\right) \\
&\leq (1 + \xi) \cdot \cost_{\bS}(\bC) + 4(1 + 1/\xi) \cdot \MT(\Psi, \bS, \bS'),
\end{align*}
as desired.
\end{proof}

With \Cref{lem:transport-cost-to-apx} ready, we turn our attention back to the proof of \Cref{lem:opt-transport-to-coreset}.

\begin{proof}[Proof of \Cref{lem:opt-transport-to-coreset}]
Consider any (ordered) set $\bC$ of centers. Let $\phi_*$ be the mapping that maps every point to its closest center in $\bC$. On the one hand, applying the second inequality of \Cref{lem:transport-cost-to-apx}, we get
\begin{align*}
\cost_{\bS'}(\bC) &\leq (1 + 0.5\xi) \cdot \cost_{\bS}(\bC) + 4(1 + 2/\xi) \cdot \MT(\Psi, \bS, \bS') \\
&\leq (1 + \xi) \cdot \cost_{\bS}(\bC) + 4(1 + 2/\xi)t,
\end{align*}
where the second inequality follows from the assumed upper bound on $\MT(\bS, \bS')$.

Let $\Psi: \bB^d \to \bB^d$ be such that $\MT(\Psi, \bS, \bS') = \MT(\bS, \bS')$. We can use the first inequality of \Cref{lem:transport-cost-to-apx} to derive the following inequalities.
\begin{align*}
\cost_{\bS'}(\bC) &= \cost_{\bS'}(\phi_*, \bC) \\
(\text{\Cref{lem:transport-cost-to-apx}}) &\geq \frac{1}{1 + 0.5\xi} \cdot \cost_{\bS}(\phi_* \circ \Psi) - \frac{4(1 + 2/\xi)}{1 + 0.5\xi} \cdot \MT(\Psi, \bS, \bS') \\
&\geq (1 - 0.5\xi) \cdot \cost_{\bS}(\phi_* \circ \Psi) - 4(1 + 2/\xi) \cdot \MT(\bS, \bS').
\end{align*}
As a result, we can conclude that $\bS'$ is an $(\xi, 4(1 + 2/\xi) t)$-coreset of $\bS$ as desired.
\end{proof}

\section{Frequency and Vector Summation Oracles in Local Model}
\label{app:hist-local-model}

In this section, we explain the derivations of the bounds in \Cref{subsec:histogram-prelim} in more detail. To do so, we first note that given an algorithm for histogram (resp., bucketized vector summation), we can easily derive an algorithm for generalized histogram (resp., generalized bucketized vector summation) with a small overhead in the error. This is formalized below. (Note that, for brevity, we say that an algorithm runs in time $\rt(\cdot)$ if both the encoder and the decoder run in time at most $\rt(\cdot)$.) We remark that, while we focus on the local model in this section, \Cref{lem:hist-to-genhist} works for any model; indeed we will use it for the shuffle model in the next section.

\begin{lemma} \label{lem:hist-to-genhist}
Suppose that there is a $\rt(n, Y, \eps, \delta)$-time $(\eps, \delta)$-DP $(\eta(n, \eps, \delta), \beta(n, \eps, \delta))$-accurate algorithm for histogram. Then, there is an $O(T \cdot \rt(nT, Y, \eps/T, \delta/T))$-time $(\eps, \delta)$-DP \\ $(\eta(nT, \eps/T, \delta/T), \beta(nT, \eps/T, \delta/T))$-accurate algorithm for generalized histogram.

Similarly, suppose that there is a $\rt(n, Y, d, \eps, \delta)$-time $(\eps, \delta)$-DP $(\eta(n, d, \eps, \delta), \beta(n, d, \eps, \delta))$-accurate algorithm for bucketized vector summation. Then, there is an $O(T \cdot \rt(nT, Y, d, \eps/T, \delta/T))$-time  $(\eps, \delta)$-DP $(\eta(nT, d, \eps/T, \delta/T), \beta(nT, d, \eps/T, \delta/T))$-accurate algorithm for generalized bucketized vector summation.
\end{lemma}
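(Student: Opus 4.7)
The plan is to reduce each of the generalized problems to its non-generalized counterpart via a ``virtual user'' construction. Concretely, each real user $i$ holding a set $Y_i = \{y_{i,1}, \dots, y_{i,T}\}$ is treated as $T$ virtual users, the $j$th of whom holds the single element $y_{i,j}$ (in the vector-summation case, each virtual user also inherits the vector $v_i$). Running the base (non-generalized) algorithm on the resulting $nT$-user instance with privacy parameters $(\eps/T, \delta/T)$ should yield the stated bounds.

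More precisely, I would define the new encoder for user $i$ to invoke the base encoder $T$ independent times on the $T$ singleton inputs derived from $Y_i$, and define the new decoder simply to feed the concatenation of all $nT$ resulting messages into the base decoder. The key identity to verify is that, since $Y_i$ is a set, the generalized frequency $f_y = |\{i : y \in Y_i\}|$ coincides with the ``virtual'' frequency $|\{(i,j) : y_{i,j} = y\}|$ that the base histogram computes on the $nT$-user instance; the analogous identity $v_y = \sum_{(i,j) \,:\, y_{i,j} = y} v_i$ holds for vector sums. With these identities in hand, the base accuracy guarantee on $nT$ inputs with privacy $(\eps/T, \delta/T)$ transfers verbatim, giving $(\eta(nT, \eps/T, \delta/T), \beta(nT, \eps/T, \delta/T))$-accuracy for the generalized problem.

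The step I would be most careful about is the per-user privacy accounting. Each real user invokes the base encoder $T$ times, each invocation being $(\eps/T, \delta/T)$-DP with respect to its single-element input and drawing fresh independent randomness. Basic composition then implies that the $T$-tuple of messages released by user $i$ is an $(\eps, \delta)$-DP function of $(Y_i, v_i)$, which is exactly the per-user privacy requirement of the local DP model. Since the reduction is model-agnostic---it only composes across a single user's outputs---the same construction works in the shuffle model used in the next section, where the subsequent shuffler acts identically on the concatenated messages.

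The runtime claim is immediate from the paper's convention that $\rt(\cdot)$ bounds both the encoder and decoder: the new encoder performs $T$ base encoder calls (each with parameters $(nT, Y, \eps/T, \delta/T)$), and the new decoder performs one base decoder call on $nT$ messages, so both fit within $O(T \cdot \rt(nT, Y, \eps/T, \delta/T))$. I do not foresee a serious obstacle here; the only subtlety is the set-vs-multiset assumption on $Y_i$, which is already built into the definition of generalized (bucketized vector) histogram and ensures the frequency/vector-sum identity is exact rather than approximate.
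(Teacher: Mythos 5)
Your proposal is correct and follows essentially the same route as the paper's proof: each real user runs the base encoder $T$ times in parallel with privacy parameters $(\eps/T, \delta/T)$, basic composition gives $(\eps,\delta)$-DP per user, and the decoder treats the concatenated messages as an $nT$-user instance of the non-generalized problem, so the $(\eta(nT,\eps/T,\delta/T), \beta(nT,\eps/T,\delta/T))$ accuracy guarantee transfers directly. Your added remarks on the frequency identity and the model-agnostic nature of the reduction are consistent with (and slightly more explicit than) the paper's argument.
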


\begin{proof}
Suppose there is an $(\eps, \delta)$-DP $(\eta(n, \eps, \delta), \beta(n, \eps, \delta))$-accurate algorithm for histogram. To solve generalized histogram, each user runs the $T$  encoders in parallel, each on an element $y \in Y_i$ and with $(\eps/T, \delta/T)$-DP. By basic composition, this algorithm is $(\eps, \delta)$-DP. On the decoder side, it views the randomized input as inputs from $nT$ users and then runs the standard decoder for histogram. As a result, this yields an $(\eta(nT, \eps/T, \delta/T), \beta(nT, \eps/T, \delta/T))$-accurate algorithm for generalized histogram. The running time claim also follows trivially.

The argument for bucketized vector summation is analogous to the above.
\end{proof}

The above lemma allows us to henceforth only focus on histogram and bucketized vector summation.

\subsection{Histogram Frequency Oracle}

In this subsection, we briefly recall a frequency oracle of~\citet{BassilyNST20} (called \textsc{ExplicitHist} in their paper), which we will extend in the next section to handle vectors. We assume that the users and the analyzer have access to public randomness in the form of a uniformly random $Z \in \{\pm 1\}^{|Y| \times n}$. Note that while this requires many bits to specify, as noted in~\citet{BassilyNST20}, for the purpose of the bounds below, it suffices to take $Z$ that is pairwise independent in each column (but completely independent in each row) and thus it can be compactly represented in $O(n \log |Y|)$ bits.

The randomizer and analyzer from~\citet{BassilyNST20} can be stated as follows.

\begin{algorithm}[h!]
\caption{ExplicitHist Encoder}
\begin{algorithmic}[1]
\Procedure{ExplicitHistEncoder$_{\eps}(x_i; Z)$}{}
\State $\tx_i \leftarrow Z_{x_i, i}$
\State 
$
y_i = 
\begin{cases}
\tx_i & \text{ with probability } \frac{e^{\eps}}{e^{\eps} + 1} \\
-\tx_i & \text{ with probability } \frac{1}{e^{\eps} + 1}
\end{cases}
$
\State \Return $y_i$
\EndProcedure
\end{algorithmic}
\end{algorithm}

\begin{algorithm}[h!]
\caption{ExplicitHist Decoder.}
\begin{algorithmic}[1]
\Procedure{ExplicitHistDecoder$_{\eps}(v; y_1, \dots, y_n; Z)$}{}
\State \Return $\frac{e^{\eps} + 1}{e^{\eps} - 1} \cdot \sum_{i \in [n]} y_i \cdot Z_{v, i}$
\EndProcedure
\end{algorithmic}
\end{algorithm}

\citet{BassilyNST20} proved the following guarantee on the above frequency oracle:
\begin{theorem}
\textsc{ExplicitHist} is an $(O(\sqrt{n \log(|Y|/\beta)} / \eps), \beta)$-accurate $\eps$-DP algorithm for histogram in the local model. Moreover, it can be made to run in time $\poly(n, \log |Y|)$.
\end{theorem}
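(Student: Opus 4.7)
The plan is to verify the three claims (privacy, accuracy, running time) separately, treating \textsc{ExplicitHist} as a randomized-response scheme composed with a sketching matrix $Z$.

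For privacy, I would argue that, conditional on the shared randomness $Z$, the map $x_i \mapsto \tx_i := Z_{x_i,i} \in \{\pm 1\}$ is a deterministic post-processing, so it suffices to show that $y_i$ satisfies $\eps$-DP as a function of $\tx_i$. But this is just the standard one-bit randomized response: for any $b \in \{\pm 1\}$, $\Pr[y_i = b \mid \tx_i = b] / \Pr[y_i = b \mid \tx_i = -b] = e^{\eps}$. Since shared randomness is independent of the input, the encoder is $\eps$-DP.

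For accuracy, I would first show $\E[\tf_v] = f_v$. Write $\tf_v = \frac{e^{\eps}+1}{e^{\eps}-1} \sum_i y_i Z_{v,i}$. For users with $x_i = v$, we have $y_i Z_{v,i} = y_i \tx_i$, which equals $+1$ with probability $\frac{e^{\eps}}{e^{\eps}+1}$ and $-1$ otherwise, so $\E[y_i Z_{v,i}] = \frac{e^{\eps}-1}{e^{\eps}+1}$. For users with $x_i \neq v$, pairwise independence of the $i$-th column of $Z$ gives that $Z_{x_i,i}$ and $Z_{v,i}$ are independent uniform $\pm 1$; since $y_i$ is a randomized function of $Z_{x_i,i}$ alone (plus independent coins), $\E[y_i Z_{v,i}] = 0$. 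Summing and rescaling yields $\E[\tf_v] = f_v$.

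Next, I would bound the deviation via concentration. Each summand $y_i Z_{v,i}$ lies in $\{\pm 1\}$ and, conditional on $Z$, the summands across $i$ are mutually independent through the private coins. By Hoeffding's inequality applied to $\sum_i y_i Z_{v,i}$, for any $t > 0$,
\[
\Pr\!\left[\,|\tf_v - f_v| > t\,\right] \;\leq\; 2\exp\!\left(-\Omega\!\left(\frac{t^2 \eps^2}{n}\right)\right),
\]
using $\frac{e^{\eps}+1}{e^{\eps}-1} = \Theta(1/\eps)$ for $\eps \leq O(1)$. Setting the right-hand side to $\beta / |Y|$ gives $t = O(\sqrt{n \log(|Y|/\beta)}/\eps)$, and a union bound over $v \in Y$ yields $\eta$-accuracy at every $v$ simultaneously with probability at least $1-\beta$. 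The one subtle point is that the unbiasedness computation for $x_i \neq v$ only needs pairwise independence within each column of $Z$, which is exactly what \citet{BassilyNST20} exploit to compress $Z$ to $O(n \log |Y|)$ bits.

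For the running time, using such a compact pairwise-independent representation, each entry $Z_{y,i}$ can be evaluated in $\poly(\log |Y|)$ time. Thus the encoder runs in $\poly(\log |Y|)$ time, and the decoder answers a query at $v$ in time $O(n \cdot \poly(\log |Y|))$. The main obstacle I anticipate is being careful with the independence structure used in the variance/Hoeffding step, since the sketch matrix $Z$ is only pairwise independent per column; this is precisely where the argument of \citet{BassilyNST20} is needed, and everything else reduces to routine randomized-response analysis.
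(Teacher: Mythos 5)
Your proof is correct and is essentially the standard analysis of \textsc{ExplicitHist}: the paper does not prove this theorem itself but imports it from \citet{BassilyNST20}, and your argument (randomized-response privacy conditional on the shared $Z$, unbiasedness via pairwise independence within each user's column, Hoeffding across mutually independent users, a union bound over $Y$, and a compact pairwise-independent representation of $Z$ for the running time) is exactly the argument behind that citation. One cosmetic remark: the deterministic map $x_i \mapsto Z_{x_i,i}$ is a pre-composition with the mechanism rather than ``post-processing,'' but your conclusion stands because one-bit randomized response satisfies the $e^{\eps}$ likelihood-ratio bound for every pair of input bits, and $\eps$-DP is preserved under pre-composition with any deterministic function of the data.
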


The above theorem in conjunction with the first part of \Cref{lem:hist-to-genhist} implies \Cref{thm:hist-local}.

\subsection{Vector Summation Oracle}

\citet{DuchiJW13} proved the following lemma\footnote{See expression (19) and Lemma 1 of~\citet{DuchiJW13}. Note that we use $L = 1$; their choice of $B = \frac{e^{\eps} + 1}{e^{\eps} - 1} \cdot \frac{\pi \sqrt{d} \Gamma\left(\frac{d - 1}{2} + 1\right)}{\Gamma\left(\frac{d}{2} + 1\right)}$ indeed satisfies $B = O(\sqrt{d})$.}, which can be used to aggregate $d$-dimensional vectors of bounded Euclidean norm.

\begin{lemma}[\citep{DuchiJW13}]
For every $d \in \N$ and $\eps \in (0, O(1))$, there exists $B = \Theta(\sqrt{d}/\eps)$ such that there is a polynomial-time $\eps$-DP algorithm $\cR^{\myvec}_{\eps}$ in the local model that, given an input vector $x$, produces another vector $z$ such that $\|z\| = B$ and $\E[z] = x$.
\end{lemma}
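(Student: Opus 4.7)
The plan is to construct a randomizer that always outputs a point on the sphere of radius $B$, with direction drawn from a carefully chosen distribution. Assuming (as is clear from context) that inputs satisfy $\|x\| \leq 1$, the key tension is that we cannot scale the output magnitude to compensate for small $\|x\|$, so both the bias correction and the privacy guarantee must be baked into the density over directions. Concretely, for input $x \in \bB^d$, let $\alpha = \|x\|$ and partition the sphere into $H_x^+ := \{u \in S^{d - 1} : \langle u, x\rangle \geq 0\}$ and $H_x^- := S^{d - 1} \setminus H_x^+$. The randomizer samples $u \in S^{d - 1}$ from the mixture distribution that, with probability $\alpha$, draws $u$ from the density $P_{x/\alpha}$ proportional to $e^\eps$ on $H_x^+$ and proportional to $1$ on $H_x^-$, and with probability $1 - \alpha$, draws $u$ uniformly from $S^{d - 1}$. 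The output is $z = B u$, so $\|z\| = B$ deterministically. Sampling is easy: draw a Gaussian, normalize to get a uniform direction, then either accept directly or resample according to the appropriate hemisphere bias.

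For $\eps$-DP, I would compute the output density of $u$ pointwise. After normalization, the mixture density at a point $u$ equals a factor of $1 + \alpha \cdot \frac{e^\eps - 1}{e^\eps + 1}$ times the uniform density on $H_x^+$ and $1 - \alpha \cdot \frac{e^\eps - 1}{e^\eps + 1}$ on $H_x^-$. Since $\alpha \in [0, 1]$, this factor always lies in the interval $\left[\frac{2}{e^\eps + 1}, \frac{2 e^\eps}{e^\eps + 1}\right]$, regardless of which $x$ defines the hemispheres. Hence the ratio of output densities between any two inputs $x, x'$ at any $u$ is at most $e^\eps$, giving $\eps$-DP. The role of mixing with uniform is essential: using $P_x$ alone would also give the density ratio bound, but the resulting mean would not scale correctly with $\|x\|$.

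For unbiasedness, by rotational symmetry $\E_{u \sim P_{x/\alpha}}[u]$ is parallel to $x/\alpha$ and a direct computation (splitting the sphere integral into the two hemispheres) gives $\E_{u \sim P_{x/\alpha}}[u] = \frac{e^\eps - 1}{e^\eps + 1} \cdot c_d \cdot \frac{x}{\alpha}$, where $c_d := \E_{u \sim \mathrm{Unif}(S^{d - 1})}[|\langle u, e_1 \rangle|]$; the uniform component contributes $\vec 0$. Therefore $\E[z] = B \cdot c_d \cdot \frac{e^\eps - 1}{e^\eps + 1} \cdot x$, and choosing $B := \frac{e^\eps + 1}{(e^\eps - 1) \, c_d}$ makes $\E[z] = x$. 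The main technical step is estimating $c_d$. I would compute it via the Beta/Gamma identity for marginals of the uniform measure on $S^{d - 1}$, namely $c_d = \frac{\Gamma(d/2)}{\sqrt{\pi} \, \Gamma((d + 1)/2)}$, and then apply standard Gamma ratio bounds to conclude $c_d = \Theta(1/\sqrt d)$. For $\eps \leq O(1)$, $\frac{e^\eps - 1}{e^\eps + 1} = \Theta(\eps)$, so $B = \Theta(\sqrt d / \eps)$ as claimed. The hardest part is this scalar estimate on $c_d$: everything else is a short density ratio bound plus a symmetry argument, but pinning down the correct $\sqrt d$ scaling (as opposed to a naive $d$ from coordinate-wise mechanisms) is exactly what the spherical sampling structure buys us and requires the Gamma-function calculation.
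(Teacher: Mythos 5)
Your construction is correct and is essentially the mechanism of \citet{DuchiJW13} that the paper simply cites for this lemma: your mixture of the hemisphere-biased density $P_{x/\|x\|}$ with weight $\|x\|$ and the uniform density with weight $1-\|x\|$ is distributionally identical to their two-stage scheme that first maps $x$ to $\pm x/\|x\|$ with probabilities $(1\pm\|x\|)/2$ and then samples a direction from the hemisphere-biased density, because $\tfrac{1}{2}\left(P_{w}+P_{-w}\right)$ is exactly the uniform density. Your privacy ratio bound, the unbiasedness computation via $c_d=\Gamma(d/2)/\left(\sqrt{\pi}\,\Gamma((d+1)/2)\right)=\Theta(1/\sqrt{d})$, and the resulting $B=\Theta(\sqrt{d}/\eps)$ for $\eps=O(1)$ all match the cited result (cf.\ the paper's footnote giving their explicit choice of $B$).
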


Notice that this algorithm allows us to compute an estimate of a sum of vectors by simply adding up the randomized vectors; this gives an error of $O(\sqrt{dn}/\eps)$. Below we combine this with the techniques of~\citet{BassilyNST20} to get a desired oracle for vector summation. Specifically, the algorithm, which we call \textsc{ExplicitHistVector}, are presented below (where $y_i$ is the bucket and $x_i$ is the vector input).

\begin{algorithm}[h!]
\caption{ExplicitHistVector Encoder}
\begin{algorithmic}[1]
\Procedure{ExplicitHistVectorEncoder$_{\eps}(y_i, x_i; Z)$}{}
\State $z_i \leftarrow \cR_{\myvec}(Z_{y_i, i} \cdot x_i)$
\State \Return $z_i$
\EndProcedure
\end{algorithmic}
\end{algorithm}

\begin{algorithm}[h!]
\caption{ExplicitHistVector Decoder.}
\begin{algorithmic}[1]
\Procedure{ExplicitHistVectorDecoder$_{\eps}(v; z_1, \dots, z_n; Z)$}{}
\State \Return $\sum_{i \in [n]} z_i \cdot Z_{v, i}$
\EndProcedure
\end{algorithmic}
\end{algorithm}

\begin{lemma} \label{lem:vecsum-local-appendix}
\textsc{ExplicitHistVector} is an $(O(\sqrt{n d \log(d|Y|/\beta)} / \eps), \beta)$-accurate $\eps$-DP algorithm for bucketized vector summation in the local model. Moreover, it can be made to run in time $\poly(nd, \log |Y|)$.
\end{lemma}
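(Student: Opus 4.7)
The plan is to verify the two requirements separately. For privacy, I would argue that on each user's side, the only message released is $z_i = \cR_{\myvec}(Z_{y_i,i} \cdot x_i)$. Because $Z_{y_i,i} \in \{\pm 1\}$ is just a sign, the input fed into $\cR_{\myvec}$ has norm $\|Z_{y_i,i} x_i\| = \|x_i\| \leq 1$, so the $\eps$-DP guarantee of $\cR_{\myvec}$ (from Duchi--Jordan--Wainwright) applies. Since $Z$ is public randomness, we can condition on it and apply DP to each user's encoder; the DP property holds regardless of how the decoder post-processes the $z_i$'s.

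For accuracy, I would fix a target bucket $v \in Y$, let $S = \{i : y_i = v\}$, and write the decoder output as
\[
\hat v_v \;=\; \sum_{i=1}^n Z_{v,i}\, z_i \;=\; \underbrace{\sum_{i=1}^n Z_{v,i} Z_{y_i,i}\, x_i}_{(A)} \;+\; \underbrace{\sum_{i=1}^n Z_{v,i}\, \xi_i}_{(B)},
\]
where $\xi_i := z_i - Z_{y_i,i} x_i$ satisfies $\E[\xi_i \mid Z, (x_j, y_j)_j] = \mathbf{0}$ by unbiasedness of $\cR_{\myvec}$, and $\|\xi_i\| \le B+1 = O(\sqrt{d}/\eps)$. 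Since $Z_{v,i} Z_{y_i,i} = 1$ for $i \in S$, the first term splits as $(A) = x_v + \sum_{i \notin S} Z_{v,i} Z_{y_i,i}\, x_i$, so the error decomposes as
\[
\hat v_v - x_v \;=\; \sum_{i \notin S} Z_{v,i} Z_{y_i,i}\, x_i \;+\; \sum_{i=1}^n Z_{v,i}\, \xi_i.
\]

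For the first error sum, for each coordinate $j \in [d]$ the summands $Z_{v,i} Z_{y_i,i}\, x_i^{(j)}$ are independent mean-zero random variables (the signs $Z_{v,i}$ for $i \notin S$ are independent of $Z_{y_i,i}$) of magnitude at most $1$, so Hoeffding gives deviation $O(\sqrt{n \log(d/\beta')})$ with failure probability $\beta'/d$; summing squares over the $d$ coordinates yields an overall $O(\sqrt{nd \log(d/\beta')})$ bound on the Euclidean norm. For the second error sum, conditional on all other randomness, $Z_{v,i}\, \xi_i$ are independent mean-zero random vectors of norm at most $B+1$; applying per-coordinate Hoeffding (or a vector Bernstein inequality) and summing over coordinates gives $\|\sum_i Z_{v,i}\xi_i\| \le O((B+1)\sqrt{n \log(d/\beta')}) = O(\sqrt{nd \log(d/\beta')}/\eps)$ with probability $1-\beta'$. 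Combining and taking a union bound over the $|Y|$ possible query buckets with $\beta' = \beta/|Y|$ yields the claimed $O(\sqrt{nd \log(d|Y|/\beta)}/\eps)$ accuracy. The running time is immediate since each encoder is a single call to $\cR_{\myvec}$ on an $O(d)$-dimensional vector plus one lookup into the compact representation of $Z$, and the decoder is a sum of $n$ signed vectors per query.

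The main obstacle I anticipate is maintaining the $\sqrt{d}$ (rather than $d$) dependence in the noise term $(B)$: the trivial triangle inequality bound is $\sum_i \|\xi_i\| = \Omega(n\sqrt{d}/\eps)$, which is way too large, so it is essential to exploit the conditional mean-zero property of $\xi_i$ and the independent signs $Z_{v,i}$ to obtain cancellation. Reducing to scalar Hoeffding coordinate-by-coordinate (which is what the stated $\log(d|Y|/\beta)$ factor suggests) is the cleanest route; a vector Bernstein inequality would also work but introduces the same logarithmic overhead.
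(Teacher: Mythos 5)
Your proposal is correct and follows essentially the same route as the paper: privacy from the single invocation of the Duchi--Jordan--Wainwright randomizer (with $Z$ as public randomness and post-processing), and accuracy by noting that the signed terms $z_i Z_{v,i}$ minus their means are independent, mean-zero vectors of norm $O(\sqrt{d}/\eps)$ and applying concentration. The paper merely folds your two error terms into a single mean-zero sequence $u_i$ and invokes a vector concentration bound (its Lemma~\ref{lem:vector-chernoff}) instead of coordinate-wise Hoeffding, which is an inessential difference; it also skips the union bound over $Y$ since its accuracy definition is per-bucket.
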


To prove \Cref{lem:vecsum-local-appendix}, we require a concentration inequality for sum of independent vectors, as stated below. It can be derived using standard techniques~\citep[see e.g.,][Corollary 7, for an even more general form of the inequality]{vector-concen}.

\begin{lemma} \label{lem:vector-chernoff}
Suppose that $u_1, \dots, u_n \in \R^d$ are random vectors such that $\E[u_i] = \bzero$ for all $i \in [n]$ and that $\|u_i\| \leq \sigma$. Then, with probability $1 - \beta$, we have $\left\|\sum_{i \in [n]} u_i\right\| \leq O\left(\sigma \sqrt{n \log(d/\beta)}\right)$.
\end{lemma}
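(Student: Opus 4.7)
The plan is to apply McDiarmid's bounded-differences inequality to the scalar function $f(u_1, \ldots, u_n) := \|\sum_{i \in [n]} u_i\|$, treated as a function of the $n$ independent random vectors. This is a standard and clean route, and in fact yields a dimension-free bound that is strictly stronger than the stated $O(\sigma \sqrt{n \log(d/\beta)})$.

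First I would verify the bounded-differences property. If $u_i$ is replaced by an arbitrary $u_i'$ with $\|u_i'\| \leq \sigma$ while the other $u_j$ are fixed, the reverse triangle inequality gives
\[
\left| \Big\|\sum_j u_j\Big\| - \Big\|u_i' + \sum_{j \neq i} u_j\Big\| \right| \;\leq\; \|u_i - u_i'\| \;\leq\; 2\sigma.
\]
So $f$ satisfies the bounded-differences condition with constants $c_i = 2\sigma$ for each $i \in [n]$.

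Next I would bound $\E f$ via Jensen's inequality: $\E f \leq \sqrt{\E f^2}$. Expanding $f^2 = \sum_i \|u_i\|^2 + 2 \sum_{i < j} \langle u_i, u_j\rangle$ and using independence with $\E[u_i] = \bzero$, the cross terms vanish, so $\E f^2 = \sum_i \E \|u_i\|^2 \leq n \sigma^2$, giving $\E f \leq \sigma \sqrt{n}$.

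Finally, McDiarmid's inequality yields
\[
\Pr\left[ f \geq \E f + t \right] \;\leq\; \exp\!\left(-\frac{2 t^2}{\sum_i (2\sigma)^2}\right) \;=\; \exp\!\left(-\frac{t^2}{2 n \sigma^2}\right).
\]
Choosing $t = \sigma \sqrt{2 n \log(1/\beta)}$ makes the right-hand side at most $\beta$, so with probability at least $1 - \beta$,
\[
\Big\|\textstyle\sum_i u_i\Big\| \;\leq\; \sigma \sqrt{n} + \sigma \sqrt{2 n \log(1/\beta)} \;=\; O\!\left(\sigma \sqrt{n \log(1/\beta)}\right),
\]
which implies the stated bound $O(\sigma \sqrt{n \log(d/\beta)})$ since $\log(d/\beta) \geq \log(1/\beta)$. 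There is no real obstacle here; the argument is a textbook use of bounded-differences concentration, and the only thing to be careful about is bounding $\E f$ without any appeal to dimension, which the independence and zero-mean assumptions immediately give through Jensen's inequality.
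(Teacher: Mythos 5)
Your proof is correct, and it takes a genuinely different route from the one the paper points to. The paper does not prove this lemma itself; it cites a general vector-valued concentration inequality, and the $\log(d/\beta)$ in the statement betrays the intended "standard technique," namely applying a scalar Hoeffding bound coordinatewise and union-bounding over the $d$ coordinates (each coordinate of $u_i$ is bounded by $\sigma$, so each coordinate of the sum concentrates at scale $\sigma\sqrt{n\log(1/\beta')}$ with $\beta' = \beta/d$, and summing the squares gives $O(\sigma\sqrt{nd\log(d/\beta)})$ — actually one needs the per-coordinate variance or a slightly more careful accounting to recover the stated bound, which is why the paper defers to a reference). Your McDiarmid-plus-Jensen argument sidesteps all of this: the bounded-differences constant $2\sigma$ is immediate from the triangle inequality, independence and zero mean kill the cross terms in $\E\|\sum_i u_i\|^2$, and the resulting bound $O(\sigma\sqrt{n\log(1/\beta)})$ is dimension-free and hence strictly stronger than what the lemma asserts (the stated bound follows since $\log(d/\beta) \geq \log(1/\beta)$ and, for $d \geq 2$, $\log(d/\beta) = \Omega(1)$ so the $\sigma\sqrt{n}$ mean term is also absorbed). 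The only hypotheses you use beyond the lemma statement are independence of the $u_i$, which is implicit in the paper's phrase "sum of independent vectors" and holds in the application. In short: correct, self-contained, and it buys a dimension-free constant at no cost.
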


\begin{proof}[Proof of \Cref{lem:vecsum-local-appendix}]
Since we only use the input $(x_i, y_i)$ once as the input to $\cR^{\myvec}_{\eps}$ and we know that $\cR^{\myvec}_{\eps}$ is $\eps$-DP, we  can conclude that \textsc{ExplicitHistVectorDecoder} is also $\eps$-DP.

To analyze its accuracy, consider any $v \in Y$. For $i \in [n]$, let \begin{align*}
u_i &=
\begin{cases}
z_i \cdot Z_{v, i} - x_i & \text{ if } v \in V, \\
z_i \cdot Z_{v, i} & \text{ if } v \notin Y.
\end{cases}
\end{align*}
Notice that the error of our protocol at $v$ is exactly $\sum_{i \in [n]} u_i$. Furthermore, from the guarantee of $\cR^{\myvec}_{\eps}$, it is not hard to see that $\E[u_i] = \bzero$ and that $\|u_i\| \leq \|z_i - x_i\| \leq O(\sqrt{d}/\eps)$. As a result, applying \Cref{lem:vector-chernoff}, we can conclude that with probability $1  - \beta$, $\left\|\sum_{i \in [n]} u_i\right\| \leq O(\sqrt{nd\log(d/\beta)}/\eps)$, as desired.

Similar to \textsc{ExplicitHist}, it suffices to take $Z$ that is pairwise independent in each column, which can be specified in $O(n \log |Y|)$ bits. As a result, the total running time of the algorithm is $\poly(nd, \log |Y|)$ as desired.
\end{proof}

\Cref{lem:vecsum-local-appendix} and the second part of \Cref{lem:hist-to-genhist} imply \Cref{lem:vec-sum-local}.

\section{Shuffle Model}
\label{app:shuffle-dp}

In this section, we derive our bound for shuffle DP (\Cref{thm:main-apx-shuffle}).

The shuffle DP model~\citep{bittau17,erlingsson2019amplification,CheuSUZZ19} has gained traction due to it being a middle-ground between the central and local DP models.  In the shuffle DP model, a shuffler sits between the encoder and the decoder; this shuffler randomly permutes the messages from the encoders before sending it to the decoder (aka \emph{analyst}).  
Two variants of this model have been studied in the literature: in the single-message model, each encoder can send one message to the shuffler and in the multi-message model, each encoder can send multiple messages to the shuffler.  As in the local DP model, a one-round (i.e., non-interactive) version of the shuffle model can be defined as follows. (Here we use $\cX$ to denote the set of possible inputs and $\cY$ to denote the set of possible messages.)

\begin{definition}
For an $n$-party protocol $\mathcal{P}$ with encoding function $\enc$ that produces $m$ messages per user, and for an input sequence $\bx \in \mathcal{X}^n$, we let $S_{\bx}^{\enc}$ denote the distribution on $\mathcal{Y}^{nm}$ obtained by applying $\enc$ on each element of $x$ and then randomly shuffling the outputs. 
\end{definition}


\begin{definition}[Shuffle DP]
A protocol $\mathcal{P}$ with encoder $\enc: \mathcal{X} \to \mathcal{Y}^{m}$ is $(\epsilon, \delta)$-DP in the shuffle model if the algorithm whose output distribution is $S_{\bx}^{\enc}$ is $(\epsilon, \delta)$-DP. 
\end{definition}

Recent research on the shuffle DP model includes work on aggregation~\citep{BalleBGN19, ghazi2020private, balle2020private,GKMPS21-icml}, histograms and heavy hitters~\citep{ghazi2019power, balcer2019separating, ghazi2020pure, GKMP20-icml}, and counting distinct elements \citep{balcer2021connecting, chen2020distributed}.


\subsection{Frequency and Vector Summation Oracles}

We start by providing algorithms for frequency and vector summation oracles in the shuffle DP model. These are summarized below.

\begin{theorem}[\citep{ghazi2019power}]
\label{thm:hist-shuffle}
There is an $(O(\poly\log\left(\frac{|Y| T}{\delta \beta}\right)/\eps), \beta)$-accurate $(\epsilon, \delta)$-DP algorithm for generalized histogram in the shuffle model. The encoder and the decoder run in time $\poly\left(nT/\eps,  \log\left(\frac{|Y|}{\delta \beta}\right)\right)$.
\end{theorem}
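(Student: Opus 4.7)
The plan is a direct invocation of the shuffle-model histogram result of \citet{ghazi2019power}, composed with the generic reduction in \Cref{lem:hist-to-genhist}.

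First, I would recall from \citet{ghazi2019power} that there is an $(\eps, \delta)$-DP histogram protocol in the shuffle model whose error is $\eta(n, \eps, \delta) = O(\polylog(n|Y|/(\delta\beta))/\eps)$ with polynomial encoder and decoder running time. The core mechanism views each user's element as contributing a one-hot vector, and adds carefully chosen (Pólya / negative-binomial-style) noise at a ``virtual shuffler,'' exploiting the fact that post-shuffling one only needs to hide the identity of a single user's contribution. The crucial improvement over the local model is that the additive error depends only polylogarithmically on $n$ and on $1/\delta$.

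Next, I would apply \Cref{lem:hist-to-genhist} to lift this to a generalized histogram protocol. Each user simulates $T$ independent copies of the shuffle-model histogram encoder, one per bucket in $Y_i$, each at privacy budget $(\eps/T, \delta/T)$; basic composition then yields $(\eps, \delta)$-DP overall. Substituting into the base error gives $\eta(nT, \eps/T, \delta/T) = O(T \cdot \polylog(nT|Y|/(\delta\beta))/\eps)$, which simplifies to the claimed $O(\polylog(|Y|T/(\delta\beta))/\eps)$ once the factor of $T$ (and the $\polylog(n)$) is absorbed into the outer $\polylog$ — valid in the regime of interest, where $T$ is polylogarithmic in the remaining parameters (in our applications $T = O(\log n)$). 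The running time bound is immediate from the polynomial running time of the base protocol and the $T$-fold overhead from the reduction.

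The only mild subtlety is the absorption of the extra $T$ factor arising from basic composition; if one wanted to eliminate it unconditionally, one could instead directly adapt the ghazi2019power construction into a multi-message protocol in which each user emits $T$ privatized messages in a single pass, with privacy analyzed via the stability of the post-shuffle noise distribution under $T$-fold perturbations induced by a neighboring input. For the parameter regime needed by \Cref{thm:main-apx-shuffle}, however, the simpler black-box route already suffices to give the quantitative guarantee in the statement.
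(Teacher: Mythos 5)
Your proposal follows the paper's own route exactly: it cites the shuffle-model histogram frequency oracle of \citet{ghazi2019power} (Theorem~\ref{th:shuffle_fo}) and lifts it to generalized histogram via \Cref{lem:hist-to-genhist}, splitting the budget into $(\eps/T,\delta/T)$ per bucket by basic composition. The extra factor of $T$ you flag is a genuine subtlety, but it is equally present in the paper's own one-line derivation (which implicitly absorbs it, justified in the application regime where $T = O(\log n)$ is polylogarithmic), so your argument is faithful to the paper's proof.
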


\begin{theorem} \label{thm:vec-sum-shuffle}
There is an $(O( \frac{T \sqrt{d}}{\epsilon} \cdot \poly\log(dT/(\delta \beta))), \beta)$-accurate $(\epsilon, \delta)$-DP algorithm for generalized bucketized vector summation in the shuffle model. The encoder and the decoder run in time $\poly(ndT/\beta, \log\left(\frac{|Y|}{\eps\delta}\right))$.
\end{theorem}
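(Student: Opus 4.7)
The plan is to build the shuffle-DP protocol in three layers, mirroring the derivation of \Cref{thm:hist-shuffle} but replacing scalar counts by $d$-dimensional sums. The outermost reduction is \Cref{lem:hist-to-genhist}, which turns any bucketized vector summation protocol (one bucket per user) into a generalized bucketized vector summation protocol (with $T$ buckets per user) at the cost of scaling $(\eps,\delta) \mapsto (\eps/T,\delta/T)$; this is precisely where the leading $T$ factor in the error originates. It therefore suffices to give an $(\eps,\delta)$-DP bucketized vector summation protocol where each user holds a single pair $(y_i, x_i) \in Y\times \bB^d$, with per-bucket $L_2$ error $O(\sqrt{d}/\eps \cdot \polylog(d/(\delta\beta)))$.

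The middle layer handles the vector aspect. Standard shuffle-DP scalar summation protocols (e.g., \citet{BalleBGN19,ghazi2020private}) achieve $(\eps,\delta)$-DP aggregation of values in $[-1,1]$ with additive error $O(\polylog(1/(\delta\beta))/\eps)$. Invoking such a protocol coordinate-wise over the $d$ coordinates of each $x_i$ and composing via advanced composition (with per-coordinate parameters $\eps' = \Theta(\eps/\sqrt{d\log(1/\delta)})$ and $\delta' = \Theta(\delta/d)$) yields a shuffle-DP vector summation protocol with $L_2$ error $O(\sqrt{d}/\eps \cdot \polylog(d/(\delta\beta)))$; this already matches the target bound in the single-bucket case.

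The innermost layer upgrades vector summation to bucketized vector summation without incurring a multiplicative $|Y|$ factor in privacy. I would follow the blanket-based approach that underlies the shuffle-DP histogram construction of \Cref{thm:hist-shuffle}: each user transmits their real labeled message $(y_i, x_i)$ together with a carefully calibrated batch of ``blanket'' vector messages spread across the bucket space. The blanket is chosen so that (a) after shuffling, the messages tagged with any fixed $y$ look like the output of a stand-alone vector-summation protocol on the sub-population $\{i : y_i = y\}$, up to a known, subtractable expected offset; and (b) the joint distribution of all shuffled messages is $(\eps,\delta)$-DP as a single release rather than as $|Y|$ parallel releases. The decoder recovers $v_y$ by aggregating the vector-valued messages tagged with $y$ and subtracting the expected blanket contribution.

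The hard part is this last layer: showing that one blanket simultaneously privatizes all $|Y|$ bucket aggregates under one $(\eps,\delta)$ budget, so that $|Y|$ only appears inside logarithmic factors. This is the vector-valued analogue of the scalar shuffle histogram analysis: one needs a blanket distribution with enough coordinate-wise symmetry for the per-coordinate scalar-summation privacy argument to go through, while keeping both the per-user message count and the per-bucket noise mass small enough that the $\sqrt{d}$ scaling is not degraded. Once this is established, chaining through the three layers yields the stated error $O(T\sqrt{d}/\eps \cdot \polylog(dT/(\delta\beta)))$, and the running time bound $\poly(ndT/\beta, \log(|Y|/(\eps\delta)))$ follows because each ingredient---shuffle scalar summation, advanced composition, blanket sampling, and \Cref{lem:hist-to-genhist}---is individually polynomial.
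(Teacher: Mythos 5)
Your outer layer is exactly the paper's: \Cref{lem:hist-to-genhist} with $(\eps,\delta)\mapsto(\eps/T,\delta/T)$ is where the factor $T$ comes from, so it does suffice to build a single-bucket protocol with per-bucket $L_2$ error $O(\sqrt{d}\cdot\polylog(d/(\delta\beta))/\eps)$. The gap is in your middle layer. Running a scalar shuffle summation protocol independently on each of the $d$ coordinates and paying for it with advanced composition (per-coordinate budget $\eps'=\Theta(\eps/\sqrt{d\log(1/\delta)})$) forces each coordinate's noise to be calibrated to per-coordinate sensitivity $1$, i.e.\ noise of magnitude $\Theta(1/\eps')=\Theta(\sqrt{d}/\eps)$ per coordinate; the $L_2$ norm of a $d$-dimensional error vector whose coordinates each have magnitude $\Theta(\sqrt{d}/\eps)$ is $\Theta(d/\eps)$, not $\Theta(\sqrt{d}/\eps)$. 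The $\sqrt{d}$ rate is only achievable by a mechanism whose privacy accounting uses the $L_2$ sensitivity of the whole vector (which is $O(1)$ here) rather than composing $d$ worst-case per-coordinate losses. This is precisely what the paper does: it proves \Cref{bucketized_vector_summation_shuffle} by simulating a \emph{central-model} $d$-dimensional discrete Gaussian mechanism inside the shuffle model via the split-and-mix secure-aggregation protocol (\Cref{th:ikos_analysis}), and the privacy of that central mechanism with $\sigma=O(\log(d/\delta)/\eps)$ follows from the multidimensional discrete Gaussian analysis of \Cref{cor:discrete_gaussian_vector_summation}, which exploits the $L_2$ bound $\|q(x)-q(x')\|\le O(1)$ directly. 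Your plan has no ingredient playing that role, so as written it structurally loses a $\sqrt{d}$ factor.

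Your innermost layer is also not the paper's route and is left unproven: you acknowledge that constructing and analyzing a vector-valued ``blanket'' that privatizes all $|Y|$ bucket aggregates under one budget is the hard part. The paper avoids this entirely. Each user hashes its bucket into $s=2n/\beta$ slots using public randomness (collisions are charged to the utility failure probability $\beta$, not to privacy), contributes quantized split-and-mix shares for every (slot, coordinate) pair, and a single designated user adds the discrete Gaussian noise; privacy is then inherited from the simulated central mechanism by a union bound over slots and coordinates, so $|Y|$ never enters the privacy analysis at all and appears only logarithmically through the public-randomness/running-time bookkeeping. To repair your proposal you would need either to replace the coordinate-wise composition by a secure-aggregation-plus-Gaussian argument of this kind, or to actually carry out the blanket analysis for vector messages with an $L_2$-sensitivity-based privacy proof — neither of which is supplied.
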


We will prove these two theorems in the next two subsections.

\subsubsection{Frequency Oracle}

\citep{ghazi2019power} gave a frequency oracle for histogram with the following guarantee:

\begin{theorem}[\citep{ghazi2019power}]\label{th:shuffle_fo}
There is an $(O(\poly\log\left(\frac{|Y|}{\delta \beta}\right)/\eps), \beta)$-accurate $(\epsilon, \delta)$-DP algorithm for histogram in the shuffle model. The encoder and the decoder run in time \\$\poly\left(n/\eps,  \log\left(\frac{|Y|}{\delta \beta}\right)\right)$.
\end{theorem}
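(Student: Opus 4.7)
The statement is attributed directly to \citet{ghazi2019power}, so the plan is not to reprove it from scratch but to indicate how their protocol fits the parameter regime we need. Concretely, I would invoke their multi-message shuffle histogram protocol as a black box, verify that its guarantees specialize to the stated ones, and describe the high-level mechanism so that the downstream use (in \Cref{thm:hist-shuffle} via \Cref{lem:hist-to-genhist}) is self-contained.

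The protocol I have in mind is the ``noise blanket'' construction. Each user holding $x_i \in Y$ sends the single ``real'' message $x_i$, plus a small batch of ``blanket'' messages consisting of elements drawn i.i.d.\ uniformly from $Y$; the blanket batch size is a fresh random variable whose distribution is calibrated so that, summed across all $n$ users, the blanket induces an approximately Poisson-distributed background count in every bucket with mean $\Theta(\polylog(|Y|/\delta)/\eps)$. After shuffling, the analyst sees a multiset of messages; the estimator $\tf_y$ for each $y \in Y$ is the raw count of $y$-messages minus the known mean of the blanket at $y$. Encoder and decoder are both easily seen to run in $\poly(n/\eps, \log(|Y|/(\delta\beta)))$ time, since each user emits only $\polylog(|Y|/(\delta\beta))/\eps$ messages and the decoder simply tabulates counts.

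For the privacy analysis, the point is that shuffling renders the view of the adversary a histogram of messages, and changing one user perturbs this message-histogram by removing one true element and re-sampling one blanket batch. Because the blanket mass per bucket is large compared to $1$, one shows by a binomial/Poisson tail comparison that the total variation between the two induced message-histograms is at most $\delta$ after discarding an $e^\eps$ privacy loss; this is exactly the ``hiding in the blanket'' argument of \citet{ghazi2019power}. For accuracy, the estimator error at each bucket is a centered sum of independent blanket indicators, whose deviation is $O(\sqrt{\text{mean}} \cdot \sqrt{\log(|Y|/\beta)}) = \polylog(|Y|/(\delta\beta))/\eps$ by a Chernoff bound; a union bound over the $|Y|$ buckets gives the claimed uniform accuracy with failure probability $\beta$.

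The hardest part, and the part I would genuinely defer to the cited paper rather than reprove, is the calibration of the blanket distribution so that both the per-user message complexity and the per-bucket blanket mean are simultaneously $\polylog(|Y|/\delta)/\eps$ while maintaining $(\eps,\delta)$-DP; this requires the delicate Poisson-subsampling comparison that constitutes the technical core of \citet{ghazi2019power}. All other elements---the encoder/decoder description, the running-time bookkeeping, and the Chernoff-plus-union-bound accuracy analysis---are routine once the blanket is in hand.
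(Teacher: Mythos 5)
There is a genuine mismatch here. The paper does not prove \Cref{th:shuffle_fo}; it imports it verbatim from \citet{ghazi2019power}, and the only substantive remark it adds is that the underlying protocol is based on the Count Sketch data structure combined with private aggregation, which is what lets the authors replace the $\poly(|Y|\cdot n\cdot\log(1/\beta))$ bits of public randomness by pairwise-independent hashing. Your proposal also treats the theorem as a black box, which is fine in spirit, but the mechanism you describe---a ``noise blanket'' in which each user sends its true element plus i.i.d.\ uniform elements of $Y$---is not the protocol of \citet{ghazi2019power}, and it cannot meet the stated guarantees in the regime this paper needs. To hide a single user's change, the blanket count in the affected bucket must be $\Omega(\log(1/\delta)/\eps^2)$; with uniform blanket messages this forces each user to send roughly $|Y|\log(1/\delta)/(n\eps^2)$ messages, which contradicts both your claimed per-user message count of $\polylog(|Y|/(\delta\beta))/\eps$ and the stated encoder running time $\poly(n/\eps,\log(|Y|/(\delta\beta)))$ whenever $|Y|\gg n\cdot\polylog$. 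That is exactly the situation in this application: the domain is $\cL_1\cup\cdots\cup\cL_T$ of size up to $\exp(O(Td'))=n^{O(\log k)}$, super-polynomial in $n$. So the ``calibration'' you defer to the cited paper---simultaneously polylogarithmic per-user communication and polylogarithmic per-bucket blanket mass over the raw domain---is not something that paper establishes, because it is false for the plain blanket protocol at large $|Y|$.

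The missing idea is domain reduction: the actual construction hashes the domain into a sketch (Count Min/Count Sketch style) and runs a private aggregation protocol (split-and-mix with added noise, in the spirit of \Cref{alg:ikos_encoder} and \Cref{bucketized_vector_summation_shuffle} in this paper) on the sketch cells, so that communication, error, and privacy noise all scale polylogarithmically in $|Y|$ independently of its size. If you want your writeup to support the downstream use in \Cref{thm:hist-shuffle} via \Cref{lem:hist-to-genhist}, and in particular the paper's public-randomness remark, you should replace the blanket description with this sketch-plus-secure-aggregation outline, or at minimum restrict your heuristic explanation to the small-domain case and defer the general case entirely to the citation without asserting the blanket parameters.
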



We note that as stated in \citet{ghazi2019power}, the protocol underlying Theorem~\ref{th:shuffle_fo} uses $\poly(|Y| \cdot n \cdot \log(1/\beta))$ bits of public randomness. This can be exponentially reduced using the well-known fact that pairwise independence is sufficient for the Count Sketch data structure (which is the basis of the proof of~\Cref{th:shuffle_fo}).

Combining~\Cref{th:shuffle_fo} and \Cref{lem:hist-to-genhist} yields \Cref{thm:hist-shuffle}.

\subsubsection{Vector Summation Oracle}
For any two probability distributions $\mathcal{D}_1$ and $\mathcal{D}_2$, we denote by $\SD(\mathcal{D}_1, \mathcal{D}_2)$ the statistical (aka total variation) distance between $\mathcal{D}_1$ and $\mathcal{D}_2$.

We next present a bucketized vector summation oracle in the shuffle model that exhibits almost central accuracy.

\begin{theorem}\label{bucketized_vector_summation_shuffle}
There is an $(O( \frac{\sqrt{d}}{\epsilon} \cdot \poly\log(d/(\delta \beta))), \beta)$-accurate $(\epsilon, \delta)$-DP algorithm for bucketized vector summation in the shuffle model. The encoder and the decoder run in time $\poly(nd/\beta, \log\left(\frac{|Y|}{\eps\delta}\right))$.
\end{theorem}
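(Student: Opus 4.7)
The natural strategy is to lift the scalar shuffle-DP frequency oracle of Theorem~\ref{th:shuffle_fo}~\citep{ghazi2019power} to vector inputs by replacing the $\pm 1$ contributions with $d$-dimensional vectors and by replacing the underlying scalar shuffle-DP summation primitive used internally with a shuffle-DP vector analog. Such a vector primitive, which aggregates $n$ unit vectors in $\bB^d$ with $(\eps,\delta)$-DP and $L_2$ error $\tilde O(\sqrt{d}\polylog(1/\delta)/\eps)$ matching the central Gaussian mechanism for $\ell_2$-sensitivity $1$, is available from prior work on shuffle-DP aggregation~\citep{ghazi2020private,balle2020private,GKMPS21-icml}.

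\textbf{Construction and analysis.} Each user $i$ with input $(y_i,v_i)$ submits the same message structure as in the scalar oracle of~\citet{ghazi2019power}, with each scalar ``$+1$'' payload replaced by $v_i$ and each scalar noise payload replaced by a zero-mean vector drawn from a centrally symmetric distribution on $\bB^d$. The decoder aggregates the vector payloads bucket-by-bucket using the shuffle-DP vector primitive, producing an (approximately) unbiased estimator of $v_y=\sum_{i:y_i=y}v_i$ per query bucket $y$. Privacy follows mutatis mutandis from the amplification-by-shuffling argument of~\citet{ghazi2019power}, which is blackbox in the scalar-versus-vector nature of the payload. Accuracy follows because the noise at each queried bucket is a sum of $\tilde O(\polylog/\eps^2)$ independent zero-mean vectors in $\bB^d$, whose $L_2$ norm concentrates at $\tilde O(\sqrt{d}\polylog/\eps)$ by \Cref{lem:vector-chernoff} (or equivalently by the vector primitive's own guarantee). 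The running-time claim is immediate since each message is now a $d$-dimensional vector in place of a scalar and the decoder makes the same calls as in the scalar case.

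\textbf{Main obstacle.} The crucial point is to avoid an extra $\sqrt{d}$ blowup beyond the primitive's inherent $\sqrt{d}$: a naive per-coordinate reduction, running the scalar oracle $d$ times and stitching the outputs, would add $d$ independent scalar noises of scale $\tilde O(\sqrt{d}/\eps)$ in quadrature to an $L_2$ noise of scale $\tilde O(d/\eps)$. Achieving the stated bound requires that the per-bucket vector aggregation be performed by a single $\ell_2$-sensitivity-$1$ vector mechanism (with coordinates correlated through a common Gaussian-like noise vector), rather than by $d$ independent scalar ones. Verifying that the~\citet{ghazi2019power} scaffolding admits this substitution cleanly---without perturbing the shuffling-amplification privacy analysis and without inflating the $\polylog$ overheads in the accuracy---is the main technical step; a secondary subtlety is checking that the vector fakes can be chosen so as to simultaneously satisfy the symmetry needed for unbiasedness and the tail bounds needed for concentration.
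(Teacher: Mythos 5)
Your proposal identifies the right high-level target (get the near-central $\tilde{O}(\sqrt{d}/\eps)$ error by using a single $\ell_2$-sensitivity-$1$ vector mechanism per bucket rather than $d$ independent scalar ones), but as written it is a plan rather than a proof: the two items you yourself flag as ``the main technical step'' and the ``secondary subtlety'' are exactly where the mathematical content lies, and they are left unverified. The assertion that the privacy analysis of the \citet{ghazi2019power} histogram protocol is ``blackbox in the scalar-versus-vector nature of the payload'' is not justified and is dubious as stated: that analysis is tailored to counting-type contributions, and replacing $\pm 1$ payloads by arbitrary vectors in $\bB^d$ changes the sensitivity structure, so one needs a genuinely $d$-dimensional $(\eps,\delta)$-DP argument for $\ell_2$-bounded sums. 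Your accuracy sketch also reveals that the noise structure has not been pinned down: noise that is ``a sum of $\tilde{O}(\polylog/\eps^2)$ independent zero-mean vectors in $\bB^d$'' would concentrate at norm $\tilde{O}(\polylog/\eps)$ by \Cref{lem:vector-chernoff}, which is both inconsistent with the $\tilde{O}(\sqrt{d}\,\polylog/\eps)$ bound you claim and too small to be compatible with privacy, since any $(\eps,\delta)$-DP mechanism for vector sums of $\ell_2$-sensitivity $1$ must incur error $\Omega(\sqrt{d})$ in some direction.

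The paper's proof does not reuse the histogram scaffolding at all. It hashes the bucket identity into $s = 2n/\beta$ slots using public randomness (collisions with the queried bucket ruled out with probability $1-\beta/2$ by a union bound), quantizes each coordinate to precision $1/n$, transmits each (slot, coordinate) value through the split-and-mix protocol over $\mathbb{F}_p$ (\Cref{alg:ikos_encoder}), and has one designated user add discrete Gaussian noise $\mathcal{N}_{\mathbb{Z}}(0,\sigma^2)$ per coordinate with $\sigma = \Theta(\log(sd/\delta)/\eps)$. Privacy then follows from \Cref{th:ikos_analysis} (the shuffled transcript is statistically close to the central sum-plus-noise output) combined with \Cref{cor:discrete_gaussian_vector_summation}, whose proof via \Cref{th:cks_apx} and Bernstein's inequality is precisely the vector-valued, $\ell_2$-sensitivity privacy argument your plan is missing; utility follows from discrete Gaussian tail bounds plus the $O(\sqrt{d})$ quantization error. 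If you want to salvage your route, you would have to carry out the substitution concretely, e.g., run the shuffle-model vector-aggregation machinery of \citet{balle2020private} and \citet{ghazi2020private} independently within each hashed bucket; at that point you have essentially reconstructed the paper's protocol, and you would still need to supply the discrete-noise privacy analysis and handle quantization and wraparound over $\mathbb{F}_p$, none of which follows ``mutatis mutandis'' from the scalar case.
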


The rest of this subsection is decidated to the proof of \Cref{bucketized_vector_summation_shuffle}; note that the theorem and the second part of \Cref{lem:hist-to-genhist} immediately imply \Cref{thm:vec-sum-shuffle}.

In order to prove Theorem~\ref{bucketized_vector_summation_shuffle}, we recall the following theorem about the analysis of a variant of the split-and-mix protocol of \citet{ishai2006cryptography}. For more context, see e.g., \citep{ghazi2020private,balle2020private} and the references therein.


We start by defining the notion of secure protocols; roughly, their transcripts are statistically indistinguishable when run on any two inputs that have the same function value.
\begin{definition}[$\sigma$-secure shuffle protocols]
Let $\nu$ be a positive real number. A one-round shuffle model protocol $\mathcal{P} = (\enc, \mathcal{A})$ is said to be \emph{$\nu$-secure} for computing a function $f:\mathcal{X}^n \to \mathbb{Z}$ if for any $\bx, \bx' \in \mathcal{X}^n$ such that $f(\bx) = f(\bx')$, we have $\SD(S_{\bx}^{\enc}, S_{\bx'}^{\enc}) \le 2^{-\nu}$.
\end{definition}

\begin{theorem}[\citep{ghazi2020private,balle2020private}]\label{th:ikos_analysis}
Let $n$ and $q$ be positive integers, and $\nu > 0$ be a real number. The split-and-mix protocol of \citet{ishai2006cryptography} (\Cref{alg:ikos_encoder}) with $n$ parties and inputs in $\mathbb{F}_q$ is $\nu$-secure for $f(\bx) = \sum_{i=1}^n x_i$ when $m \geq O(1 + \frac{\nu + \log{q}}{\log{n}})$.
\end{theorem}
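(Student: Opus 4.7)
The plan is to show that the shuffled output distribution $S_\bx^{\enc}$ is statistically close to a canonical ``ideal'' distribution $\mathcal{I}_s$ depending only on $s = f(\bx)$, and then conclude via triangle inequality. The natural choice for $\mathcal{I}_s$ is the distribution on $\mathbb{F}_q^{nm}$ obtained by sampling $nm$ i.i.d.~uniform elements conditioned on their sum being $s$; this is manifestly shuffle-invariant and manifestly independent of $\bx$ beyond $f(\bx)$. Once I establish $\SD(S_\bx^{\enc}, \mathcal{I}_s) \leq 2^{-\nu-1}$, the triangle inequality yields $\SD(S_\bx^{\enc}, S_{\bx'}^{\enc}) \leq 2^{-\nu}$ whenever $f(\bx) = f(\bx')$.

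To bound $\SD(S_\bx^{\enc}, \mathcal{I}_s)$, I would work with the likelihood ratio. A direct counting argument shows that for any $y \in \mathbb{F}_q^{nm}$ with $\sum_k y_k = s$, the ratio $S_\bx^{\enc}(y) / \mathcal{I}_s(y)$ equals $T_\bx(y) / \E_{\mathcal{I}_s}[T_\bx]$, where $T_\bx(y)$ counts the ordered partitions of $[nm]$ into $n$ groups of size $m$ such that the $i$-th group's $y$-values sum to $x_i$. By linearity of expectation, $\E_{\mathcal{I}_s}[T_\bx] = \binom{nm}{m,\dots,m} / q^{n-1}$, since each of the $\binom{nm}{m,\dots,m}$ partitions satisfies the $n-1$ independent sum constraints (the $n$-th being implied by $\sum x_i = s$) with probability $q^{-(n-1)}$. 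Applying the standard inequality $\SD \leq \tfrac{1}{2}\sqrt{\chi^2}$, this reduces the theorem to a second-moment estimate: $\Var_{\mathcal{I}_s}(T_\bx) / \E[T_\bx]^2 \leq 2^{-2(\nu+1)}$.

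The central technical step is computing $\E[T_\bx^2]$ by enumerating pairs $(\sigma, \sigma')$ of ordered partitions. For each pair, the joint probability (over $y \sim \mathcal{I}_s$) that both simultaneously satisfy their sum constraints equals $q^{-r}$, where $r$ is the rank of the combined linear system and is determined by the $n \times n$ contingency table $(|\sigma^{-1}(i) \cap \sigma'^{-1}(j)|)_{i,j \in [n]}$ that records the overlap between the partitions. The bulk of pairs have ``generic'' overlap at which this rank is maximal ($2n-1$), contributing essentially $\E[T_\bx]^2$ to the second moment; the remaining ``degenerate'' pairs have smaller rank (hence a larger joint validity probability of $q^{-r}$) but are quantitatively rarer, and their excess contribution to $\Var(T_\bx)$ can be bounded by grouping them according to the rank deficiency. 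Summing across overlap patterns yields $\Var(T_\bx) / \E[T_\bx]^2 \leq q^{O(n)} / \binom{nm}{m,\dots,m}$, and using Stirling's formula $\binom{nm}{m,\dots,m} \geq n^{nm} / \poly(nm)$, this is at most $2^{-2(\nu+1)}$ precisely when $m \geq C(1 + (\nu + \log q)/\log n)$ for a suitable absolute constant $C$.

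The main obstacle is the delicate combinatorial classification in the second-moment step: I need to carefully balance the fact that rank-deficient pairs have geometrically larger per-pair contribution against the fact that they form a vanishing fraction of all partition pairs, across all possible contingency-table profiles. An alternative route I would pursue in parallel is Fourier-analytic: nontrivial additive characters of $\mathbb{F}_q^{nm}$ correspond to nonconstant coefficient vectors $c \in \mathbb{F}_q^{nm}$, and a short calculation shows that $\widehat{S_\bx^{\enc}}(c)$ vanishes unless the shuffle places the same character coefficient within every party's block, an event whose probability is $n^{-\Omega(m)}$ for any $c$ with a nondegenerate multiplicity profile; applying Plancherel's identity then yields the same quantitative threshold on $m$, giving a cross-check of the combinatorial computation.
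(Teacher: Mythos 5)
You should first note that the paper does not actually prove \Cref{th:ikos_analysis}: it is imported as a black box from \citet{ghazi2020private} and \citet{balle2020private}, so there is no in-paper argument to compare against. That said, your overall strategy --- compare $S_{\bx}^{\enc}$ to the distribution of $nm$ i.i.d.\ uniform field elements conditioned on summing to $s$, write the likelihood ratio in terms of the partition-counting function $T_{\bx}$, and control it by a second moment over pairs of ordered partitions classified by their overlap pattern --- is a sound and essentially standard route, close in spirit to the cited analyses (which likewise reduce security to the rarity of ``disconnection'' events among re-groupings of the shuffled messages). Your first-moment computation $\E_{\mathcal{I}_s}[T_{\bx}] = \binom{nm}{m,\dots,m}\, q^{-(n-1)}$ is correct, as is the identification of the joint validity probability of a pair $(\sigma,\sigma')$ as $q^{-r}$ with $r$ governed by the connectivity of the bipartite overlap graph of the contingency table.

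The genuine problem is the claimed outcome of the central estimate. The bound $\Var(T_{\bx})/\E[T_{\bx}]^2 \le q^{O(n)}/\binom{nm}{m,\dots,m}$ is false: the variance is dominated by the \emph{minimal} rank-deficient pairs, namely those in which a single block of $\sigma$ coincides as a set with a single block of $\sigma'$. Such pairs form a $\Theta\bigl(n^2/\binom{nm}{m}\bigr)$ fraction of all pairs, and each contributes $q$ times the generic per-pair weight (an excess factor of $q-1$), so in fact
\[
\frac{\Var(T_{\bx})}{\E[T_{\bx}]^2} \;=\; \Theta\!\left(\frac{n^2\, q}{\binom{nm}{m}}\right),
\]
which is astronomically larger than $q^{O(n)}/\binom{nm}{m,\dots,m}\approx q^{O(n)} n^{-nm}$. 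Moreover, even taken at face value your bound would produce a threshold of the form $m \ge O\bigl(1 + \tfrac{\log q}{\log n} + \tfrac{\nu}{n\log n}\bigr)$, which does not match the $m \ge O\bigl(1 + \tfrac{\nu+\log q}{\log n}\bigr)$ you then assert. The correct way to finish is to organize rank-deficient pairs by the size $a \in \{1,\dots,n-1\}$ of a disconnecting block-set: a union of $a$ blocks of $\sigma$ equals a union of $a$ blocks of $\sigma'$ with probability $1/\binom{nm}{am}$ for each of the $\binom{n}{a}^2$ choices, and each additional connected component buys at most a factor of $q$, yielding a bound of the shape $\sum_{a=1}^{n-1}\binom{n}{a}^2 q^{a}/\binom{nm}{am}$ whose dominant $a=1$ term gives exactly $m \ge O\bigl(1 + \tfrac{\nu + \log q}{\log n}\bigr)$. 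So the skeleton of your argument is right, but the stated endpoint of the key computation must be corrected before the proof closes; the Fourier cross-check you sketch would have caught this, since the leading nonvanishing characters correspond precisely to the single-block coincidence event.
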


We will also use the discrete Gaussian distribution from \citet{canonne2020discrete}.

\begin{definition}[\citep{canonne2020discrete}]\label{th:bernstein_ineq}
Let $\mu$ and $\sigma > 0$ be real numbers. The \emph{discrete Gaussian distribution} with location $\mu$ and scale $\sigma$, denoted by $\mathcal{N}_{\mathbb{Z}}(\mu, \sigma^2)$, is the discrete probability distribution supported on the integers and defined by
\begin{equation*}
    \Pr_{X \sim \mathcal{N}_{\mathbb{Z}}(\mu, \sigma^2)}[X= x] = \frac{e^{-(x-\mu)^2/(2 \sigma^2)}}{\sum_{y \in \mathbb{Z}} e^{-(y-\mu)^2/(2 \sigma^2)}},
\end{equation*}
for all $x \in \mathbb{Z}$.
\end{definition}

We will use the following well-known concentration inequality.
\begin{definition}[Bernstein inequality]
Let $X_1, \dots, X_n$ be independent zero-mean random variables. Suppose that $|X_i| \le M$ for all $i \in \{1, \dots, n\}$, where $M$ is a non-negative real number. Then, for any positive real number $t$, it holds that
\begin{equation*}
    \Pr\left[\sum_{i=1}^n X_i \geq t\right] \le \exp\bigg(-\frac{0.5 \cdot t^2}{\sum_{i=1}^n \E[X_i^2] + \frac{1}{3} M t}\bigg).
\end{equation*}
\end{definition}

We will also need the following tail bounds for Gaussian random variables.
\begin{proposition}[Tail bound for continuous Gaussians]\label{prop:tail_bd_continuous}
For any positive real number $x$, it holds that
\begin{equation*}
    \Pr_{X \sim \mathcal{N}(0, 1)}[X \geq x] \le \frac{\exp(-x^2/2)}{x \sqrt{2 \pi}}.
\end{equation*}
\end{proposition}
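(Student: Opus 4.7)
The plan is to use the standard Mills-ratio trick: upper-bound the Gaussian tail integrand by multiplying by a factor that is at least $1$ on the domain of integration but makes the resulting integral elementary.

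First I would write the probability explicitly as
\[
\Pr_{X \sim \mathcal{N}(0,1)}[X \geq x] = \frac{1}{\sqrt{2\pi}} \int_x^{\infty} e^{-t^2/2}\, dt.
\]
Next, since $x > 0$ and the integration runs over $t \geq x$, we have $t/x \geq 1$ throughout, so the integrand satisfies $e^{-t^2/2} \leq (t/x)\, e^{-t^2/2}$. Substituting this bound yields
\[
\Pr_{X \sim \mathcal{N}(0,1)}[X \geq x] \leq \frac{1}{x\sqrt{2\pi}} \int_x^{\infty} t\, e^{-t^2/2}\, dt.
\]

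Then I would evaluate the right-hand integral exactly by recognizing $t\, e^{-t^2/2}$ as the derivative of $-e^{-t^2/2}$, so
\[
\int_x^{\infty} t\, e^{-t^2/2}\, dt = \Bigl[-e^{-t^2/2}\Bigr]_x^{\infty} = e^{-x^2/2},
\]
which, combined with the previous display, gives the claimed bound $\exp(-x^2/2) / (x\sqrt{2\pi})$.

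There is no real obstacle here; the argument is a two-line textbook computation. The only small subtlety is the use of the hypothesis $x > 0$ (required both so that $t/x \geq 1$ on $[x,\infty)$ and so that the final denominator is positive), which is implicit in the proposition's phrasing ``any positive real number $x$.''
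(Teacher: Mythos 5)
Your argument is correct: the Mills-ratio bound $e^{-t^2/2} \le (t/x)\,e^{-t^2/2}$ for $t \ge x > 0$ followed by the exact evaluation $\int_x^{\infty} t\,e^{-t^2/2}\,dt = e^{-x^2/2}$ gives precisely the stated inequality. The paper treats this proposition as a standard fact and gives no proof of its own, so your two-line computation is exactly the classical argument it implicitly relies on.
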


\begin{proposition}[Tail bound for discrete Gaussians; Proposition $25$ of~\citep{canonne2020discrete}]\label{prop:tail_bd_discrete}
For any positive integer $m$ and any positive real number $\sigma$, it holds that
\begin{equation*}
    \Pr_{X \sim \mathcal{N}_{\mathbb{Z}}(0, \sigma^2)}[X \geq m] \le \Pr_{X \sim \mathcal{N}(0, \sigma^2)}[X \geq m-1].
\end{equation*}
\end{proposition}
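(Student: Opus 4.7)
The plan is to compare the discrete tail sum to a continuous tail integral by two elementary steps: (i) dominating each discrete term $e^{-x^2/(2\sigma^2)}$ by the integral of the same Gaussian density over the unit interval \emph{immediately to the left} of $x$, and (ii) showing that the discrete Gaussian normalizing constant is at least the continuous one $\sqrt{2\pi\sigma^2}$, so dividing by the discrete normalizer is at most as damaging as dividing by the continuous one.

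For step (i), since $m \geq 1$, every integer $x \geq m$ satisfies $x \geq 1$, so the map $t \mapsto e^{-t^2/(2\sigma^2)}$ is nonincreasing on $[x-1, x] \subseteq [0, \infty)$. This gives $e^{-x^2/(2\sigma^2)} \leq \int_{x-1}^{x} e^{-t^2/(2\sigma^2)}\, dt$, and summing the telescoping intervals for $x = m, m+1, \ldots$ yields
\[
\sum_{x=m}^{\infty} e^{-x^2/(2\sigma^2)} \;\leq\; \int_{m-1}^{\infty} e^{-t^2/(2\sigma^2)}\, dt.
\]

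For step (ii), I would invoke the Poisson summation formula applied to $f(t) = e^{-t^2/(2\sigma^2)}$, whose Fourier transform is $\widehat f(\xi) = \sqrt{2\pi\sigma^2}\, e^{-2\pi^2 \sigma^2 \xi^2}$. Then
\[
\sum_{y \in \mathbb{Z}} e^{-y^2/(2\sigma^2)} \;=\; \sqrt{2\pi\sigma^2}\sum_{k \in \mathbb{Z}} e^{-2\pi^2\sigma^2 k^2} \;\geq\; \sqrt{2\pi\sigma^2},
\]
because the $k=0$ term contributes exactly $1$ and all other terms are strictly positive. Combining (i) and (ii),
\[
\Pr_{X \sim \mathcal{N}_{\mathbb{Z}}(0,\sigma^2)}[X \geq m]
= \frac{\sum_{x \geq m} e^{-x^2/(2\sigma^2)}}{\sum_{y \in \mathbb{Z}} e^{-y^2/(2\sigma^2)}}
\;\leq\; \frac{\int_{m-1}^{\infty} e^{-t^2/(2\sigma^2)}\, dt}{\sqrt{2\pi\sigma^2}}
= \Pr_{X \sim \mathcal{N}(0,\sigma^2)}[X \geq m-1],
\]
which is exactly the claim.

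The only nontrivial ingredient is step (ii); this is where one must be careful. If one wants to avoid invoking Poisson summation, an alternative route is to note that $\sum_{y \in \mathbb{Z}} e^{-y^2/(2\sigma^2)}$ is, for each fixed $\sigma$, a Riemann-type sum which can be bounded below by integrating $e^{-t^2/(2\sigma^2)}$ over a shifted lattice and using unimodality/symmetry arguments; but Poisson summation is by far the cleanest and gives the bound with essentially no calculation. Step (i) is routine monotonicity and uses only $m \geq 1$, so the essential work is entirely in establishing the normalizer inequality.
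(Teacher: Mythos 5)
Your proof is correct: step (i) is a standard monotone comparison of the discrete tail sum to the shifted continuous tail integral, and step (ii) correctly lower-bounds the discrete normalizer by $\sqrt{2\pi\sigma^2}$ via Poisson summation, which together give exactly the claimed bound. Note that the paper itself does not prove this proposition --- it imports it as Proposition 25 of \citet{canonne2020discrete} --- and your argument is essentially the same one given in that reference (sum-to-integral comparison plus the Poisson-summation bound on the partition function), so there is nothing to add.
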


We will moreover need the following upper bound on the variance of discrete Gaussians.
\begin{proposition}[Proposition~$21$ of~\citep{canonne2020discrete}]\label{prop:var_ub_discrete_Gauss}
For any positive real number $\sigma$, it holds that
\begin{equation*}
    \Var_{X \sim \mathcal{N}_{\mathbb{Z}}(0, \sigma^2)}[X] < \sigma^2.
\end{equation*}
\end{proposition}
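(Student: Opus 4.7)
The plan is to turn the variance computation into a logarithmic derivative of the partition function and then bound that derivative using Poisson summation. By symmetry of $\mathcal{N}_{\mathbb{Z}}(0,\sigma^2)$ about $0$ we have $\mathbb{E}[X]=0$, so the claim reduces to showing $S(\sigma)/Z(\sigma) < \sigma^2$, where
\[
Z(\sigma) := \sum_{n\in\mathbb{Z}} e^{-n^2/(2\sigma^2)},\qquad S(\sigma) := \sum_{n\in\mathbb{Z}} n^2\, e^{-n^2/(2\sigma^2)}.
\]
The first step is to differentiate $Z$ in $\sigma$ under the summation sign (justified by rapid decay), yielding $Z'(\sigma) = \sigma^{-3} S(\sigma)$, hence $\Var[X] = \sigma^3 (\log Z)'(\sigma)$.

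The second step is to rewrite $Z(\sigma)$ via Poisson summation. The continuous Fourier transform of $x \mapsto e^{-x^2/(2\sigma^2)}$ is $\xi \mapsto \sigma\sqrt{2\pi}\, e^{-2\pi^2\sigma^2 \xi^2}$, so Poisson summation gives
\[
Z(\sigma) \;=\; \sigma\sqrt{2\pi}\cdot G(\sigma), \qquad \text{where } G(\sigma) := \sum_{k\in\mathbb{Z}} e^{-2\pi^2\sigma^2 k^2}.
\]
Taking logarithms, $(\log Z)'(\sigma) = 1/\sigma + (\log G)'(\sigma)$.

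The third step is to observe that $G$ is strictly decreasing in $\sigma$: differentiating term by term,
\[
G'(\sigma) \;=\; -4\pi^2\sigma \sum_{k\neq 0} k^2 e^{-2\pi^2\sigma^2 k^2} \;<\; 0,
\]
and $G(\sigma) > 0$, so $(\log G)'(\sigma) < 0$. Combining with the previous step yields $(\log Z)'(\sigma) < 1/\sigma$, and multiplying by $\sigma^3$ gives $\Var[X] = \sigma^3(\log Z)'(\sigma) < \sigma^2$, as desired.

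The main technical point is justifying the two analytic manipulations, namely interchange of differentiation and summation in deriving $S(\sigma) = \sigma^3 Z'(\sigma)$, and the Poisson summation identity for the Gaussian. Both are standard because the Gaussian and all its relevant derivatives are Schwartz functions, so I expect this to be routine rather than an obstacle; the actual content of the proof lies entirely in reading off the strict inequality from $G'(\sigma) < 0$, which is what drives the gap between $\Var[X]$ and $\sigma^2$.
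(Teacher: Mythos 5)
Your argument is correct: the identity $\Var[X]=\sigma^3(\log Z)'(\sigma)$, the Poisson-summation rewriting $Z(\sigma)=\sigma\sqrt{2\pi}\,G(\sigma)$, and the strict monotonicity of $G$ together give exactly the claimed strict bound, and all the term-by-term manipulations are justified by the rapid decay of the Gaussian. The paper itself does not prove this proposition but imports it from \citet{canonne2020discrete}, and your proof is essentially the one given there, so there is nothing to add.
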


Finally, we will use the following theorem quantifying the differential privacy property for the discrete Gaussian mechanism.
\begin{theorem}[Theorem~$15$ of~\citep{canonne2020discrete}]\label{th:cks_apx}
Let $\sigma_1, \dots, \sigma_d$ be positive real numbers. Let $Y_1, \dots, Y_d$ be i.i.d. random variables each drawn from $\mathcal{N}_{\mathbb{Z}} (0, \sigma_j^2)$. Let $M : \mathcal{X}^n \to \mathbb{Z}^d$ be a randomized algorithm given by $M(x) = q(x) + Y$, where $Y = (Y_1, \dots, Y_d)$. Then, $M$ is $(\epsilon, \delta)$-DP if and only if for all neighboring $\bx, \bx' \in \mathcal{X}^n$, it holds that
\begin{equation*}
    \delta \geq \Pr[Z > \epsilon] - e^{\epsilon} \cdot \Pr[Z < -\epsilon],
\end{equation*}
where
\begin{equation*}
    Z := \sum_{j=1}^d \frac{ (q(x)_j - q(x')_j)^2 + 2 \cdot (q(x)_j - q(x')_j) \cdot Y_j }{2 \sigma_j^2}.
\end{equation*}
\end{theorem}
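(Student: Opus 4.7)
The plan is to prove the theorem via the standard privacy-loss random variable (PLRV) characterization of $(\epsilon,\delta)$-DP. Writing $P = M(\bx)$ and $P' = M(\bx')$ for neighbors $\bx, \bx'$, I will first recall (and briefly justify) that $(\epsilon,\delta)$-DP is equivalent to
\[
\delta \geq \Pr_{y \sim P}[L(y) > \epsilon] \;-\; e^{\epsilon}\,\Pr_{y \sim P'}[L(y) > \epsilon],
\]
where $L(y) := \log\bigl(P(y)/P'(y)\bigr)$ is the PLRV. The ``only if'' direction uses this inequality applied to the specific output set $S = \{y : L(y) > \epsilon\}$. The ``if'' direction is a one-line argument: for any measurable $S$, split it via $\{L > \epsilon\}$ and observe that on the complement $P \leq e^\epsilon P'$, so the integrand $dP - e^\epsilon dP'$ is nonpositive there, leaving the bound controlled by the integral over $\{L > \epsilon\}$.

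Next, I will compute the PLRV explicitly for the discrete Gaussian mechanism. The product-form density is $P(y) \propto \prod_j \exp\!\bigl(-(y_j - q(\bx)_j)^2/(2\sigma_j^2)\bigr)$, and the normalizing constants for $P$ and $P'$ are identical because each coordinate sums the same Gaussian envelope over all of $\Z$. With $\Delta_j := q(\bx)_j - q(\bx')_j$, expanding the quadratics gives
\[
L(y) \;=\; \sum_{j=1}^d \frac{(y_j - q(\bx')_j)^2 - (y_j - q(\bx)_j)^2}{2\sigma_j^2} \;=\; \sum_{j=1}^d \frac{\Delta_j^2 + 2\Delta_j\bigl(y_j - q(\bx)_j\bigr)}{2\sigma_j^2}.
\]
Under $y \sim P$, the variables $Y_j = y_j - q(\bx)_j$ are i.i.d.\ $\mathcal{N}_{\Z}(0,\sigma_j^2)$, so $L \stackrel{d}{=} Z$ as defined in the statement. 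That identification handles the first probability in the characterization.

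For the second probability, I will use the symmetry of the discrete Gaussian: $\mathcal{N}_{\Z}(0,\sigma^2)$ is invariant under negation. Under $y \sim P'$ we have $Y_j := y_j - q(\bx')_j \sim \mathcal{N}_{\Z}(0,\sigma_j^2)$, and substituting $y_j - q(\bx)_j = Y_j - \Delta_j$ into the formula above yields
\[
L(y) \;=\; \sum_{j=1}^d \frac{-\Delta_j^2 + 2\Delta_j Y_j}{2\sigma_j^2},
\]
whose distribution, by replacing each $Y_j$ with $-Y_j$, equals that of $-Z$. Hence $\Pr_{y \sim P'}[L > \epsilon] = \Pr[-Z > \epsilon] = \Pr[Z < -\epsilon]$, and the equivalence of DP with the stated inequality follows at once.

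The main technical point (rather than obstacle) is the PLRV characterization itself; once granted, the rest is a direct density calculation plus the negation symmetry of the discrete Gaussian. The only subtle book-keeping is to (i) confirm the two normalizing constants cancel, so that $L$ is genuinely the quadratic expression with no extra constant, and (ii) check that both sides of the equivalence quantify uniformly over neighboring pairs $(\bx,\bx')$, which is built into the DP definition and into the right-hand side of the theorem.
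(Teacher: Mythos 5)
Your proof is correct. Note that the paper does not prove this statement at all — it is imported verbatim as Theorem~15 of the cited work of Canonne, Kamath, and Steinke — so there is no in-paper argument to compare against; your privacy-loss-random-variable derivation is the standard one (and essentially the one in the cited source). The two points you flag as needing care are indeed the right ones, and both check out: the normalizers cancel because $q(x)_j, q(x')_j \in \mathbb{Z}$ so each coordinate's partition function is the same integer-shifted sum, and the identification of the privacy loss under $P'$ with $-Z$ uses exactly the negation symmetry of $\mathcal{N}_{\mathbb{Z}}(0,\sigma_j^2)$.
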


Using Theorem~\ref{th:cks_apx}, we obtain the following:

\begin{corollary}\label{cor:discrete_gaussian_vector_summation}
Let $\epsilon >0$ and $\delta \in (0, 1)$ be given real numbers. Let $\mathcal{X}$ denote the set of all vectors in $\mathbb{Z}^d$ with $\ell_2$-norm at most $C$. Define the randomized algorithm $M:\mathcal{X}^d \to \mathbb{Z}^d$ as $M(x) = q(x) + Y$, where $Y = (Y_1, \dots, Y_d)$ with $Y_1, \dots, Y_d$ being i.i.d. random variables each drawn from $\mathcal{N}_{\mathbb{Z}} (0, \sigma^2)$, and where $q(x) = \sum_{i=1}^n x_i$ is the vector sum. Then, there exists $\sigma = \frac{10 C \log(d/\delta)}{\epsilon}$ for which $M$ is $(\epsilon, \delta)$-DP.
\end{corollary}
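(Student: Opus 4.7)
My plan is to apply Theorem~\ref{th:cks_apx} with the uniform choice $\sigma_1 = \cdots = \sigma_d = \sigma := 10 C \log(d/\delta)/\epsilon$. Under this choice, it suffices to prove that for every pair of neighboring inputs $\bx, \bx' \in \mathcal{X}^n$ the inequality $\delta \geq \Pr[Z > \epsilon] - e^{\epsilon}\, \Pr[Z < -\epsilon]$ holds. Since $e^{\epsilon}\Pr[Z < -\epsilon] \geq 0$, it is enough to show $\Pr[Z > \epsilon] \leq \delta$.

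Writing $\Delta := q(\bx) - q(\bx') \in \mathbb{Z}^d$, the fact that $\bx, \bx'$ differ in a single user's input together with $\|x_i\|_2, \|x'_i\|_2 \leq C$ gives $\|\Delta\|_2 \leq 2C$ by the triangle inequality. The random variable $Z$ from the theorem then simplifies to
\[
Z \;=\; \frac{\|\Delta\|_2^2}{2\sigma^2} \;+\; \frac{\langle \Delta, Y\rangle}{\sigma^2}.
\]
The first (deterministic) term is at most $2C^2/\sigma^2$, which for our choice of $\sigma$ equals $\epsilon^2/(50 \log^2(d/\delta))$ and is thus at most $\epsilon/2$ (using $\epsilon = O(1)$ and $\log(d/\delta) \geq 1$). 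So the task reduces to establishing $\Pr\bigl[\langle \Delta, Y\rangle > \epsilon \sigma^2/2\bigr] \leq \delta$.

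The heart of the argument is a concentration bound for $\langle \Delta, Y\rangle$, a $\Delta$-weighted sum of independent discrete Gaussians. I would proceed by the subgaussian route: each $Y_j \sim \mathcal{N}_{\mathbb{Z}}(0, \sigma^2)$ is $\sigma^2$-subgaussian (a standard property of the discrete Gaussian proved in \citet{canonne2020discrete}), so by independence $\langle \Delta, Y\rangle$ is $\|\Delta\|_2^2\, \sigma^2$-subgaussian, yielding $\Pr[\langle \Delta, Y\rangle > t] \leq \exp\!\bigl(-t^2/(2\|\Delta\|_2^2\sigma^2)\bigr) \leq \exp\!\bigl(-t^2/(8C^2\sigma^2)\bigr)$. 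Plugging $t = \epsilon \sigma^2/2$ and $\sigma = 10 C \log(d/\delta)/\epsilon$ gives an upper bound of $\exp(-\Omega(\log^2(d/\delta))) \ll \delta$ with room to spare. As a backup approach not relying on subgaussianity, one can instead truncate via Proposition~\ref{prop:tail_bd_discrete} and Proposition~\ref{prop:tail_bd_continuous} (bounding all $|Y_j|$ simultaneously except with probability $\delta/2$ by a union bound over $j \in [d]$), then apply Bernstein's inequality with variance $\leq \|\Delta\|_2^2 \sigma^2$ given by Proposition~\ref{prop:var_ub_discrete_Gauss}.

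The main obstacle is technical rather than conceptual: the propositions listed in the excerpt provide only a per-coordinate tail bound and a variance bound for the discrete Gaussian, but not the subgaussian concentration for a linear combination. One must therefore either cite the subgaussian MGF estimate for the discrete Gaussian (available in \citet{canonne2020discrete}) or carry out the Bernstein-with-truncation computation explicitly, tracking constants carefully to verify that the stated constant $10$ in the definition of $\sigma$ indeed suffices. Everything else is routine algebra on the quantity $Z$ defined in Theorem~\ref{th:cks_apx}.
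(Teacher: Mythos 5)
Your proposal is correct, and its primary route genuinely differs from the paper's in the concentration step. Both you and the paper reduce the claim, via Theorem~\ref{th:cks_apx}, to showing $\Pr[Z > \epsilon] \le \delta$, and both exploit that $Z$ splits into the deterministic term $\|\Delta\|^2/(2\sigma^2)$ (with $\Delta = q(\bx) - q(\bx')$, $\|\Delta\| \le 2C$) plus the linear term $\langle \Delta, Y\rangle/\sigma^2$. Where you invoke the subgaussian moment-generating-function bound for $\mathcal{N}_{\mathbb{Z}}(0,\sigma^2)$ from \citet{canonne2020discrete} to get a one-line Chernoff tail $\exp\bigl(-t^2/(2\|\Delta\|^2\sigma^2)\bigr)$, the paper instead truncates: it defines the event $\mathcal{E}$ that every $|Y_i| \le M = 2\sigma\sqrt{2\ln(2d/\delta)}$, shows $\Pr[\mathcal{E}] \ge 1 - \delta/2$ via Propositions~\ref{prop:tail_bd_discrete} and~\ref{prop:tail_bd_continuous}, argues by symmetry that conditioning on $\mathcal{E}$ leaves $\E[Z]$ unchanged and can only decrease the variance (bounded through Proposition~\ref{prop:var_ub_discrete_Gauss}), and then applies Bernstein's inequality (Theorem~\ref{th:bernstein_ineq}) to get $\Pr[Z > \epsilon \mid \mathcal{E}] \le \delta/2$, finishing with a union bound --- i.e., exactly the backup you sketch. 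Your subgaussian route is shorter and yields a stronger tail, but it relies on an MGF estimate that is not among the propositions the paper quotes, so it must be imported separately from \citet{canonne2020discrete}; the paper's route is self-contained given its listed ingredients, at the price of the mildly delicate conditioning step. In either case the constant-level verification that $\sigma = 10C\log(d/\delta)/\epsilon$ suffices (using $\epsilon = O(1)$) is routine, and your arithmetic checks out at the same level of rigor as the paper's ``sufficiently large $\sigma$'' statement.
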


\begin{proof}[Proof of Corollary~\ref{cor:discrete_gaussian_vector_summation}]
Let $Z$ be defined as in \Cref{th:cks_apx}. We will show that $\delta \geq \Pr[Z > \epsilon]$, which by \Cref{th:cks_apx} means that the algorithm is $(\eps, \delta)$-DP. First, note that
\begin{align*}
    \E[Z] &= \frac{1}{2 \sigma^2} \sum_{j=1}^d (q(x)_j - q(x')_j)^2 = \frac{\|q(x) - q(x')\|^2}{2 \sigma^2}.
\end{align*}
Moreover, using Proposition~\ref{prop:var_ub_discrete_Gauss}, we have that
\begin{align*}
    \Var[Z] \le \frac{1}{\sigma^2} \sum_{j=1}^d (q(x)_j - q(x')_j)^2 
    = \frac{\|q(x) - q(x')\|^2}{\sigma^2}.
\end{align*}

For each $i \in [d]$, define the event $\mathcal{E}_i$ that $|Y_i| \le M$, where
\begin{equation}\label{eq:M_setting}
M = 2 \cdot \sigma \cdot \sqrt{2 \ln(2d/\delta)}.
\end{equation}
Moreover, let $\mathcal{E} = \cap_{i=1}^d \mathcal{E}_i$. Using the fact that $Y_i \sim \mathcal{N}_{\mathbb{Z}} (0, \sigma^2)$ with $\sigma \geq 1$, and applying Propositions~\ref{prop:tail_bd_continuous} and~\ref{prop:tail_bd_discrete}, we get that $\Pr[\mathcal{E}_i] \geq 1 - \delta/(2d)$. By a union bound, we obtain
\begin{equation}\label{eq:E_event_lb_prob}
\Pr[\mathcal{E}] \geq 1-\delta/2.
\end{equation}
Henceforth, we condition on the event $\mathcal{E}$. We next argue that conditioning on $\mathcal{E}$ leaves the expectation of $Z$ unchanged, and can only decrease its variance. Namely, since both the random variable $Y_j$ and the event $\mathcal{E}_j$ are symmetric around $0$, we have that $\E[Y_j | \mathcal{E}_j] = 0$, and hence,
\begin{align}
    \E[Z | \mathcal{E}] &= \sum_{j=1}^d \frac{ (q(x)_j - q(x')_j)^2 + 2 \cdot (q(x)_j - q(x')_j) \cdot \E[Y_j | \mathcal{E}_j] }{2 \sigma_j^2} 
    = \frac{\|q(x) - q(x')\|^2}{2 \sigma^2}
    = \E[Z].  \label{eq:exp_conditional}
\end{align}

Moreover, we have that
\begin{align}
    \Var[Z | \mathcal{E}] &= \sum_{j=1}^d \frac{(q(x)_j - q(x')_j)^2}{\sigma_j^4} \cdot \Var[Y_j | \mathcal{E}_j]  \nonumber\\ 
    &= \sum_{j=1}^d \frac{(q(x)_j - q(x')_j)^2}{\sigma_j^4} \cdot \E[Y_j^2 | \mathcal{E}_j]  \nonumber\\ 
    &\le \sum_{j=1}^d \frac{(q(x)_j - q(x')_j)^2}{\sigma_j^4} \cdot \E[Y_j^2]\label{eq:cond_event_bd}\\ 
    &\le \frac{\|q(x) - q(x')\|^2}{\sigma^2},\label{eq:var_conditional}
\end{align}
where inequality~(\ref{eq:cond_event_bd}) follows from the definition of the event $\mathcal{E}_j$, and inequality~(\ref{eq:var_conditional}) follows from Proposition~\ref{prop:var_ub_discrete_Gauss}. Applying the Bernstein inequality (Theorem~\ref{th:bernstein_ineq}), we get that:
\begin{align}
    \Pr[Z > \epsilon | \mathcal{E}] &\le \exp \bigg(- \frac{0.5 \cdot (\epsilon - \E[Z | \mathcal{E}])^2}{\Var[Z | \mathcal{E}] + C \cdot M \cdot (\epsilon - \E[Z | \mathcal{E}])/(3 \sigma^2)}\bigg) \le \frac{\delta}{2},\label{eq:cond_prob_dev_ub}
\end{align}
where the last inequality follows from plugging in~(\ref{eq:M_setting}),~(\ref{eq:exp_conditional}), and~(\ref{eq:var_conditional}), and using a sufficiently large $\sigma = \frac{10 C \log(d/\delta)}{\epsilon}$. Finally, combining~(\ref{eq:cond_prob_dev_ub}) and~(\ref{eq:E_event_lb_prob}), we deduce that $\Pr[Z > \epsilon] \le \delta$.
\end{proof}

We are now ready to prove Theorem~\ref{bucketized_vector_summation_shuffle}.

\begin{proof}[Proof of Theorem~\ref{bucketized_vector_summation_shuffle}]
The pseudocode for the encoder and decoder of the protocol is given in Algorithms~\ref{alg:shuffle_buck_vec_sum_encoder} and~\ref{alg:shuffle_buck_vec_sum_decoder} respectively. Also, note that the number $t$ of incoming messages in Algorithm~\ref{alg:shuffle_buck_vec_sum_decoder} is equal to $s \cdot m$. We point out that the sum in Algorithm~\ref{alg:shuffle_buck_vec_sum_encoder} is in $\mathbb{F}_q$. We set:
\begin{itemize}
\item $\eta \gets \frac{1}{n}$.
\item $s \gets \frac{2 \cdot n}{\beta}$.
\item $\sigma \gets \frac{20 \log(s d/\delta)}{\epsilon}$.
\item $p \gets $ smallest prime larger than $\frac{2n}{\eta} + 20 \sigma \log(s d /\beta)$.
\item $m \gets O(1 + \frac{ \log{(2dp/\delta)}}{\log{n}})$.
\end{itemize}

\paragraph{Privacy Analysis.}
The analyst observes the output of the shuffler, which is the multiset $\cup_{i=1}^n \mathcal{M}_i$ of messages sent by all the users. This is equivalent to observing a vector $a \in \mathbb{Z}^s$, where $s$ is the number of buckets set in Algorithm~\ref{alg:shuffle_buck_vec_sum_encoder}. Let $\nu = \log(2d/\delta)$ and $m = O(1 + \frac{\nu + \log{p}}{\log{n}})$. Applying Theorem~\ref{th:ikos_analysis} along with a union bound, we get that the distribution of $a$ is $(d \cdot 2^{-\nu})$-close in statistical distance to the output of the central model protocol that computes the true vector sum and then adds a $\mathcal{N}_{\mathbb{Z}}(0, \sigma^2)$ noise random variable to each of the $d$ coordinates. The latter protocol is $(\epsilon, \delta/(2s))$-DP for $\sigma = \frac{20 \log(sd/\delta)}{\epsilon}$ by Corollary~\ref{cor:discrete_gaussian_vector_summation}.\footnote{A similar argument with Discrete Laplace, instead of Discrete Gaussian, noise was previously used in the proof of Lemma 4.1 of \citet{balle2020private}.} By a union bound, we thus conclude that the output of the shuffler is $(\epsilon, \delta)$-DP.

\paragraph{Utility Analysis.}
First, since we pick $s = 2n/\beta$, the probability that a collision occurs between the hash value of a given $y \in Y$ and the hash value of another bucket held by one of the users is at most $n/s \leq \beta / 2$. In other words, with probability at least $1 - \beta/2$ there is no collision with the given $y \in Y$. 

Secondly, by Propositions~\ref{prop:tail_bd_discrete} and~\ref{prop:tail_bd_continuous}, we have that with probability at least $1 - \beta/(2 s)$, it holds that for a discrete Gaussian added to each coordinate of each bucket during an execution of Algortihm~\ref{alg:shuffle_buck_vec_sum_encoder}, its absolute value is at most $10\sigma \cdot \log(sd/\beta)$. Thus, by a union bound, with probability $1 - \beta/2$, it holds that the noise to each coordinate of each bucket is at most $10\sigma \cdot \log(sd/\beta)$ in absolute value.

When the above two events hold, the error is the sum of four independent components: one due to quantization of the input vectors, and the other due to the privacy noise.

To upper-bound the quantization error, note that the error in each coordinate of each input is at most $\eta$. Thus, the error per coordinate of the sum is at most $\eta \cdot n$. Thus, the $\ell_2$-error due to quantization in the vector sum of any bucket is at most $\eta \cdot n \cdot \sqrt{d}$. This is at most $O(\sqrt{d})$ for any $\eta = O(1/n)$.

From the second event, we immediately have that the $\ell_2$-error due to the privacy noise is at most $10\sigma \cdot \log(sd/\beta) \cdot \sqrt{d}$.

Putting things together, a union bound implies that with probability at least $1 - \beta$, the total $\ell_2$-error is at most $O(\sigma \cdot \sqrt{d} \cdot \log(sd/\beta)) = O( \frac{\sqrt{d}}{\epsilon} \cdot \poly\log(d/(\delta \beta)))$.

\paragraph{Communication and Running Time.}
Each of the $n$ users sends in total $O\bigg(\frac{n}{\beta} \cdot  \log(\frac{nd}{\beta\epsilon\delta})\bigg)$ messages each consisting of at most $\log(\frac{nd}{\beta\epsilon\delta})$ bits. Note that pairwise independence can be used to reduce the number of random bits of public randomness, resulting in a total running time for the analyst which is $\poly(nd/\beta, \log\left(\frac{|Y|}{\eps\delta}\right))$.
\end{proof}

\begin{algorithm}[h!]
\caption{Encoder in Split-and-Mix Protocol}
\begin{algorithmic}[1]\label{alg:ikos_encoder}
\Procedure{SplitAndMixEncoder$(x_i, p, m)$}{}
\State Sample $z_1, \dots, z_{m-1}$ i.i.d. uniformly at random from $\mathbb{Z}_p$.
\State $z_m \gets x_i - \sum_{j=1}^{m-1} z_j$.
\State \Return the multiset $\{z_1, \dots, z_m\}$.
\EndProcedure
\end{algorithmic}
\end{algorithm}

\begin{algorithm}[h!]
\caption{Encoder in Shuffle DP Protocol for Bucketized Vector Summation}
\begin{algorithmic}[1]\label{alg:shuffle_buck_vec_sum_encoder}
\Procedure{ShuffleBVSEncoder$_{\eps, \delta}(x; Z, i, y, n, d, \sigma, \eta, s, p, m)$}{}
\State $\overline{x} \gets \lfloor x / \eta \rfloor$ (i.e., $\overline{x}_i$ is the quantization of $x_i$ up to precision $\eta$).
\State $\mathcal{M} \gets \{\}$.
\For{$(\ell, k) \in \{1, \dots, s\} \times \{1, \dots, d\}:$}
\SubIf $\ell = Z_{1, y}$:
\SubSubState $u_{\ell} \gets x_k$.
\SubElse
\SubSubState $u_{\ell} \gets 0$.
\EndIf
\SubIf $i = 1$:
\SubSubState $u_{\ell} \gets u_{\ell} + \mathcal{N}_{\mathbb{Z}}(0, \sigma^2)$. \label{step:discrete-gaussian-addition}
\EndIf
\SubState $\mathcal{M} \gets \mathcal{M} \cup (\{\ell\} \times \{k\} \times $ SplitAndMixEncoder$(u_{\ell}, p, m))$.
\EndFor
\State \Return $\mathcal{M}$
\EndProcedure
\end{algorithmic}
\end{algorithm}

\begin{algorithm}[h!]
\caption{Decoder in Shuffle DP Protocol for Bucketized Vector Summation.}
\begin{algorithmic}[1]\label{alg:shuffle_buck_vec_sum_decoder}
\Procedure{ShuffleBVSDecoder$((\ell_1, k_1, z_1), \dots, (\ell_t, k_t, z_t); Z, p, y)$}{}
\For{$j = 1, \dots, d$}
\SubState $v_j \gets \sum_{i \in \{1, \dots, t\}: \ell_{i} = Z_{1, y}, k_i = j} z_{i}$.
\SubIf $v_j > p/2:$ then $\overline{v}_j \gets (v_j-p) \eta$.
\SubElse  $~\overline{v}_j \gets v_j \eta$.
\EndIf
\EndFor
\State \Return $(v_1, \dots, v_d)$.
\EndProcedure

\end{algorithmic}
\end{algorithm}

We end by remarking that in~\Cref{alg:shuffle_buck_vec_sum_encoder} we let only the first user add the discrete Gaussian noise (Line~\ref{step:discrete-gaussian-addition}). This is only for simplicity of analysis; it is possible to split the noise between the users and achieve the same asymptotic guarantees (see~\citet{KLS21}).

\subsection{Proof of \Cref{thm:main-apx-shuffle}}

\begin{proof}[Proof of \Cref{thm:main-apx-shuffle}]
Let $\beta = 0.1$. From \Cref{thm:hist-local}, there is an $(\eta, 0.1\beta / N_T)$-accurate $0.5\eps$-DP algorithm for generalized histogram in the shuffle model with
\[
\eta = O\left(\poly\log\left(\frac{|\cL_1 \cup \cdots \cup \cL_T| \cdot T \cdot N_T}{\delta \beta}\right)/\eps\right).
\]
Since we set $T = O(\log n)$ (in \Cref{thm:net-tree-main-guarantee}), $N_T = k^{O_\alpha(1)} \cdot \poly\log n$ by our choice of parameters, and since $|\cL_1 \cup \cdots \cup \cL_T| \leq \exp(O(Td'))$ by a volume argument, we get $\eta = O(\poly\log(nd/\delta) / \eps)$.

Since we set $T = O(\log n)$ (in \Cref{thm:net-tree-main-guarantee}) and $N_T = k^{O_\alpha(1)} \cdot \poly\log n$ from our choice of parameters, we get $\eta = O(\sqrt{d} \cdot  \poly\log(nd/\delta) / \eps)$.

Similarly, from \Cref{lem:vec-sum-local}, there is an $(\teta, 0.1\beta / N_T)$-accurate $0.5\eps$-DP algorithm for generalized histogram with
\[
\teta = O\left( \frac{T \sqrt{d}}{\epsilon} \cdot \poly\log(dT/(\delta \beta))\right),
\]
which as before yields $\teta = O(\sqrt{d} \cdot  \poly\log(nd/\delta) / \eps)$.  Plugging this into \Cref{thm:apx-main}, we indeed arrive at a one-round $(\eps, \delta)$-shuffle DP $(\kappa (1 + \alpha), k^{O_{\alpha}(1)} \cdot \sqrt{d} \cdot \polylog(nd/\delta) / \eps)$-approximation algorithm for \kmeans (with failure probability $0.1$).  It is easy to verify that the encoder and the decoder run in time $\poly(n, d, k^{O_\alpha(1)}, \log(1/\delta))$.
\end{proof}

\section{Additional Experiment Details}
\label{app:exp}


\textbf{Parameter Settings.} Our experiments show that bucketized vector summation oracles often contribute to the final objective more than that of the histogram oracle; thus, we allocate more privacy budget to the former compared to the latter ($0.9\eps$ and $0.1\eps$ respectively). We view the number of levels $T$ and thresholds $\tau_1, \dots, \tau_T$ to be hyperparameters and roughly tune them. In the end, we find that $T = \lceil \log_2(k) \rceil + 3$ and only branching when the approximate frequency is at least $1.5 \lfloor n / k \rfloor$ 
give reasonably competitive objectives with little amount of tuning, and the results reported below are for these parameters. Another heuristic we find to be helpful is to split the dataset, instead of the privacy budget $\eps$, over the number of levels of the tree, i.e., randomly split the data into $T$ partitions with only the users in the $i$th partition contributing to the frequency oracle at level-$i$ of the tree. 

\paragraph{Effects of Separation of Gaussians.}
Recall that we use $r$ to denote the ratio between the separation of a of each pair of centers divided by its expected cluster size. While our plots in the main body use $r = 100$, we remark that such a large ratio is often unnecessary. Specifically, when $k = 8$, we often observe that $r \geq 8$ already gives essentially as good a result as $r = 100$; this is presented in~\Cref{fig:exp_supplement}.

\begin{figure*}[htb]
\centering
\includegraphics[width=0.4\textwidth]{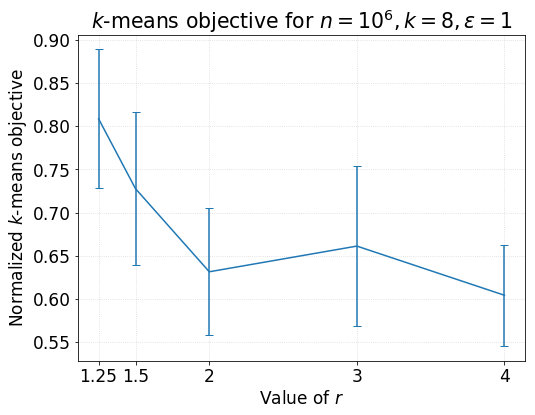}
\includegraphics[width=0.4\textwidth]{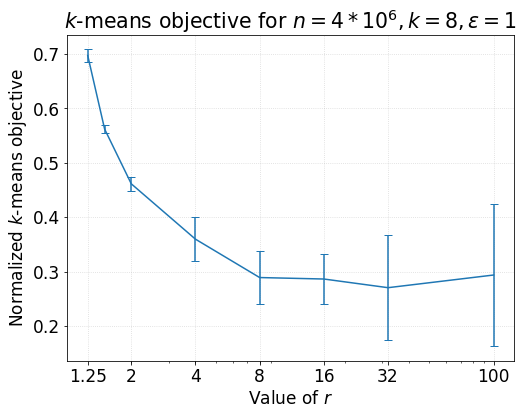}
\caption{Normalized \kmeans objective of the output clusters for varying $r$, the separation parameter of the synthetic dataset. Each set of parameters is run 10 times; the average and the standard deviation of the normalized \kmeans objectives are included.}
\label{fig:exp_supplement}
\end{figure*}


\end{document}